\newcommand{\blind}{1}
\newcommand{\Mean}{{\mathbb{E}}}
\newcommand{\Cov}{{\mbox{Cov}}}
\newcommand{\diag}{{\mbox{diag}}}
\newcommand{\prob}{{\mathbb{P}}}
\DeclareMathOperator*{\argmin}{arg\,min}
\newcommand{\vc}{\textrm{vec}}
\newtheorem{thm}{Theorem}
\newtheorem{prop}{Proposition}
\newtheorem{lemma}{Lemma}
\newtheorem{asmp}{Assumption}
\newcommand{\change}[1]{{\leavevmode\color{blue}{#1}}}
\newcommand{\RNum}[1]{\uppercase\expandafter{\romannumeral #1\relax}}
\newcommand{\QDE}{\textrm{CQDE}}
\newcommand{\QIE}{\textrm{CQIE}}
\newcommand{\QTE}{\textrm{CQTE}}
\newcommand{\R}{\mathbb{R}}
\newcommand{\M}{\mathbb{M}}
\def\R{\mathbb{R}}
\begin{document}

\def\spacingset#1{\renewcommand{\baselinestretch}%
{#1}\small\normalsize} \spacingset{1}


\if1\blind
{
  \title{\bf Evaluating Dynamic Conditional Quantile Treatment Effects with Applications in Ridesharing}
  \author{Ting Li$^{*1}$, Chengchun Shi$^{*2}$, Zhaohua Lu$^{3}$, Yi Li$^{3}$,  and Hongtu Zhu$^4$ \thanks{
			The first two authors contribute equally to this paper. 
			Address for correspondence:
			Hongtu Zhu, Ph.D., E-mail: 
			htzhu@email.unc.edu. This work was partially finished when Dr  Zhu worked at Didi Chuxing.    
			The content is solely the responsibility
			of the authors and does not necessarily represent the official views
			of   any other funding agency.} 
 \hspace{.2cm} \\
	\textit{$^{1}$Shanghai University of Finance and Economics} \\
	\textit{$^{2}$London School of Economics and Political Science}  \\
	\textit{$^{3}$DiDi Chuxing}\\
	\textit{$^{4}$University of North Carolina at Chapel Hill} }
    \date{}
  \maketitle
} \fi

\if0\blind
{
  \bigskip
  \bigskip
  \bigskip
  \begin{center}
    {\LARGE\bf  Evaluating Dynamic Conditional Quantile Treatment Effects with Applications in Ridesharing }
\end{center}
  \medskip
} \fi

\bigskip
\begin{abstract}
Many modern tech companies, such as Google, Uber, and Didi, utilize online experiments (also known as A/B testing) to evaluate new policies against existing ones. While most studies concentrate on average treatment effects, situations with skewed and heavy-tailed outcome distributions may benefit from alternative criteria, such as quantiles. However, assessing dynamic quantile treatment effects (QTE) remains a challenge, particularly when dealing with data from ride-sourcing platforms that involve sequential decision-making across time and space.
In this paper, we establish a formal framework to calculate QTE conditional on characteristics independent of the treatment. Under specific model assumptions, we demonstrate that the dynamic conditional QTE (CQTE) equals the sum of individual CQTEs across time, even though the conditional quantile of cumulative rewards may not necessarily equate to the sum of conditional quantiles of individual rewards. This crucial insight significantly streamlines the estimation and inference processes for our target causal estimand. 
We then introduce two varying coefficient decision process (VCDP) models and devise an innovative method to test the dynamic CQTE. Moreover, we expand our approach to accommodate data from spatiotemporal dependent experiments and examine both conditional quantile direct and indirect effects. To showcase the practical utility of our method, we apply it to three real-world datasets from a ride-sourcing platform. Theoretical findings and comprehensive simulation studies further substantiate our proposal.
\end{abstract}

\noindent%
{\it Keywords:}  A/B testing, policy evaluation, quantile treatment effect, ridesourcing platform, spatialtemporal experiments, varying coefficient models.
\vfill

\newpage
\spacingset{1.5} 

\section{Introduction}

Online experiments, often referred to as A/B testing in computer science literature, are widely utilized by technology companies (e.g., Google, Netflix, Microsoft) to assess the effectiveness of new products or policies in comparison to existing ones. These companies have developed in-house A/B testing platforms for evaluating treatment effects and providing valuable experimental insights. 
Take ridesourcing platforms like Uber, Lyft, and Didi as examples. These platforms operate within intricate spatiotemporal ecosystems, dynamically matching passengers with drivers \citep[see, for instance,][]{WangYang2019,Qin2020,zhou2021graph}. They implement online experiments to explore various order dispatch policies and customer recommendation initiatives. These innovative products hold the potential to enhance passenger engagement and satisfaction, diminish pickup waiting times, and boost driver earnings, ultimately leading to a more efficient and user-friendly transportation system.

In this study, we address the fundamental question of how to evaluate the difference between the \textit{quantile} return of a new product (treatment) and that of an existing one. Although the average treatment effect (ATE) is widely used in the literature to quantify the difference between two policies \citep[][]{imbens2015causal,wang2018bounded,kong2022identifiability}, it only considers the average effect and does not account for variability around the expectation. In applications with skewed and heavy-tailed outcome distributions, decision-makers are more interested in the quantile treatment effect (QTE), which offers a more comprehensive characterization of distributional effects beyond the mean and is robust to heavy-tailed errors \citep[see e.g., ][]{abadie2002instrumental,chernozhukov2006instrumental,chen2019causal}.
For example, in ridesourcing platforms, policymakers may want to determine which policy more effectively raises the lower tail of driver income. Furthermore, developing valid inferential tools for QTE can reveal how treatment effects differ by quantile and provide valuable information about the entire distribution.



Addressing the problem mentioned earlier presents two significant challenges. The first challenge involves efficiently inferring the dynamic QTE (quantile treatment effect), which is defined as the difference between the quantiles of cumulative outcomes under the new and old policies, in long horizon settings with weak signals.
In contrast to single-stage decision-making, policy makers for ridesourcing platforms assign treatments sequentially over time and across various locations. Existing estimators, such as those based on (augmented) inverse probability weighting   \citep[see e.g.,][Section 4]{wang2018quantile}, are subject to the curse of horizon, as described by  \citep{liu2018breaking}.  
  This means their variances increase exponentially with respect to the horizon (i.e., the number of decision stages).   
  Such approaches are inadequate in our context, where the horizon typically spans 24 or 48 stages and most policies improve key metrics by only 0.5\% to 2\% \citep{qin2022reinforcement}. 
  Furthermore, unlike the average cumulative outcome, which can be broken down into the sum of individual outcome expectations, the quantile of cumulative outcomes generally does not equal the sum of individual quantiles. This makes estimating our causal effect extremely challenging. Existing efficient evaluation methods designed for mean return, such as those proposed by \citet{kallus2022efficiently} and \citet{liao2020batch}, cannot be easily adapted to our situation.

The second challenge arises from handling the interference effect caused by temporal and spatial proximities in spatiotemporal dependent experiments. This interference effect results in a treatment applied at one location influencing not only its own outcome, but also the outcomes at other locations. Additionally, the current treatment is likely to affect both present and future outcomes. Neglecting these effects would produce a biased QTE estimator. As far as we are aware, there is no existing test capable of concurrently addressing both challenges.

\subsection{Related work}

A/B testing has been extensively researched in the literature, as evidenced by the works of \cite{yang2017framework} and \cite{zhou2020cluster}, among other references. In contrast to most existing A/B testing methods that focus on the Average Treatment Effect (ATE), Quantile Treatment Effects (QTE) have received less attention. Among the few available studies, \cite{liu2019large} proposed a scalable method to test QTE and construct associated confidence intervals. Moreover, \cite{wang2021conq} developed a nonparametric method to estimate QTEs at a continuous range of quantile locations, including  point-wise confidence intervals. 
More broadly, the estimation and inference of (conditional) QTEs have been considered in the causal inference literature, as seen in the works of \cite{chernozhukov2006instrumental}, \cite{firpo2007efficient}, and \cite{blanco2020bounds}. However, these methods predominantly address single-stage decision-making. To the best of our knowledge, this paper represents the first attempt to explore QTE in temporally and/or spatially dependent experiments.

Our paper is closely related to the rapidly expanding body of literature on off-policy evaluation in sequential decision-making. The majority of existing studies primarily concentrate on inferring the expected return under a fixed target policy or a data-dependent estimated optimal policy \citep{zhang2013robust,  shi2020breaking, kallus2022efficiently}. 
In recent years, several papers have explored policy evaluation beyond averages \citep{wang2018quantile,kallus2019localized,qi2022robustness,xu2022quantile}. These works propose using (augmented) inverse probability weighted estimators to evaluate specific robust metrics under a given target policy. As noted previously, these methods are subject to the curse of horizon and become less effective in long-horizon settings. Most notably, policy evaluation in spatiotemporal dependent experiments remains unexplored in the aforementioned studies.



Recent proposals have investigated causal inference with temporal or spatial interference, including studies by  \cite{Savje2020} and 
\cite{hu2021average}, among others. However, these methods primarily focus on the average effect.
Furthermore, our paper is closely related to the literature on distributional reinforcement learning \citep[see e.g.,][]{zhou2020non}.  Despite this connection, these studies primarily concentrate on the policy learning problem, and uncertainty quantification of a target policy's quantile value remains unexplored.

Lastly, our paper is connected to a line of research on quantitative analysis of ridesharing across various fields such as economics, operations research, statistics, and computer science \citep[see e.g.,][]{shi2022multi,zhao2022dynamic}. Nevertheless, quantile policy evaluation has not been examined in these papers.

\subsection{Contributions}

Our proposal offers three valuable contributions to existing literature. 
First, we present a framework for deducing dynamic conditional Quantile Treatment Effects (QTE), defined as dynamic QTE dependent on market features, irrespective of treatment history. While unconditional QTE may be of interest, as previously noted, it assumes a highly complex form in long horizon settings and is extremely challenging to identify when the signal is weak. In contrast, we demonstrate that under certain modeling assumptions, the proposed dynamic conditional QTE (CQTE) is equal to the sum of individual CQTE at each spatiotemporal unit, even though the conditional quantile of cumulative rewards does not necessarily equate to the sum of conditional quantiles of individual rewards. This finding significantly streamlines the estimation and inference processes for our causal estimand, making our proposal easily implementable in practice. Additionally, the estimated CQTE can exhibit a smaller variance compared to that of the unconditional counterpart.

Second, we introduce an innovative framework to test dynamic CQTE while accounting for the interference effect. We propose two Varying Coefficient Decision Process (VCDP) models, enabling the application of classical quantile regression \citep{koenker2001quantile}  for parameter estimation and subsequent inference. We then develop a two-step method for estimating CQTE, along with a bootstrap-assisted procedure for testing CQTE. We further extend our proposal to analyze spatiotemporally dependent data and to test Conditional Quantile Direct Effects (CQDE) and Conditional Quantile Indirect Effects (CQIE).

Third, we thoroughly examine the theoretical and finite sample properties of our methods. Theoretically, we prove the consistency of our proposed test procedure, allowing the horizon to diverge with the sample size. Notably, classical weak convergence theorems \citep{van1996weak}   necessitate a fixed horizon and are not directly applicable. Empirically, we apply our proposed method to real datasets obtained from a leading ridesourcing platform to assess the dynamic quantile treatment effects of new policies. 

\subsection{Organization of the paper}

The paper's structure is as follows: Section \ref{sec:data} describes data from online randomized experiments. Section \ref{sec:temporal design} covers temporally dependent experiments, the proposed model, and estimation and inference procedures. Section \ref{sec:ST method} extends the proposal to spatiotemporally dependent data. Section \ref{sec:DEIE} decomposes CQTE into CQDE and CQIE. Section \ref{sec:asymptotic results} presents the proposed CQTE test's asymptotic results. Section \ref{sec:real data} evaluates ridesourcing dispatching and repositioning policies, and Section \ref{sec:simulation} assesses our methods' finite sample performance using real-data-based simulations. 

\section{Data Description}\label{sec:data}

The purpose of this paper is to analyze three real datasets collected from Didi Chuxing, one of the world's leading ride-sharing companies. One dataset was collected during a time-dependent A/B experiment conducted in a city from December 10, 2021 to December 23, 2021. The goal of this experiment was to evaluate the performance of a newly designed order dispatching policy, which aimed to increase the number of fulfilled ride requests and boost drivers' total revenue. To protect privacy, we will not disclose the city name and the specific policy used. 
During the experiment, each day was divided into 24 equally spaced non-overlapping time intervals. The new policy (B) and the old policy (A) were alternated and assigned to these intervals every day. On the first day, we used an alternating sequence of AB$\ldots$AB, and on the second day, we used BA$\ldots$BA. Thereafter, we switched between A and B every two days, ensuring that each policy was used with equal probability at each time interval, meeting the positivity assumption. For more details, see Section \ref{sec:CQTE}. It is worth noting that such an alternating-time-interval design is commonly used in industries to reduce the variance of treatment effect estimators, as discussed in Section 6.2 of \citet{shi2022dynamic}. For further information, please refer to the article by Lyft on experimentation in a ride-sharing marketplace at \url{https://eng.lyft.com/experimentation-in-a-ridesharing-marketplace-b39db027a66e}.

The second dataset comes from a spatiotemporal-dependent experiment conducted in another city between February 19, 2020, and March 13, 2020. Each day is divided into 48 non-overlapping, equal time intervals, and the city is partitioned into 12 distinct, non-overlapping regions. On the first day of the experiment, the initial policy in each region is independently set to either the new or old policy with a $50\%$ probability. The temporal alternation design for time-dependent experiments is then applied in each region.

In addition to the two datasets from A/B experiments mentioned earlier, we also analyze a third dataset collected from an A/A experiment. In this case, the two policies being compared are identical, and the treatment effect is zero. The experiment took place in a specific city from July 13, 2021 to September 17, 2021. This analysis serves as a sanity check to examine the size property of the proposed test. We expect that our test will not reject the null hypothesis when applied to this dataset, as the true effect is zero.

The ridesharing system dynamically connects passengers and drivers in real-time. All three datasets include the number of call orders and the total online time of drivers for each time interval. These metrics represent the supply and demand in this two-sided market. The platform's outcomes include the drivers' total income, the answer rate (the number of call orders responded to), and the completion rate (the number of call orders completed) for each time interval. In our study, we are interested in determining whether the new policy improves drivers' total income at various quantile levels. 


The datasets exhibit four distinct characteristics. First, the horizon duration is typically much longer (e.g., 24 or 48) than the experiment duration, while the treatment effect is usually weak (e.g., 0.5\%-2\%). Second, both supply and demand are spatiotemporal networks that interact across time and location, as observed in panels (a) and (b) of Figure \ref{fig:AB_temporal}, which display drivers' online time and the number of call orders. Third, the outcome of interest follows a non-normal and heavy-tailed distribution, illustrated in panels (c) and (d) of Figure \ref{fig:AB_temporal}. Finally, there are interference effects over time and space, demonstrated in Figure \ref{fig:interference}, with temporal interference effects occurring when past actions impact future outcomes.


\begin{figure}[!t]
	\centering
\subfigure[]{  \includegraphics[height=3.5cm, width=3.6cm]{./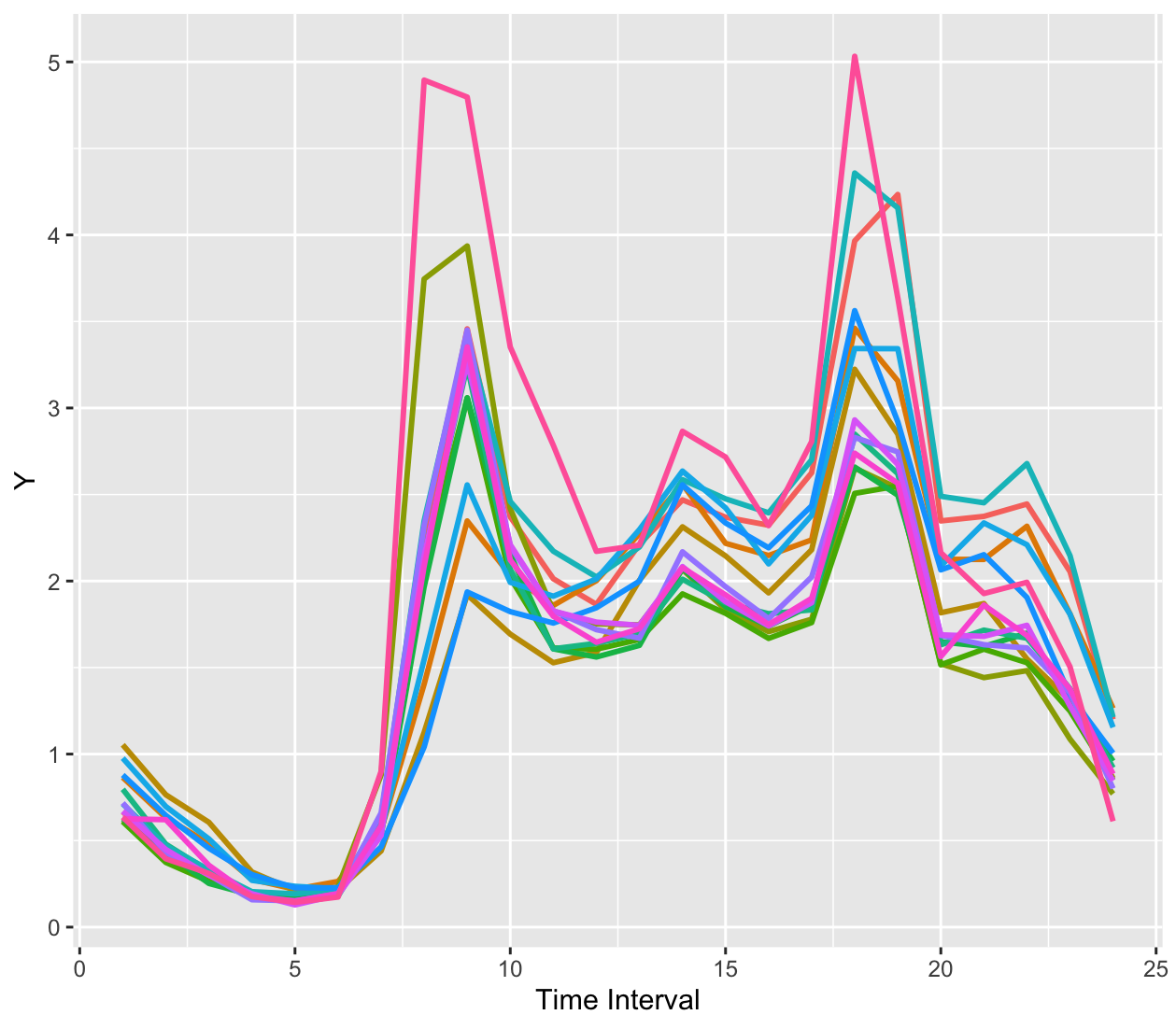} }
\subfigure[]{   \includegraphics[height=3.5cm, width=3.6cm]{./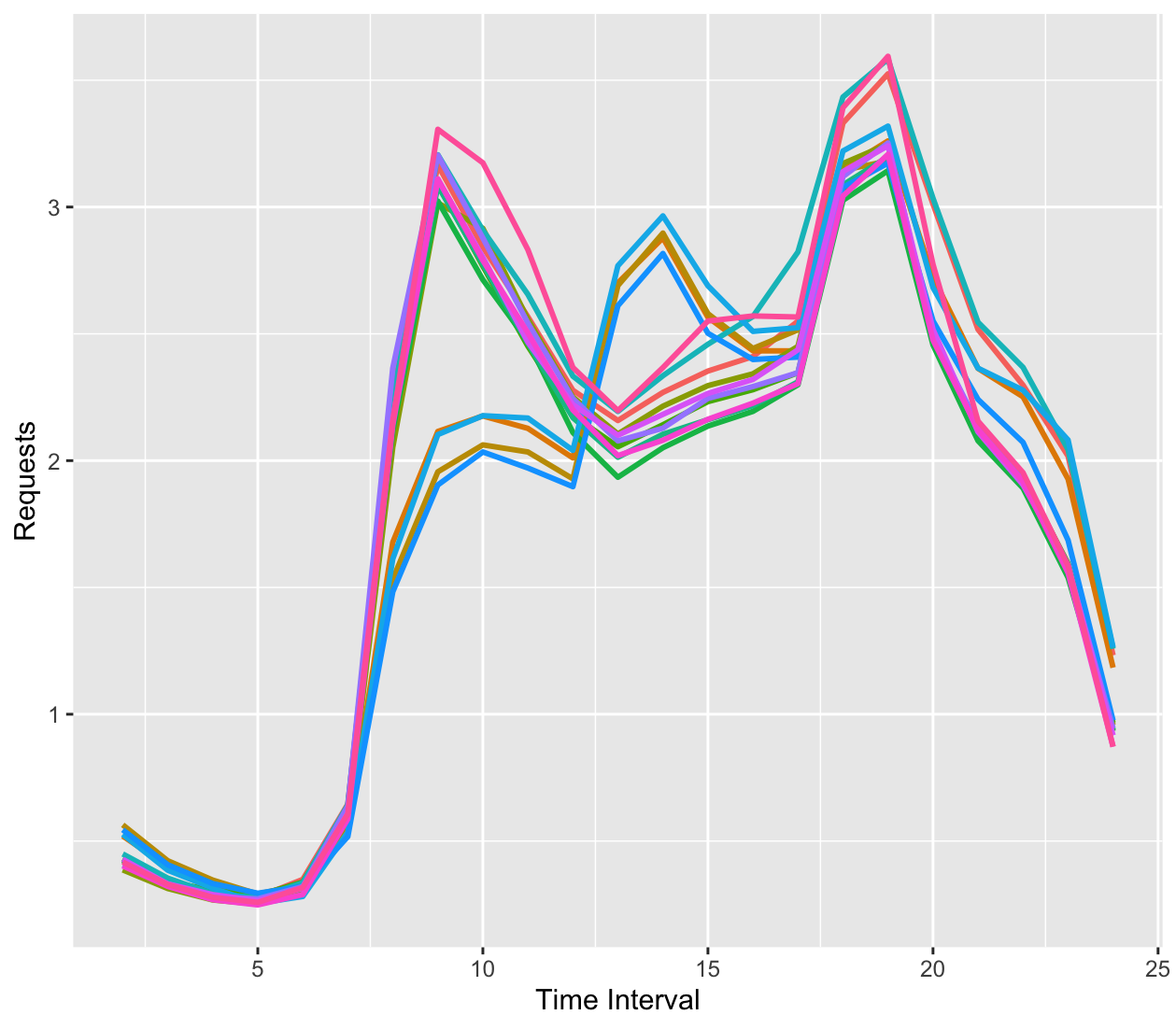} }
\subfigure[]{   \includegraphics[height=3.5cm, width=3.6cm]{./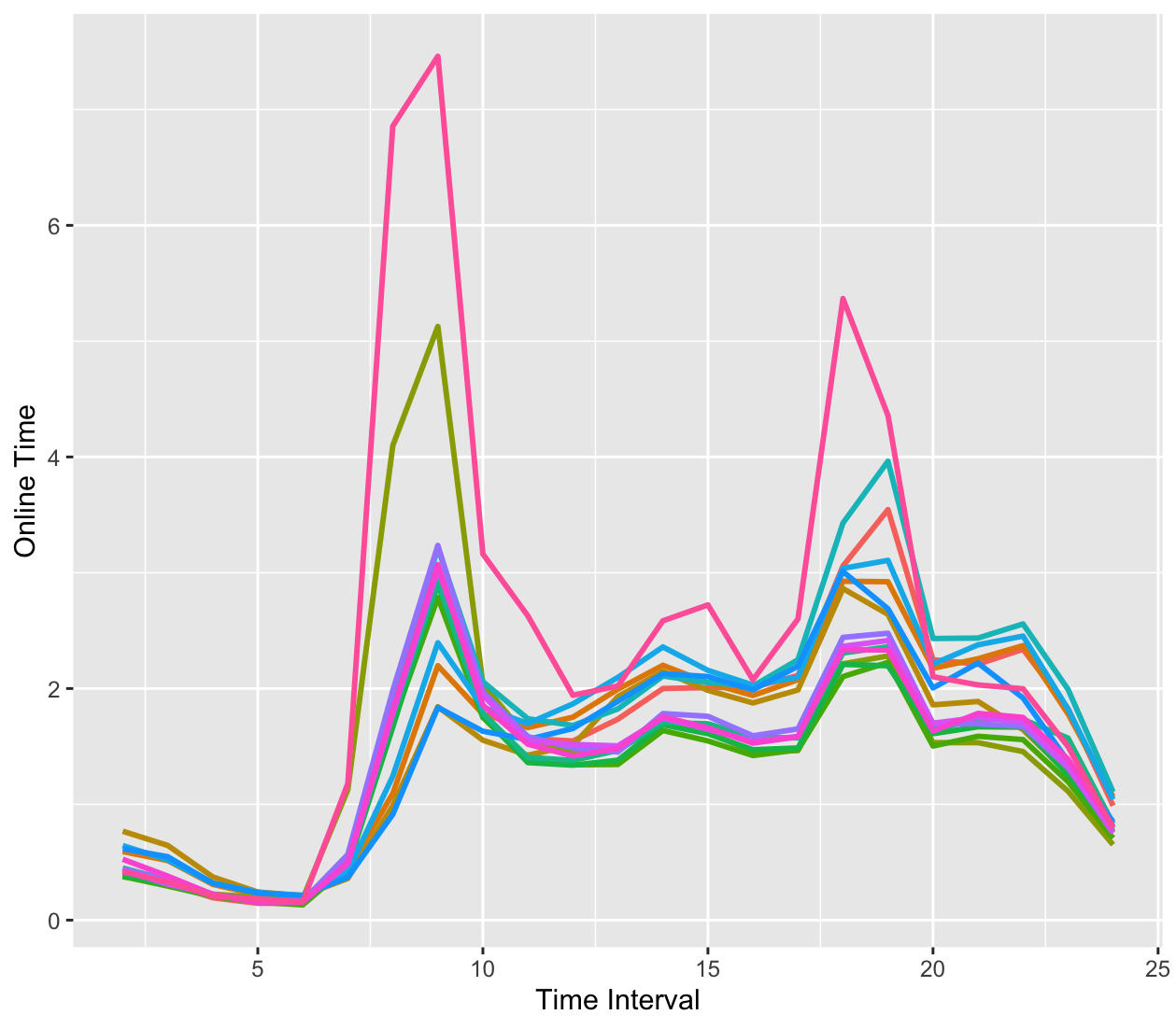} }
\subfigure[]{   \includegraphics[height=3.5cm, width=3.6cm]{./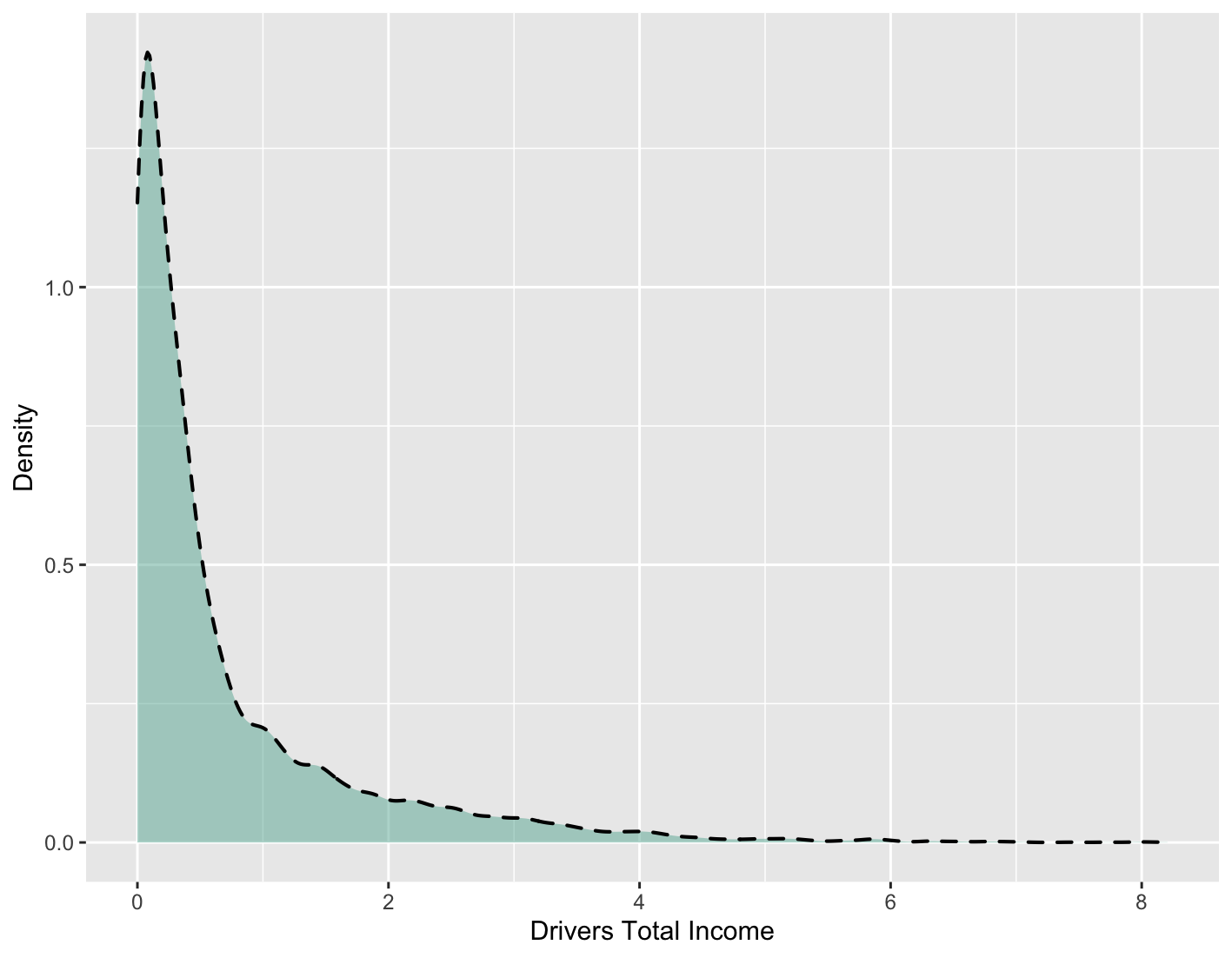} }
	\caption{\small Scaled drivers' total income (a), request (b) and drivers' online time (c)
		in the temporal dependent A/B experiment and the estimated density of drivers' total income (d) in the spatial temporal dependent A/B experiment.}
	\label{fig:AB_temporal}
\end{figure}

\begin{figure}[!t]
	\centering
	\includegraphics[height=4.5cm, width=15cm]{./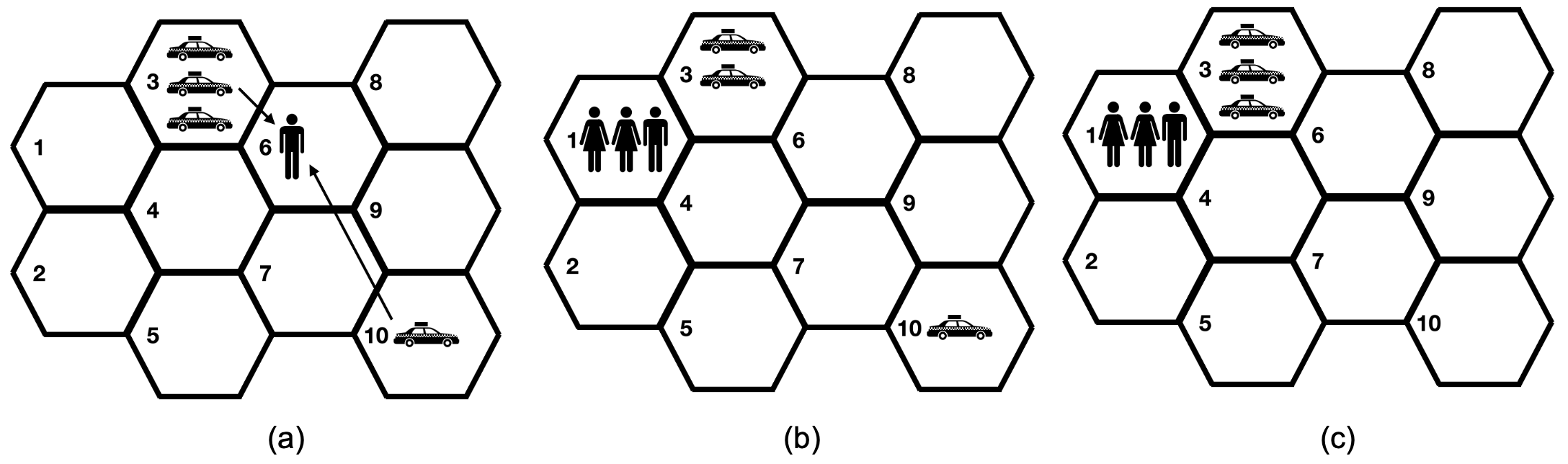}
	\caption{\small  This example illustrates the temporal interference effect in ridesharing, where assigning different drivers to pick up a passenger significantly impacts future ride requests. (a) A city with 10 regions has a passenger in region 6 needing a ride, with three drivers in region 3 and one in region 10. Two actions are possible: assigning a driver from region 3 or region 10. (b) Assigning a driver from region 3 might result in an unmatched future request due to the driver in region 10 being too far from region 1. (c) Assigning the driver in region 10 could lead to all future ride requests being matched, preserving all three drivers in region 3.  }
	
	\label{fig:interference}
\end{figure}

We focus on answering three key questions in these datasets:


(Q1) How can we quantify treatment effects across various quantile levels for the time-dependent A/B experiment data in order to gain a comprehensive understanding of the new policy's effects within the city?


(Q2) How to evaluate the quantile treatment effects  for the above spatiatemporal dependent experiment data?

(Q3) How to determine whether or not to replace the old policy with the new one?

\noindent 
These questions  drive the methodological development outlined in Sections 3 and 4.

\section{Testing CQTE in temporal dependent experiments}
\label{sec:temporal design}
In this section, we explicitly state the test hypotheses for our first research question (Q1) and explore the primary challenge encountered in experiments exhibiting temporal dependence. Subsequently, we detail the key technical assumptions that enable the cumulative quantile treatment effect (CQTE) to be equivalent to the sum of individual CQTEs. Finally, to address the third research question (Q3), we present the proposed estimation and testing strategies for our investigation. 

\subsection{CQTE, test hypotheses and assumptions}\label{sec:CQTE}

We consider the temporal alternation design with a sequence of treatments over time. Specifically, we divide each day into $m$   non-overlapping intervals. 
The platform can implement either one of the two policies at each time interval. For any $t\ge 1$, let $A_{t}$ denote the policy implemented at the $t$th time interval where $A_t = 1$ represents exposure to the new policy
and $A_t = 0$ represents exposure to the old policy. Let $S_t$ and $Y_t$ denote the state (e.g., the supply and demand) and the outcome at time $t$, respectively.

To formulate our problem, we adopt a potential outcome framework \citep{rubin2005causal}. Specifically, we define $\bar{a}_t=(a_1,\ldots,a_{t})^\top\in \{0,1\}^{t}$ as the treatment history up to time $t$. We also define $S_{t}^*(\bar{a}_{t-1})$ and $Y_{t}^*(\bar{a}_{t})$ as the counterfactual state and counterfactual outcome, respectively, that would have occurred had the platform followed the treatment history $\bar{a}_{t}$. 
Our primary interest lies in quantifying the difference between the $\tau$th quantile of the cumulative outcomes under the new policy and that under the old policy, denoted as the quantile treatment effect (QTE):
\begin{eqnarray*}
\textrm{QTE}=Q_\tau \Big(\sum_{t=1}^m Y_t^*(\bm{1}_{t})\Big)  -  Q_\tau\Big(\sum_{t=1}^m Y_t^*( \bm{0}_{t} )\Big), 
\end{eqnarray*}
where $\bm{1}_{t}$ and $\bm{0}_{t}$ are vectors of 1s and 0s of length $t$, respectively, and $Q_{\tau}
(\cdot)$ denotes the quantile function at the $\tau$th level.

However, learning such an unconditional dynamic QTE from our experimental dataset is highly challenging. Remember that in our A/B experiment, the old and new policies are assigned alternately over the $m$ time intervals. Nevertheless, the target policy we aim to evaluate corresponds to the global policy, which allocates the new or old policy globally throughout each day. This leads to an off-policy setting where the target policy differs from the behavior policy that generates the data. 
Existing off-policy quantile evaluation methods based on inverse probability weighting, such as those presented by \cite{wang2018quantile}, are inefficient in our setting with a moderately large $m$. Off-policy evaluation (OPE) methods, including  \citet{shi2020statistical}, \citet{liao2021off}, and \citet{kallus2022efficiently}, are semiparametrically efficient\footnote{\citet{shi2020statistical} and \citet{liao2021off} proposed to use the direct method based on linear sieves or kernels. However, the resulting estimators are semiparametrically efficient as well.} in long-horizon settings. Despite this, these methods primarily focus on the mean return, making it difficult to adapt them for quantile evaluation due to the nonlinear quantile function $Q_{\tau}$.
To illustrate, note that there is no guarantee that $Q_\tau  (\sum_{t=1}^m Y_t^*(\bm{1}_{t}))  -  Q_\tau(\sum_{t=1}^m Y_t^*( \bm{0}_{t} ))=\sum_{t=1}^m\Big[ Q_\tau (Y_t^*(\bm{1}_{t}))  -  Q_\tau(Y_t^*( \bm{0}_{t} ))\Big]$  in general. This observation motivates us to seek an alternative definition for QTE.


Second, let $\mathcal{E}_t$ represent the set of features (e.g., extreme weather events) that have an impact on the outcomes up to time $t$, but are not influenced by the treatment history. This means that for any treatment history $\bar{a}_t$, the potential outcome of these features remains the same, i.e., $\mathcal{E}_t^*(\bar{a}_t)=\mathcal{E}_t$. By definition, $S_1$ is an element of $\mathcal{E}_t$, which ensures that $\mathcal{E}_t$ is non-empty for any $t$. 
We introduce  CQTE as follows:
\begin{eqnarray}\label{eqn:QTE}
\textrm{CQTE}_\tau=  Q_\tau  (  \sum_{t=1}^m Y_t^*( \bm{1}_{t} ) |\mathcal{E}_m )  -  Q_\tau  ( \sum_{t=1}^m Y_t^*( \bm{0}_{t} )  |\mathcal{E}_m  ) .		
\end{eqnarray}
The CQTE is a reasonable measure because the set of conditioning variables remains consistent under both new and old policies. When $m=1$, this definition reduces to the one used in single-stage decision-making, as discussed in previous literature, such as \cite{chernozhukov2006instrumental}.



Conditioning has several benefits. Firstly, it offers a more convenient way to estimate the dynamic Quantile Treatment Effect (QTE) by aggregating individual QTEs over time. This approach simplifies the estimation process and reduces computational complexity. Secondly, conditioning can also help to reduce the variance of the resulting QTE estimator by removing the need to account for variability in the relevant characteristics. By conditioning on certain variables, researchers can effectively control for confounding factors and produce more accurate estimates of treatment effects. 

Third, we introduce the concept of Summed Conditional Quantile Treatment Effects (SCQTE), which represents the sum of individual Conditional Quantile Treatment Effects (CQTE) over time. The SCQTE is defined as follows:
\begin{eqnarray*}
\textrm{SCQTE}_\tau=  \sum_{t=1}^m Q_\tau (   Y_t^*( \bm{1}_{t}  ) |\mathcal{E}_t )  - \sum_{t=1}^m  Q_\tau ( Y_t^*( \bm{0}_{t} ) |\mathcal{E}_t ).		
\end{eqnarray*}
Compared to CQTE, SCQTE is easier to learn from observed data. For example, one can fit a quantile regression model at each stage, estimate individual CQTE values, and then sum these estimators together. 
Although the quantile function is not additive, we demonstrate in the following proposition that SCQTE is equal to CQTE under specific modeling assumptions.

\begin{prop}\label{prop0}
Suppose that for any time point $t$, $Y_t^*(\bar{a}_t)$ follows the structural quantile model $ Y_t^*(\bar{a}_t) = \phi_{t} ( \mathcal{E}_t, \bar{a}_t, U )$ for a specific deterministic function $\phi_t$ and a uniformly distributed random variable $U \stackrel{d}{\sim} \textrm{Unif}(0,1)$, which is independent of $\{\mathcal{E}_t\}_t$. Furthermore, assume that $\phi_{t}(\mathcal{E}_t , \bm{1}_t, \tau)$ and $\phi_{t}(\mathcal{E}_t, \bm{0}_t, \tau)$ are strictly increasing functions of $\tau$ for any $\mathcal{E}_t$. Under these conditions, we find that $\textrm{CQTE}_\tau = \textrm{SCQTE}_\tau$. 
\end{prop}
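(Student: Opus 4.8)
The plan is to exploit the single most important feature of the structural model: \emph{every} potential outcome $Y_t^*(\bar a_t)=\phi_t(\mathcal{E}_t,\bar a_t,U)$ is driven by the \emph{same} uniform variable $U$. This comonotonicity makes the cumulative outcome a monotone transformation of $U$, which is exactly what lets the quantile of the sum decompose into a sum of quantiles, even though quantiles are not additive in general. The argument reduces to three short steps plus a check on the conditioning sets.

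First I would record the quantile-equivariance fact: for a strictly increasing map $h$ and $U\sim\textrm{Unif}(0,1)$, one has $Q_\tau\big(h(U)\big)=h\big(Q_\tau(U)\big)=h(\tau)$, since $Q_\tau(U)=\tau$. Second, I apply this at a single stage. Because $\mathcal{E}_t$ is unaffected by treatment and $U\independent\{\mathcal{E}_t\}_t$, conditioning on $\mathcal{E}_t$ leaves $U\sim\textrm{Unif}(0,1)$ and fixes the map $u\mapsto\phi_t(\mathcal{E}_t,\bar a_t,u)$, so for $\bar a_t\in\{\bm{1}_t,\bm{0}_t\}$,
\[
Q_\tau\big(Y_t^*(\bar a_t)\mid\mathcal{E}_t\big)=\phi_t(\mathcal{E}_t,\bar a_t,\tau).
\]

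Third, I treat the cumulative outcome. Conditioning on $\mathcal{E}_m$ determines every $\mathcal{E}_t$ for $t\le m$ (as $\mathcal{E}_t$ is the feature information accumulated up to $t$, hence contained in $\mathcal{E}_m$), so $g(u):=\sum_{t=1}^m\phi_t(\mathcal{E}_t,\bm{1}_t,u)$ is a fixed function of $u$; being a finite sum of strictly increasing functions, $g$ is itself strictly increasing. Since $\sum_{t=1}^m Y_t^*(\bm{1}_t)=g(U)$ and $U\mid\mathcal{E}_m\sim\textrm{Unif}(0,1)$, the equivariance fact and the second step give
\[
Q_\tau\Big(\sum_{t=1}^m Y_t^*(\bm{1}_t)\,\Big|\,\mathcal{E}_m\Big)=g(\tau)=\sum_{t=1}^m\phi_t(\mathcal{E}_t,\bm{1}_t,\tau)=\sum_{t=1}^m Q_\tau\big(Y_t^*(\bm{1}_t)\mid\mathcal{E}_t\big).
\]
The same computation with $\bm{0}_t$ in place of $\bm{1}_t$ holds verbatim. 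Subtracting the two identities matches the definition of $\textrm{CQTE}_\tau$ on the left and of $\textrm{SCQTE}_\tau$ on the right, yielding $\textrm{CQTE}_\tau=\textrm{SCQTE}_\tau$.

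The crux is this third step, and it is conceptual rather than computational: the conclusion rests entirely on comonotonicity, since the shared $U$ forces all stagewise outcomes to move in lockstep. The only technical care needed is the equivariance identity when $\phi_t$ could be flat or have jumps in its last argument; the assumed strict monotonicity of $\phi_t(\mathcal{E}_t,\cdot,\tau)$ in $\tau$ (optionally with left-continuity) rules this out, guarantees $g$ is strictly increasing, and makes $g(\tau)$ unambiguously the $\tau$th conditional quantile. I would also flag the mild bookkeeping point that $\textrm{SCQTE}_\tau$ conditions each stage on $\mathcal{E}_t$ whereas $\textrm{CQTE}_\tau$ conditions on $\mathcal{E}_m$; this is reconciled because, once $\mathcal{E}_m$ is fixed, each $\phi_t(\mathcal{E}_t,\cdot,\cdot)$ is already determined, so the nested conditioning does not alter the individual quantiles.
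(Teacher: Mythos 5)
Your proposal is correct and follows essentially the same route as the paper's proof: both exploit the shared rank variable $U$ (comonotonicity), the fact that a sum of strictly increasing functions of $\tau$ is strictly increasing, the independence of $U$ from $\{\mathcal{E}_t\}_t$, and the nesting $\mathcal{E}_1 \subseteq \cdots \subseteq \mathcal{E}_m$ to identify both the stagewise quantiles and the quantile of the cumulative outcome as $\sum_t \phi_t(\mathcal{E}_t,\bar{a}_t,\tau)$. The only cosmetic difference is that you package the key step as a quantile-equivariance lemma $Q_\tau(h(U))=h(\tau)$, whereas the paper verifies the same fact by directly computing $\mathbb{P}\big(\sum_t \phi_t(\mathcal{E}_t,\bar{a}_t,U) \le \sum_t \phi_t(\mathcal{E}_t,\bar{a}_t,\tau) \mid \mathcal{E}_m\big) = \mathbb{P}(U\le\tau) = \tau$.
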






Proposition \ref{prop0} establishes the equivalence between CQTE and SCQTE  and serves as a fundamental building block for our proposal. It allows us to focus on SCQTE, which is a simplified version of CQTE. This simplification greatly facilitates the estimation and inference procedures that follow, which rely on fitting a quantile regression model at each time point to learn the SCQTE. For more details, see Sections \ref{sec:TQVCDP models} and \ref{subsec:estimation_temporal}.
Moreover, the proposed model in Proposition \ref{prop0} is related to the structural quantile model in the quantile regression literature \citep{chernozhukov2005iv,chernozhukov2006instrumental}. These models assume that, conditional on the covariate $X=x$, the potential outcome $Y^*(a)=q(a, x, U)$ for $a=0, 1$ and $U \sim U(0, 1)$, where $q(d, x, \tau)$ is strictly increasing in $\tau$. The uniformly distributed variable $U$ serves as a rank variable that characterizes the heterogeneity of the outcome across different quantile levels. Under the monotonicity constraint, the $\tau$th conditional quantile of $Y^*(a)|X=x$ can be shown to equal $q(a, x, \tau)$. 
Proposition \ref{prop0} motivates us to focus on testing the following hypotheses for each quantile level $\tau$: 
\begin{eqnarray}
\label{hypo:QTE}
H_0: \textrm{CQTE}_\tau \le 0 \quad \textrm{versus} \quad H_1:\textrm{CQTE}_\tau >0.
\end{eqnarray}
These hypotheses test whether the treatment effect at the $\tau$th quantile is non-negative or positive, respectively.

In this study, we utilize the consistency assumption (CA), sequential ignorability assumption (SRA), and positivity assumption (PA) to identify the causal estimand. These assumptions are frequently used in the dynamic treatment regime literature for learning optimal dynamic treatment policies \citep{gill2001causal}. 
The consistency assumption (CA) states that the potential state and outcome, given the observed data history, should align with the actual observed state and outcome. The sequential ignorability assumption (SRA) demands that the action be conditionally independent of all potential variables, given the past data history. In our application, the SRA is inherently satisfied as the policy is assigned according to the alternating-time-interval design, independent of data history. 
The positivity assumption (PA) necessitates that the probability of $\{A_t=1\}$, given the observed data history, must be strictly between zero and one for any $t\ge 1$. Under the alternating-time-interval design, this probability is equal to 0.5, which satisfies the PA automatically. 
It is essential to note that the combination of CA, SRA, and PA enables the consistent estimation of the potential outcome distribution using the observed data.



\subsection{VCDP models}
\label{sec:TQVCDP models}
Suppose that the  experiment is conducted over $n$ consecutive days. 
Let  $(S_{i,j},A_{i,j},Y_{i,j})$ be the state-treatment-outcome triplet measured at the $j$th time interval of the $i$th day for $i=1, \ldots, n$ and $j=1, \ldots, m$. We assume that these triplets are independent across different days, but may be dependent within each day over time.  



We begin by introducing two varying coefficient decision process models, one for the outcome and the other for the state. The first model characterizes the conditional quantile of the outcome and is given by 
\begin{align}
\label{model:TQVCM DE}
Y_{i,t }= \beta_{0 }(t, U_i) + S_{i,t}^\top\beta(t, U_i) + A_{i,t}\gamma(t, U_i) 
= Z_{i,t}^\top \theta(t, U_i), 
\end{align}
where 
$Z_{i,t} = (1, S_{i,t}^\top, A_{i,t})^\top\in \mathbb{R}^{d+2}$, 
  $\theta(t, U_i) = (\beta_{0}(t, U_i), \beta(t, U_i)^\top,  \gamma(t, U_i))^\top\in \mathbb{R}^{d+2}$ is a vector of time-varying coefficients, and $U_i\sim U(0,1)$ is the rank variable. 
Model \eqref{model:TQVCM DE} extends the idea of using rank variables to represent unobserved heterogeneity across different quantiles in a single-stage study to sequential decision making. 




The second model characterizes the conditional mean of the observed state variables and is given by:
\begin{align}
\label{model:TQVCM IE}
S_{i,t+1} =\phi_{0 }(t)+\Phi (t) S_{i,t}+A_{i,t} \Gamma(t) + E_{i}(t+1)
=  \Theta(t)Z_{i,t} + E_{i} (t +  1),
\end{align}
where $\phi_{0}(t)$ and $\Gamma(t)$ are $d$-dimensional vectors, $\Phi(t)$ is a $d\times d$ matrix of autoregressive coefficients, and $\Theta(t) = [\phi_{0}(t) ~~ \Phi(t) ~~ \Gamma(t)]$ is a $d\times (d+2)$ coefficient matrix. The term $E_i(t+1)$ is a random error term whose conditional mean given $Z_{i,t}$ equals zero. In addition, $\{E_i(t)\}_{t}$ are independent over time.
Therefore, the conditional expectation of $S_{i,t+1}$ given $Z_{i,t}$ is:
$
\mathbb{E}( S_{i,t+1 }  | Z_{i,t} ) = \Theta(t)Z_{i,t}.
$

It is worth noting that models \eqref{model:TQVCM DE} and \eqref{model:TQVCM IE} belong to the class of varying-coefficient regression models. The existing literature on this topic mainly focuses on estimating the relationships between scalar predictors and scalar responses \citep{sherwood2016partially}, or between scalar predictors and functional responses \citep{zhang2021high}, or between longitudinal predictors and responses \citep{wang2009quantile}. However, to the best of our knowledge, none of these works have utilized varying-coefficient regression models for policy evaluation in sequential decision making.  

Furthermore, the temporal independence between $E_{i}(t+1)$'s implies that the state vector satisfies the Markov property, i.e., $S_{i,t+1}$ is independent of the past data history given $(S_{i,t}, A_{i,t})$ for any $i$ and $t$. However, the immediate outcomes are conditionally dependent due to the existence of the rank variable $U$. Consequently, the resulting data generating process does not fall under the classical Markov decision processes \citep[MDPs, see e.g.,][]{puterman2014markov}.
Moreover, models \eqref{model:TQVCM DE} and \eqref{model:TQVCM IE} are valid when the potential outcomes satisfy similar assumptions in the quantile varying coefficient models. Please refer to   \eqref{model:TQVCM_outcome_DE} and \eqref{model:TQVCM_outcome_IE} in the supplementary material for more details. 


If the residual $E_i(t+1)$ is independent of the treatment history and $U_i$, we can define $\mathcal{E}_t$ as $\{S_1,E(1),\cdots, E(t)\}$. Under this condition, the assumptions in Proposition \ref{prop0} are satisfied. Hence, under the proposed VCDP models, CQTE is equivalent to SCQTE.
Next, we introduce the following function: 
\begin{eqnarray*}
\phi_{t} ( \mathcal{E}_{t}, \bar a_{t}, U  )
&=&
\beta_{0 }(t, U)  + a_{t} \gamma(t, U) + \beta(t, U) ^\top 
\Big( \sum_{k=1}^{t-1} \Big\{ \prod_{l=k+1}^{t-1} \Phi(l) [ \phi_{0}(k) + a_{k} \Gamma(k)  ]  \Big\}  \Big)  \\
&&+
\prod_{l=1}^{t-1}  \Phi(l)  S_{1} + \sum_{k=2}^t [ \prod_{l=k}^{t-1} \Phi(l) E(k) ],
\end{eqnarray*}
The subsequent proposition offers a closed-form formula for CQTE.

\begin{prop}
\label{prop:QTE_form}
Assuming that CA and Equations \eqref{model:TQVCM_outcome_DE} and \eqref{model:TQVCM_outcome_IE} in the supplementary material hold,  
$U$ is independent of $\{\mathcal{E}_t\}_t$,  and  
$\phi_{t} ( \mathcal{E}_{t}, \bm{1}_t, \tau  )$ and $\phi_{t} ( \mathcal{E}_{t}, \bm{0}_t, \tau  )$ are strictly increasing in $\tau$ for any $\mathcal{E}_{t}$,  
then we have 
\begin{eqnarray}\label{eqn:QTEtau}
	\text{CQTE}_\tau  = \text{SCQTE}_\tau = 
	\sum_{t=1}^m  \gamma(t, \tau)  + \sum_{t=2}^m \beta(t, \tau)^\top  \left\{ \sum_{k=1}^{t-1} \left[\prod_{l=k+1}^{t-1} \Phi(l)\right]  \Gamma(k)  \right\},
\end{eqnarray}
where 
the product $\prod_{l=k+1}^{t-1} \Phi(l) =1$ when $t -1  < k+1$.
\end{prop}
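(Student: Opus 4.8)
The plan is to build on Proposition \ref{prop0} and reduce the claim to an explicit computation of the per-stage conditional quantiles. First I would verify that the potential-outcome versions of the VCDP models, namely \eqref{model:TQVCM_outcome_DE} and \eqref{model:TQVCM_outcome_IE}, place us exactly in the setting of Proposition \ref{prop0}. The key is to unroll the state recursion $S_{t+1}^*(\bar{a}_t) = \phi_0(t) + \Phi(t)S_t^*(\bar{a}_{t-1}) + a_t\Gamma(t) + E(t+1)$ by backward substitution down to the initial state $S_1$. An induction on $t$, using the empty-product convention $\prod_{l=k+1}^{t-1}\Phi(l) = I$ whenever $t-1 < k+1$, yields the closed form $S_t^*(\bar{a}_{t-1}) = \prod_{l=1}^{t-1}\Phi(l)S_1 + \sum_{k=1}^{t-1}\prod_{l=k+1}^{t-1}\Phi(l)[\phi_0(k) + a_k\Gamma(k)] + \sum_{k=2}^t\prod_{l=k}^{t-1}\Phi(l)E(k)$. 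Substituting this into the outcome equation $Y_t^*(\bar{a}_t) = \beta_0(t, U) + S_t^*(\bar{a}_{t-1})^\top\beta(t, U) + a_t\gamma(t, U)$ yields the function $\phi_t(\mathcal{E}_t, \bar{a}_t, U)$ displayed just before the proposition, with $\mathcal{E}_t = \{S_1, E(1), \ldots, E(t)\}$. Together with the assumed independence of $U$ from $\{\mathcal{E}_t\}_t$ and the assumed strict monotonicity of $\phi_t(\mathcal{E}_t, \bm{1}_t, \cdot)$ and $\phi_t(\mathcal{E}_t, \bm{0}_t, \cdot)$ in the rank argument, all hypotheses of Proposition \ref{prop0} are met, delivering the first equality $\textrm{CQTE}_\tau = \textrm{SCQTE}_\tau$.

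Next I would compute each stagewise conditional quantile in closed form. Since $Y_t^*(\bar{a}_t) = \phi_t(\mathcal{E}_t, \bar{a}_t, U)$ with $U \sim \textrm{Unif}(0,1)$ independent of $\mathcal{E}_t$ and $\phi_t(\mathcal{E}_t, \bar{a}_t, \cdot)$ strictly increasing, the standard monotone-transform-of-a-uniform argument gives, for fixed $\mathcal{E}_t$, $\prob(Y_t^*(\bar{a}_t) \le \phi_t(\mathcal{E}_t, \bar{a}_t, \tau) \mid \mathcal{E}_t) = \prob(U \le \tau) = \tau$, hence $Q_\tau(Y_t^*(\bar{a}_t) \mid \mathcal{E}_t) = \phi_t(\mathcal{E}_t, \bar{a}_t, \tau)$. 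Summing over $t$ gives $\textrm{SCQTE}_\tau = \sum_{t=1}^m [\phi_t(\mathcal{E}_t, \bm{1}_t, \tau) - \phi_t(\mathcal{E}_t, \bm{0}_t, \tau)]$.

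The final step is to evaluate the stagewise differences. In $\phi_t(\mathcal{E}_t, \bm{1}_t, \tau) - \phi_t(\mathcal{E}_t, \bm{0}_t, \tau)$, every term not depending on the treatment history cancels: the intercept $\beta_0(t, \tau)$, the initial-state contribution $\prod_{l=1}^{t-1}\Phi(l)S_1$, and the error aggregate $\sum_{k=2}^t\prod_{l=k}^{t-1}\Phi(l)E(k)$ all drop out, as does the $\phi_0(k)$ piece inside the $\beta$-weighted sum. Setting $a_t = 1$ versus $a_t = 0$ leaves $\gamma(t, \tau)$ from the direct term, and setting each $a_k = 1$ versus $a_k = 0$ leaves $\beta(t, \tau)^\top\sum_{k=1}^{t-1}\prod_{l=k+1}^{t-1}\Phi(l)\Gamma(k)$ from the indirect term. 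Summing over $t$ and noting that the indirect contribution vanishes at $t = 1$ because its inner sum is empty, I obtain $\sum_{t=1}^m\gamma(t, \tau) + \sum_{t=2}^m\beta(t, \tau)^\top\sum_{k=1}^{t-1}\prod_{l=k+1}^{t-1}\Phi(l)\Gamma(k)$, which is the claimed formula.

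I expect the main obstacle to be the bookkeeping in the induction that unrolls the state recursion: keeping the nested matrix products $\prod_{l=k+1}^{t-1}\Phi(l)$ and the error indices aligned across the two empty-product conventions, and confirming that the $S_1$ and error terms enter $\phi_t$ in a treatment-free manner so that they genuinely cancel in the difference. Once the closed form for $\phi_t$ is established, the quantile identity and the differencing are routine.
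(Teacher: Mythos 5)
Your proof is correct and takes essentially the same route as the paper's: unroll the potential-state recursion \eqref{model:TQVCM_outcome_IE} by backward substitution to obtain the closed form of $\phi_{t}(\mathcal{E}_t,\bar{a}_t,U)$, then invoke Proposition \ref{prop0} under the stated independence and monotonicity assumptions. The only difference is that you spell out the ``simple calculations'' and the final treatment-free cancellation (intercept, initial state, errors, and $\phi_0$ terms) that the paper's proof leaves implicit.
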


Proposition \ref{prop:QTE_form} enables us to estimate CQTE through SCQTE under certain assumptions. Among these, the monotonicity assumption can be satisfied under various conditions. For example, it holds when $\beta_{0}(t, \tau)$, $\gamma(t, \tau)$, and all elements in $\beta(t, \tau)$ are strictly increasing in $\tau$, and $\phi_0(t)$, all elements in $\Phi(t)$, and $\Gamma(t)$ are positive. Additionally, when $\Gamma(t)=0$ and $\phi_0(t)=0$ for any $t$, it suffices to require $\gamma(t, \tau)$ and $\beta_{0}(t, \tau)$ to be strictly increasing in $\tau$.

To evaluate policy value, we need to estimate the model parameters $\beta$, $\gamma$, $\Phi$, and $\Gamma$. Notice that under the conditions of Proposition \ref{prop:QTE_form}, we have that: 
\begin{align}
\label{model:TQVCM DE tau}
Y_{i,t }= \beta_{0 }(t, \tau) + S_{i,t}^\top\beta(t, \tau) + A_{i,t}\gamma(t, \tau) + e_{i}(t, \tau)
= Z_{i,t}^\top \theta(t, \tau) + e_i(t, \tau),	
\end{align}
where $e_i(t, \tau)=Z_{i,t}^\top [\theta(t, U)-\theta(t, \tau)]$, and its conditional $\tau$-th quantile given $Z_{i,t}$ equals zero. Therefore, we can employ ordinary quantile regression to learn $\beta$ and $\gamma$. Meanwhile, since the residuals $E_i(t)$s are independent over time, ordinary least-squares regression is applicable to the state regression model to estimate $\Phi$ and $\Gamma$. We detail our estimating procedure in the next section. 

\subsection{Estimation and inference procedures}
\label{subsec:estimation_temporal}
In this subsection, we outline the procedures for estimating and testing CQTE based on the results in Proposition \ref{prop:QTE_form}. We first estimate the regression coefficients in models \eqref{model:TQVCM DE tau} and \eqref{model:TQVCM IE}. We then plug these estimates into \eqref{eqn:QTEtau} to estimate CQTE. Finally, we develop a bootstrap-assisted procedure to test CQTE.


Let $S_{i,t+1}^{(\nu)}$, $\phi^{(\nu)}_{0}(t)$, and $\Gamma^{(\nu)}(t)$ denote the $\nu$-th entries of $S_{i,t+1}$, $\phi_{0}(t)$, and $\Gamma(t)$, respectively. Let $\Phi^{(\nu)}(t)$ and $\Theta^{(\nu)}(t)$ denote the $\nu$-th rows of $\Phi(t)$ and $\Theta(t)$, respectively. It follows from \eqref{model:TQVCM IE} that:  
$$
S^{(\nu)}_{i,t+1} =\phi^{(\nu)}_{0}(t)+ S_{i,t}^\top \Phi^{(\nu)}(t) 
+A_{i,t} \Gamma^{(\nu)}(t) + E^{(\nu)}_{i}(t+1)
=  Z_{i,t}^\top \Theta^{(\nu)}(t) + E^{(\nu)}_{i}(t+1).
$$

We propose a two-step procedure to estimate $\theta(t, \tau)$ and $\Theta(t)$. In the first step, we minimize the following functions: 
\begin{eqnarray}
\label{eq:theta_estimator}
\widehat{\theta}( t, \tau)&=& \argmin \sum_i \rho_\tau ( Y_{it} -  Z_{i,t}^\top \theta(t, \tau) ), \quad  \text{for~} t=1, \dots, m,  \\
\label{eq:Theta_estimator}
{\widehat \Theta}^{(\nu)}(t)
&=& \argmin \sum_i  [ S_{i, t+1}^{(\nu)} -  Z_{i,t}^\top \Theta^{(\nu)}(t) ]^2,
\quad \text{for~} \nu =1, \dots, d, \quad t=1, \dots, m-1. 
\end{eqnarray}	
These one-step estimates can be computed easily but suffer from large variances as they rely solely on observations at time $t$. In the second step, we employ kernel smoothing to reduce the variances of these initial estimators and identify weak signals \citep{zhu2014spatially}. Specifically, for a given kernel function $K(\cdot)$, the second-step estimators $\widetilde{\theta}(t, \tau)$ and ${\widetilde \Theta}_\tau^{(\nu)}(t)$ are defined as:  
\begin{eqnarray}
\label{eq:theta_estimator_refined}
\widetilde{\theta}(t, \tau) &=&\sum_{j =1}^m \omega_{j,h}(t) \widehat{\theta} (j, \tau), \quad \text{for~} t=1, \dots, m \\
\label{eq:Theta_estimator_refined}
\widetilde{\Theta}^{(\nu)}(t)
&=& \sum_{j =1}^m \omega_{j,h}(t) {\widehat \Theta}^{(\nu)}( j),
\quad \text{for~} \nu =1, \dots, d, \quad t=1, \dots, m-1,
\end{eqnarray}	
where $\omega_{j,h}(t)=K((j -t )/(mh))/\sum_{k=1}^m K((k -t)/(mh))$ is 
the weight function and $h$ denotes the kernel bandwidth.
 The use of kernel smoothing allows us to estimate the varying coefficients $\theta_\tau(t)$ and $\Theta_\tau(t)$ for any real-valued $t$. Given $\widetilde{\theta}(t, \tau) $ and ${\widetilde \Theta}_\tau^{(\nu)}(t)$, we can compute the following CQTE estimator: 
\begin{eqnarray}
\label{eq:QTE_estimate}
\widehat{\textrm{CQTE} }_\tau=\sum_{t=1}^m  \widetilde \gamma(t, \tau)
+  \sum_{t=2}^m \widetilde \beta(t,\tau)^\top  \left\{ \sum_{k=1}^{t-1} \left[\prod_{l=k+1}^{t-1} \widetilde \Phi(l)\right]  \widetilde\Gamma(k)  \right\}.	
\end{eqnarray}	


To test \eqref{hypo:QTE}, we use the test statistic $T_{\tau}$, which is set to $\widehat{\textrm{CQTE} }_\tau$. Under the null hypothesis, $T_{\tau}$ is expected to be negative or close to zero. Therefore, we reject the null hypothesis for a large value of $\widehat{\textrm{CQTE}}_\tau$. However, deriving the limiting distribution of $T_{\tau}$ for large $m$ is complicated due to the complex dependence of $\widehat{\textrm{CQTE}}_\tau$ on the estimated model parameters. To address this issue, we use the bootstrap method to simulate the distribution of $\widehat{ \textrm{CQTE}}_\tau$ under the null hypothesis. Specifically, we modify the bootstrap method proposed by \cite{horowitz2018bootstrap} and adapt it to our setting as follows. \cite{horowitz2018bootstrap} proposed to resample the estimated residuals to infer the conditional quantile function in a nonparametric quantile regression model. In our case, to handle the dependence over time, we resample the entire error process (see Step 3 below for details).

The bootstrap method for $\widehat{\textrm{CQTE}}_\tau$ is implemented as follows: 
\begin{itemize}
\item 
Step 1. Compute the estimators $\widetilde{\theta}(t, \tau)$ and $\widetilde{\Theta}(t)$ in \eqref{eq:theta_estimator_refined} and \eqref{eq:Theta_estimator_refined}.
\item 
Step 2. Estimate the residuals by $\hat e_{i}(t, \tau) = Y_{i,t} - Z_{i, t}^\top \widetilde{\theta}(t, \tau)$ for $t=1, \dots, m$ and
$\widehat E_{i} (t+1) =S_{i, t+1} - \widetilde{\Theta}(t) Z_{i,t}$ for $t=1, \dots, m-1$.
\item

Step 3. For each $b$, generate i.i.d. random variables $\{ \tilde e_{i} (1, \tau), \dots, \tilde e_{i} (m, \tau) \}$ 
by randomly sampling $\{ \hat e_{i} (1, \tau), \dots, \hat e_{i} (m, \tau) \}$ with replacement. Similarly, generate $\{ \tilde E_{i} (2), \dots, \tilde E_{i} (m) \}$ 
by randomly sampling $\{ \hat E_{i} (2), \dots, \hat E_{i} (m) \}$ with replacement. Next, generate pseudo outcomes $\{\widehat{S}^b_{i,t}\}_{ i,t}$ and $\{\widehat{Y}^b_{ i,t}\}_{i,t}$ as follows,
\begin{eqnarray}
	\label{eqn:pseudo_outcomes}
	\widehat{S}^b_{ i,t+1}=\widetilde{\Theta}(t) \widehat{Z}^b_{ i,t}+ \tilde E_{i} (t+1)  \,\,\hbox{and}\,\,
	\widehat{Y}^b_{ i,t}=\widehat{Z}_{ i,t}^{b \top} \widetilde{\theta}(t, \tau) +  \tilde{e}_{i}(t, \tau).
\end{eqnarray}

\item 
Step 4. For each $b$, compute the bootstrap estimates $\widetilde{\theta}^b(t, \tau)$ and $\widetilde{\Theta}^b(t)$ according to 
equations \eqref{eq:theta_estimator}-\eqref{eq:Theta_estimator_refined} using the pseudo outcomes $\{(\widehat{S}^b_{i,t}, \widehat{Y}^b_{ i,t}):i,t\} $.

\item Step 5. For each $b$, compute the bootstrapped statistic $T_{\tau}^b=\widehat{\textrm{CQTE} }^b_\tau$.  

\item Step 6. 
Repeat Steps 3-5 $B$ times. Given a significance level $\alpha$, reject $H_0$ (see \ref{hypo:QTE})
if the statistic
$T_{\tau}$ exceeds the upper $\alpha$th empirical quantile of
$\{ T_{\tau}^b-T_{\tau} \}_{b=1}^B$. 
\end{itemize}


In the supplementary material, we present Theorem \ref{thm:bootstrap_consistency_QTE}, which rigorously establishes the consistency of the aforementioned bootstrap method. It's worth noting that the bootstrap consistency theory elaborated in \cite{horowitz2018bootstrap} isn't readily applicable to our context, where $m$ can increase along with the sample size. 


\section{Extension to spatiotemporal dependent experiments}
\label{sec:ST method}
In this section, we aim to address (Q2) and
expand upon the method proposed in Section \ref{sec:temporal design} to analyze data from spatiotemporal dependent experiments involving multiple non-overlapping regions receiving distinct treatments in a sequential manner over time. Let $r$ represent the number of these non-overlapping regions. As previously discussed, these experiments are not only subject to temporal interference effects but also exhibit spatial interference, whereby the policy implemented in one location may influence the outcomes in other locations. We begin by defining the test hypotheses, followed by an introduction to our models and the suggested procedures.

 

\subsection{Test hypotheses}\label{sec:STPOF}

 For the $\iota$-th region, we use $\bar{a}_{t,\iota}=(\bar{a}_{1,\iota},\ldots,\bar{a}_{t,\iota})^\top$ to denote its treatment history up to time $t$.
Let $\bar{a}_{t,[1:r]}=(\bar{a}_{t,1},\ldots,\bar{a}_{t,r})^\top$ represent the treatment history across all regions. Similarly, define $S_{t,\iota}^*(\bar{a}_{t-1,[1:r]})$ and $Y_{t,\iota}^*(\bar{a}_{t,[1:r]})$ as the potential observation and outcome for the $\iota$-th region, respectively. The set of potential observations at time $t$ is denoted as $S_{t,[1:r]}^*(\bar{a}_{t-1,[1:r]})$.

In the spatiotemporal context, our focus is on the cumulative quantile treatment effects, aggregated over all regions. Specifically, we define CQTE and SCQTE at the $\tau$-th quantile level as:
\begin{eqnarray*}
\textrm{CQTE}_{\tau st}= Q_\tau \Big( \sum_{\iota=1}^r\sum_{t =1}^{m} Y_{t,\iota}^*(\bm{1}_{t,[1:r]}) | \mathcal{E}_{m, [1:r]} \Big)
- Q_\tau\Big( \sum_{\iota=1}^r\sum_{t =1}^{m} Y_{t,\iota}^*(\bm{0}_{t,[1:r]}) | \mathcal{E}_{m, [1:r]} \Big ),  \\
\textrm{SCQTE}_{\tau st}=  \sum_{\iota=1}^r\sum_{t =1}^{m}   Q_\tau \Big(Y_{t,\iota}^*(\bm{1}_{t,[1:r]})  | \mathcal{E}_{t, [1:r]} \Big)
-  \sum_{\iota=1}^r\sum_{t =1}^{m}  Q_\tau  \Big(  Y_{t,\iota}^*(\bm{0}_{t,[1:r]}) | \mathcal{E}_{t, [1:r]} \Big),
\end{eqnarray*}
respectively,  
where $\mathcal{E}_{t, [1:r]}$ denotes the set of characteristics independent of the treatment history up to time $t$ across all regions.  
For a given quantile level $\tau$, our goal is to test whether a new policy outperforms   the old one as follows: 
\begin{eqnarray}
\label{hypo:QTE_st}
H_0: \textrm{CQTE}_{\tau st} \le 0 \quad \textrm{versus} \quad H_1: \textrm{CQTE}_{\tau st}>0.
\end{eqnarray}
Compared to the testing problem in \eqref{hypo:QTE}, \eqref{hypo:QTE_st} focuses on global treatment effects aggregated over time and regions. We assume the consistency assumption holds. Similar to Section \ref{sec:temporal design}, under the spatial alternating-time-interval design, one can 
show that the sequential ignorability assumption and the positivity assumption are automatically satisfied, ensuring that CQTE is identifiable from the observed data.

\subsection{Spatiotemporal VCDP models}  
\label{sec:STQVCDP}

Suppose that the experiment  
last for $n$ days, and   each day is divided into $m$ time intervals.
For $i=1, \ldots, n$, $t=1, \ldots, m$, and $\iota=1, \ldots, r$, let
$(S_{i,t,\iota}, A_{i,t,\iota}, Y_{i,t,\iota})$ represent the state-treatment-outcome triplet measured from  the $\iota$th region at the $t$-th time interval of  the $i$-th day.  
For each $\iota$,  $\mathcal{N}_\iota$ denotes the neighbouring regions of $\iota$.  
To model the quantiles of $  Y_{t,\iota}$ and $S_{t,\iota}$,  we extend the two VCDP models in Section \ref{sec:temporal design} to two spatialtemporal VCDP (STVCDP) models in this section.

The first STVCDP model describes the quantile structure of the outcome. It assumes the following form: 
\begin{align}
 \label{model:STQVCM DE}
Y_{i,t,\iota} &= \beta_{0 } (t,\iota, U_i) + S_{i,t,\iota}^\top\beta (t,\iota, U_i) + A_{i,t,\iota}\gamma_{1}(t,\iota, U_i)+ \bar{A}_{i,t,\mathcal{N}_\iota}\gamma_{2 }(t,\iota, U_i)   \\
	&= Z_{i,t,\iota}^\top \theta (t,\iota, U_i),   \nonumber 
\end{align}
where    $\bar{A}_{i,t,\mathcal{N}_\iota}$ denotes the average of $\{A_{i,t,k}\}_{k\in \mathcal{N}_\iota}$,    $Z_{i,t,\iota} = (1, S_{i,t,\iota}^\top, A_{i,t,\iota}, \bar{A}_{i,t,\mathcal{N}_\iota})^\top$,   and 
$\theta (t,\iota, U_i) = (\beta_{0 } (t,\iota, U_i), \beta(t,\iota, U_i)^\top, \gamma_{1}(t,\iota, U_i), \gamma_{2 }(t,\iota, U_i))^\top$. 
Model \eqref{model:STQVCM DE} is based on two key assumptions. Firstly, it is assumed that the effect of treatments in other regions on the conditional quantile of $Y_{i,t,\iota}$ is limited to those of its neighboring regions, as long as each experimental region is large enough. This is because drivers can only travel between neighboring regions in one time unit, meaning that treatments in non-neighboring regions are not expected to impact $Y_{i,t,\iota}$.
Secondly, it is assumed that the influence of treatments in neighboring regions on the conditional quantile of $Y_{i,t,\iota}$ is only through the mean of the treatments. This is a common mean-field assumption used to model spillover effects \citep[e.g.,][]{hudgens_toward_2008,shi2022multi} and can be tested using observed data \citep{shi2022multi}.


The second STVCDP model models the conditional distribution of the next state given the current state-action pair as follows: 
\begin{align}\label{model:STQVCDP IE}
\begin{split}
	S_{i,t+1,\iota} &= \phi_{0}(t,\iota) + \Phi(t,\iota)S_{i,t,\iota} + A_{i,t,\iota}\Gamma_{1}(t,\iota) + \bar{A}_{i,t,\mathcal{N}_\iota}\Gamma_{2 }(t,\iota) + E_{i} (  t+1,\iota )\\
	&= \Theta(t,\iota)Z_{i,t,\iota} + E_{i} (  t+1,\iota ),
\end{split}    
\end{align}
where 
$\Theta(t,\iota)=[\phi_{0 }(t,\iota),\Phi (t,\iota),\Gamma_{1}(t,\iota), \Gamma_{2}(t,\iota)]\in\R^{d\times (d+3)}$
and
$\Phi(t,\iota)$ is a $d\times d$ matrix of autoregressive coefficients. 
The conditional mean of each entry in the error process $E_{i} (t, \iota)$ given $Z_{i,t,\iota}$ is zero. 
The error process is required to be independent over time, although it may be dependent across different locations. Additionally, the varying coefficients are required to be smooth over the entire spatial domain, which will help to reduce the variances of the model estimators and improve the accuracy of the CQTE estimator. The models \eqref{model:STQVCM DE} and \eqref{model:STQVCDP IE} hold under the assumption that the potential outcomes satisfy the quantile varying coefficient models, as described in the supplementary material (models \eqref{model:STQVCM_outcome_DE} and \eqref{model:STQVCM_outcome_IE}).

The following proposition provides a closed-form expression for $\textrm{CQTE}_{\tau st}$ and proves that $\textrm{CQTE}_{\tau st}=\textrm{SCQTE}_{\tau st}$.
Let  
\begin{eqnarray*}
\phi_{t,\iota} ( \mathcal{E}_{t,\iota}, \bar a_{t, [1:r]}, U  )
&=&
\beta_{0 }(t,\iota, U)  + a_{t,\iota} \gamma_1(t, \iota, U) +
\bar a_{t,N_\iota} \gamma_2(t, \iota, U)  \\
&+&  \beta(t, \iota, U) ^\top 
\Big( \sum_{k=1}^{t-1} \Big\{ \prod_{l=k+1}^{t-1} \Phi(l,\iota) [ \phi_{0}(k,\iota) + a_{k, \iota} \Gamma_1(k,\iota) + a_{k, N_\iota} \Gamma_2(k,\iota)  ]  \Big\}  \Big)  \\
&+&
\prod_{l=1}^{t-1}  \Phi(l,\iota)  S_{1, \iota} + \sum_{k=2}^t [ \prod_{l=k}^{t-1} \Phi(l,\iota) E(k, \iota) ],  
\end{eqnarray*}
where $\mathcal{E}_{t,\iota} =  \{S_{1,\iota}, E (2, \iota), \dots, E (t, \iota) \} $ and the product $\prod_{l=k+1}^{t-1} \Phi(j,\iota) =1$ when $t -1  < k+1$.  

\begin{prop}
\label{prop:QTE_form_spatial}
Suppose that CA and the conditions in equations \eqref{model:STQVCM_outcome_DE} and \eqref{model:STQVCM_outcome_IE} of the supplementary material hold, and that $U$ is independent of the collection of error processes $\{\mathcal{E}_{t,\iota}\}_{t,\iota}$. Furthermore, assume that the functions $\phi_{t,\iota} ( e_{t,\iota}, \bm{1}_{t, [1:r]}, \tau)$ and $\phi_{t,\iota} ( e_{t,\iota}, \bm{0}_{t, [1:r]}, \tau)$ are strictly increasing in $e_{t,\iota}$. Then, we have 
\begin{eqnarray*}
	\textrm{CQTE}_{\tau st}= \textrm{SCQTE}_{\tau st} 
	&=&
	\sum_{\iota=1}^r\sum_{t=1}^m\{\gamma_{1 }(t,\iota,\tau)+\gamma_{2 }(t,\iota, \tau)\}  \\
	&+&
	\sum_{\iota=1}^r\sum_{t=2}^m \beta(t,\iota, \tau)^\top \left\{ \sum_{k=1}^{t-1} \left[\prod_{j=k+1}^{t-1} \Phi(j,\iota)\right] [\Gamma_{1 }(k,\iota)+\Gamma_{2  }(k,\iota)] \right\}.
\end{eqnarray*}

\end{prop}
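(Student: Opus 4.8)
The plan is to mirror the arguments behind Propositions \ref{prop0} and \ref{prop:QTE_form}, treating the space--time indexed family $\{Y_{t,\iota}^*\}_{t,\iota}$ as a single indexed collection and exploiting the fact that, once we condition on the full error information, every potential outcome becomes a strictly monotone transform of the single scalar rank variable $U$. Concretely, I would proceed in three stages: (i) derive the explicit structural representation $Y_{t,\iota}^*(\bar a_{t,[1:r]}) = \phi_{t,\iota}(\mathcal{E}_{t,\iota}, \bar a_{t,[1:r]}, U)$ by unrolling the state recursion; (ii) show $\textrm{CQTE}_{\tau st} = \textrm{SCQTE}_{\tau st}$ by pushing the quantile operator through the double sum; and (iii) evaluate the closed form by differencing the all-ones and all-zeros potential outcomes.

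For stage (i) I would start from the potential-outcome versions of \eqref{model:STQVCM DE} and \eqref{model:STQVCDP IE} stated in the supplement, namely \eqref{model:STQVCM_outcome_DE} and \eqref{model:STQVCM_outcome_IE}, and iterate the state equation. Writing $g(k,\iota)=\phi_0(k,\iota)+a_{k,\iota}\Gamma_1(k,\iota)+\bar a_{k,\mathcal{N}_\iota}\Gamma_2(k,\iota)$ and iterating from $S_{1,\iota}$ gives
\[
S_{t,\iota}^* = \Big[\prod_{l=1}^{t-1}\Phi(l,\iota)\Big] S_{1,\iota} + \sum_{k=1}^{t-1}\Big[\prod_{l=k+1}^{t-1}\Phi(l,\iota)\Big] g(k,\iota) + \sum_{k=2}^{t}\Big[\prod_{l=k}^{t-1}\Phi(l,\iota)\Big] E(k,\iota),
\]
with the usual convention that an empty matrix product equals the identity. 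Substituting this into the outcome model $Y_{t,\iota}^* = \beta_0(t,\iota,U) + a_{t,\iota}\gamma_1(t,\iota,U) + \bar a_{t,\mathcal{N}_\iota}\gamma_2(t,\iota,U) + \beta(t,\iota,U)^\top S_{t,\iota}^*$ and collecting terms reproduces the displayed definition of $\phi_{t,\iota}$; here the consistency assumption (CA) is what lets us pass between potential and observed quantities. The only care needed is bookkeeping of the ordered matrix products $\prod_l\Phi(l,\iota)$ and threading the neighbour-average treatments $\bar a_{k,\mathcal{N}_\iota}$ correctly through the recursion.

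For stage (ii) I would condition on $\mathcal{E}_{m,[1:r]}=\bigcup_\iota\{S_{1,\iota},E(2,\iota),\dots,E(m,\iota)\}$, which contains every $\mathcal{E}_{t,\iota}$. Since $U$ is independent of all the error processes, conditional on $\mathcal{E}_{m,[1:r]}$ each $\phi_{t,\iota}(\mathcal{E}_{t,\iota},\bar a_{t,[1:r]},\cdot)$ is a deterministic and, by the assumed monotonicity in the rank argument $\tau$, strictly increasing function of $U$; a finite sum of strictly increasing functions is strictly increasing, so $\sum_\iota\sum_t Y_{t,\iota}^*$ is itself a strictly increasing transform of $U$. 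Invoking the elementary identity $Q_\tau(g(U)\mid\mathcal{E}_{m,[1:r]})=g(\tau)$ for strictly increasing $g$ and $U\sim\textrm{Unif}(0,1)$, the same device used in Proposition \ref{prop0}, gives
\[
\textrm{CQTE}_{\tau st}=\sum_{\iota=1}^r\sum_{t=1}^m\big[\phi_{t,\iota}(\mathcal{E}_{t,\iota},\bm{1}_{t,[1:r]},\tau)-\phi_{t,\iota}(\mathcal{E}_{t,\iota},\bm{0}_{t,[1:r]},\tau)\big].
\]
Because $\phi_{t,\iota}(\cdot,\cdot,\tau)$ is $\mathcal{E}_{t,\iota}$-measurable, applying the identity one term at a time while conditioning on $\mathcal{E}_{t,\iota}$ yields the identical expression for $\textrm{SCQTE}_{\tau st}$, establishing the first equality. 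I would note that spatial cross-dependence among the $E(k,\iota)$ is harmless precisely because we condition on the whole collection at once.

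Finally, for stage (iii) I would difference the two potential outcomes term by term. Under $\bm{1}_{t,[1:r]}$ every $a_{k,\iota}$ equals one and every neighbour average $\bar a_{k,\mathcal{N}_\iota}$ also equals one, whereas all vanish under $\bm{0}_{t,[1:r]}$; the $\beta_0$ term, the initial-state term $\prod\Phi\,S_{1,\iota}$, and the error term $\sum\prod\Phi\,E(k,\iota)$ are treatment-free and cancel. What survives is $\gamma_1(t,\iota,\tau)+\gamma_2(t,\iota,\tau)$ from the contemporaneous part and $\beta(t,\iota,\tau)^\top\sum_{k=1}^{t-1}[\prod_{l=k+1}^{t-1}\Phi(l,\iota)][\Gamma_1(k,\iota)+\Gamma_2(k,\iota)]$ from the lagged part, the latter being an empty sum (hence zero) at $t=1$. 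Summing over $\iota$ and $t$ gives the claimed formula. I expect the main obstacle to be the careful recursive unrolling in stage (i)---correctly threading the neighbour-average terms through the autoregressive matrix products---together with verifying in stage (ii) that the quantile genuinely passes through the double sum; both reduce to the monotonicity-in-$\tau$ device, so no genuinely new machinery beyond Proposition \ref{prop0} is required.
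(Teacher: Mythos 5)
Your proposal is correct and follows essentially the same route as the paper: the paper's own proof simply states that the argument mirrors Proposition \ref{prop:QTE_form} (unroll the state recursion in the potential-outcome models \eqref{model:STQVCM_outcome_DE}--\eqref{model:STQVCM_outcome_IE} to obtain $\phi_{t,\iota}$, then invoke the rank-variable monotonicity device of Proposition \ref{prop0} and difference the all-ones and all-zeros treatment paths), which is exactly your three-stage plan. Your write-up merely supplies the bookkeeping details that the paper omits, including the correct observation that conditioning on the full error collection $\mathcal{E}_{m,[1:r]}$ renders the spatial cross-dependence of the errors harmless.
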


Proposition \ref{prop:QTE_form_spatial} provides a foundation for constructing a plug-in estimator for $\textrm{CQTE}_{\tau st}$. This forms the basis of the proposed inference procedure, which we discuss in more detail in the next section. Additionally, from models \eqref{model:STQVCM DE} and \eqref{model:STQVCDP IE}, we can obtain the expression for $Y_{i,t,\iota}$ as: 
$$
Y_{i,t,\iota} = \beta_{0 } (t,\iota, \tau) + S_{i,t,\iota}^\top\beta (t,\iota, \tau) + A_{i,t,\iota}\gamma_{1}(t,\iota, \tau)+ \bar{A}_{i,t,\mathcal{N}_\iota}\gamma_{2 }(t,\iota, \tau) +  e_{i} (t, \iota, \tau),
$$
where  $e_{i} (t, \iota, \tau)$ is the residual term, defined as $Z_{i, t, \iota}^\top [\theta(t,\iota,U)-\theta(i,\iota,\tau)]$, and its conditional $\tau$-th quantile given $Z_{i, t, \iota}$ is equal to zero.
It is worth mentioning that these models can be further extended to incorporate the effects of states from neighboring regions on the immediate outcome by including another mean-field term $\Phi_{2}(t,\iota) \bar{S}_{i,t,\mathcal{N}_\iota}$ where $\bar{S}_{i,t,\mathcal{N}_\iota} = \sum_{\iota^\prime \in \mathcal{N}_\iota} {S}_{i,t,\iota^\prime}/\mathcal{N}_\iota $. In this case, the closed-form expression for $\textrm{CQTE}_{\tau st}$ can be similarly derived.

\subsection{Estimation and inference procedures} 
\label{sec:inference_st}

In this subsection, we outline the estimation and testing procedures for $\textrm{CQTE}_{\tau st} $.

Firstly, we calculate  raw estimators of the unknown
coefficients in the two STVCDP models. For each region $\iota$, we employ standard quantile regression and linear regression as shown in \eqref{eq:theta_estimator} and \eqref{eq:Theta_estimator}
to the data subsets $\{(Z_{i,t,\iota}, Y_{i,t,\iota})\}_{i,t}$ and $\{(Z_{i,t,\iota}, S_{i,t+1,\iota})\}_{i,t}$ to obtain the initial estimators
$ \widehat{\theta} (t, \iota, \tau) $ and $ \widehat{\Theta} (t, \iota) $ for $\theta (t, \iota, \tau) $ and $\Theta(t, \iota)$, respectively. Next, we apply kernel smoothing techniques as illustrated in \eqref{eq:theta_estimator_refined} and \eqref{eq:Theta_estimator_refined} to refine these initial estimators over time. We denote the resulting estimators as $ \widetilde{\theta}^0 (t, \iota, \tau) $ and $ \widetilde{\Theta}^0 (t, \iota) $.

Secondly, we further refine these raw estimators by employing kernel smoothing to borrow information across space. 
Specifically, we define 
\begin{eqnarray*} 
\widetilde\theta(t,\iota, \tau)              =   \sum_{\ell=1}^r\kappa_{\ell,h_{st}}(\iota)\widetilde\theta^0(t,\ell, \tau)  \quad 
\mbox{and} \quad 
\widetilde\Theta^{(\nu )}(t,\iota)  =   \sum_{\ell=1}^r \kappa_{\ell,h_{st}}(\iota)\widetilde\Theta^{0(\nu)}(t,\ell),
\end{eqnarray*} 
where
$\widetilde{\Theta}^{0(\nu)} (t, \iota)$ is the $\nu$th column of $\widetilde{\Theta}^0 (t, \iota)$
and $\kappa_{\ell,h_{st}}(\iota)$ is a normalized kernel function with bandwidth parameter $h_{st}$. The kernel function
$\kappa_{\ell,h_{st}}(\iota)$ is given by
$$
\kappa_{\ell,h_{st}}(\iota) = \frac{ K ( ( u_\iota - u_\ell ) /h_{st} ) K (( v_\iota - v_\ell ) /h_{st} )   }{  \sum_{j=1}^r K ( ( u_\iota - u_j) /h_{st}  ) K ( ( v_\iota - v_j) /h_{st})  },
$$
where $ (u_\iota, v_\iota) $   represents the longitude and latitude of region $\iota$. Consequently, regions with smaller spatial distances contribute more significantly.

Thirdly, we estimate $\textrm{CQTE}_{\tau st}$ by substituting the refined estimators $\widetilde\theta_{\tau st}(t,\iota)$ and $\widetilde\Theta_{st}(t,\iota)$ and use the resulting estimator $\widehat{ \textrm{CQTE} }_{\tau st}$ as the test statistic $T_{\tau st}$.

Finally, we introduce a bootstrap method to test \eqref{hypo:QTE_st}. During each iteration, we resample the estimated error processes to obtain the bootstrap estimates $\widetilde\theta_{\tau st}^b(t,\iota)$, $\widetilde\Theta_{ st}^b(t,\iota)$, and the bootstrapped statistic $T_{\tau st}^b=\widehat{\textrm{CQTE} }^b_{\tau st}$. We reject $H_0$ in \eqref{hypo:QTE_st} if $T_{\tau st}$ exceeds the upper $\alpha$th empirical quantile of $\{ T^b_{\tau st}-T_{\tau st} \}_{b=1}^B$. As this approach is highly similar to the one presented in Section \ref{subsec:estimation_temporal}, we omit further details for brevity.



\section{Direct and indirect effects}\label{sec:DEIE}
Recall that Proposition \ref{prop:QTE_form} provides the closed-form expression of $\textrm{CQTE}_\tau$, 
which is 
\begin{eqnarray*}
\sum_{t=1}^m  \gamma(t, \tau)+\sum_{t=2}^m \beta(t, \tau)^\top  \left\{ \sum_{k=1}^{t-1} \left[\prod_{l=k+1}^{t-1} \Phi(l)\right]  \Gamma(k)  \right\}.
\end{eqnarray*}
Consequently, we can divide the quantile treatment effect into two components. Specifically, the first term $\sum_{t=1}^m \gamma(t, \tau) $ of $\textrm{CQTE}_\tau$ represents the direct effect of the treatment on the immediate outcome, expressed as
\begin{eqnarray*}
\textrm{CQDE}_{\tau}=Q_{\tau}\Big(  \sum_{t=1}^m Y_t^*( \bm{1}_{t} ) | \mathcal{E}_m \Big)  -  Q_\tau \Big( \sum_{t=1}^m Y_t^*( 0, \bm{1}_{t-1} ) | \mathcal{E}_m\Big).		
\end{eqnarray*}
Observe that for each $t$, the two potential outcomes $Y_t^*( \bm{1}_{t} )$ and $Y_t^*( 0, \bm{1}_{t-1} )$ differ in the treatment received at time $t$, but they share the same treatment history. 
The second term $\sum_{t=2}^m \beta(t, \tau)^\top  \left\{ \sum_{k=1}^{t-1} \left[\prod_{l=k+1}^{t-1} \Phi(l)\right]  \Gamma(k)  \right\}$ quantifies the carryover effects of past treatments on the current outcome, defined as  
\begin{eqnarray*}
\textrm{CQIE}_{\tau}=Q_{\tau}\Big(  \sum_{t=1}^m Y_t^*( 0, \bm{1}_{t-1} ) | \mathcal{E}_m \Big)  -  Q_\tau \Big( \sum_{t=1}^m Y_t^*( \bm{0}_{t} ) | \mathcal{E}_m \Big).		
\end{eqnarray*}
Similar decompositions have been considered in \cite{li2022network} and \cite{shi2022multi}.

The corresponding testing hypotheses are given by
\begin{eqnarray}
\label{hypo:QDE}
H^{D}_0: \textrm{CQDE}_\tau \le 0 \quad \textrm{versus} \quad H_1^D:\textrm{CQDE}_\tau >0,  \\
\label{hypo:QIE}
H^{I}_0: \textrm{CQIE}_\tau \le 0 \quad \textrm{versus} \quad H_1^I:\textrm{CQIE}_\tau >0.
\end{eqnarray}
Testing these hypotheses not only enables us to determine whether the new policy is significantly better than the old one or not, but also helps us understand how the new (or the old) policy outperforms the other. 

To test \eqref{hypo:QDE} and \eqref{hypo:QIE}, we use the two-step estimators in \eqref{eq:theta_estimator_refined} and \eqref{eq:Theta_estimator_refined} to construct the plug-in estimators $\widehat{\textrm{CQDE} }_\tau$ and $\widehat{\textrm{CQIE} }_\tau$ for CQDE and CQIE, respectively. Next, we employ the bootstrap method in Section \ref{subsec:estimation_temporal} to approximate the limiting distributions of $\widehat{\textrm{CQDE} }_\tau$ and $\widehat{\textrm{CQIE} }_\tau$ under the null hypotheses. We note that although $\widehat{\textrm{CQDE} }_\tau$ has a tractable limiting distribution and is asymptotically normal, estimating its asymptotic variance without using bootstrap remains challenging.

Finally, we can similarly define the direct effect and indirect effect in the spatiotemporal design as follows, 
\begin{eqnarray*}
\textrm{CQDE}_{\tau st}&=&  \sum_{\iota=1}^r\sum_{t=1}^m\{\gamma_1(t,\iota,\tau)+\gamma_2(t,\iota, \tau)\},\\
\textrm{CQIE}_{\tau st}& =& 
\sum_{\iota=1}^r\sum_{t=2}^m \beta(t,\iota, \tau)^\top \left[ \sum_{k=1}^{t-1} \left(\prod_{j=k+1}^{t-1} \Phi(j,\iota)\right) \{\Gamma_1(k,\iota)+\Gamma_2(k,\iota)\} \right].	
\end{eqnarray*} 
The estimation and inference procedures can be derived similarly.

\section{Asymptotic Properties}
\label{sec:asymptotic results}
In this section, we  
investigate the theoretical properties of  
the proposed test statistics for CQTE.   
Firstly, we provide an upper error bound as a function of $h,m$ and $n$ to measure the approximation error of the proposed bootstrap method in temporal dependent experiments.

\begin{thm}
\label{thm:bootstrap_consistency_QTE}
Suppose that Assumptions \ref{assump:kernel}--\ref{asmp:st1} of the supplement material hold, if
$h=o(n^{-1/4})$, $m\asymp n^{c_2}$ for some $1/2< c_2<3/2$ and $mh\to\infty$, 
as $n \rightarrow \infty$,
\begin{eqnarray*}
	\sup_{ \tau \in [\varepsilon, 1- \varepsilon] }\sup_{z}|\prob(T_{\tau}-\textrm{CQTE}_{\tau}  \leq z)-\prob( T^b_{\tau}-T_{\tau}  \leq z |\textrm{Data}) \vert 
	\leq & C(\sqrt{n}h^2 + \sqrt{n} m^{-1} 
	+n^{-1/8}) 
\end{eqnarray*}
with probability approaching $1$, 
for some $\varepsilon \in (0, 1)$ 
and some positive constant $C$.
\end{thm}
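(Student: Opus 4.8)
The plan is to prove the result by establishing a quantitative Gaussian approximation for both the sampling law of $T_{\tau}-\textrm{CQTE}_{\tau}$ and the conditional bootstrap law of $T_{\tau}^{b}-T_{\tau}$, and then showing that the two approximating normals agree up to the stated error. The starting point is an asymptotic linear (Bahadur) representation for the raw estimators. For the quantile regression estimators $\widehat{\theta}(t,\tau)$ of \eqref{eq:theta_estimator} one has, uniformly in $t$ and in $\tau\in[\varepsilon,1-\varepsilon]$,
\[
\widehat{\theta}(t,\tau)-\theta(t,\tau)=\frac{1}{n}\sum_{i=1}^{n}J_{t}(\tau)^{-1}Z_{i,t}\big(\tau-\mathbbm{1}\{e_{i}(t,\tau)\le 0\}\big)+R_{n,t}(\tau),
\]
where $J_{t}(\tau)$ is the density-weighted Gram matrix and $R_{n,t}(\tau)$ is of smaller order; truncating to $[\varepsilon,1-\varepsilon]$ keeps $J_{t}(\tau)$ uniformly invertible. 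Since $\widehat{\Theta}^{(\nu)}(t)$ in \eqref{eq:Theta_estimator} is an ordinary least squares estimator, it already admits an exact linear representation in the state residuals $E_{i}^{(\nu)}(t+1)$. Substituting these into the kernel-smoothed estimators \eqref{eq:theta_estimator_refined}--\eqref{eq:Theta_estimator_refined} produces, for each coefficient, a smoothing bias of order $h^{2}$ (second-order kernel, smooth coefficient functions) plus a smoothed term that is i.i.d.\ across days.

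Next I would linearize the CQTE functional \eqref{eqn:QTEtau}. Because \eqref{eqn:QTEtau} is a polynomial in the entries of $\{\theta(t,\tau)\}_{t}$ and $\{\Theta(t)\}_{t}$, a delta-method/Taylor expansion gives
\[
T_{\tau}-\textrm{CQTE}_{\tau}=\frac{1}{n}\sum_{i=1}^{n}\xi_{i}(\tau)+\mathrm{Bias}_{\tau}+\mathrm{Rem}_{\tau},
\]
where $\xi_{i}(\tau)$ is the aggregated influence function, $\mathrm{Bias}_{\tau}$ collects the $O(h^{2})$ smoothing biases propagated through the functional, and $\mathrm{Rem}_{\tau}$ gathers the quadratic-and-higher delta-method terms together with the Bahadur remainders. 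The delicate point is that $\xi_{i}(\tau)$ and $\mathrm{Rem}_{\tau}$ depend on $m$ through the matrix products $\prod_{l=k+1}^{t-1}\Phi(l)$ and the double sum over $(t,k)$. Invoking the stability assumption on $\Phi$ (so these products decay geometrically and the double sum behaves as $O(m)$), I would show that $\mathrm{Var}(\xi_{i}(\tau))$ is controlled uniformly, that the standardized bias is $O(\sqrt{n}h^{2})$, and that the standardized remainder contributes the $\sqrt{n}m^{-1}$ term and part of $n^{-1/8}$. A univariate Berry--Esseen bound applied to $n^{-1/2}\sum_{i}\xi_{i}(\tau)/\sigma_{\tau}$, combined with a grid-plus-equicontinuity argument in $\tau$ to upgrade to a supremum over $[\varepsilon,1-\varepsilon]$, then yields $\sup_{\tau,z}|\mathbb{P}(T_{\tau}-\textrm{CQTE}_{\tau}\le z)-G_{\sigma_{\tau}}(z)|$ up to the $n^{-1/8}$ rate, where $G_{\sigma}(\cdot)$ is the centered normal distribution function with variance $\sigma^{2}$; the degradation from $n^{-1/2}$ reflects the growing dimension $m$ and the uniformity requirement.

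For the bootstrap I would mirror every step conditional on the data. Resampling the whole error trajectories in Step~3 lets the bootstrap quantile and least-squares estimators inherit analogous conditional linear representations, with influence functions built from the plug-ins $\widetilde{\theta},\widetilde{\Theta}$ and the empirical residuals $\hat{e}_{i}(\cdot,\tau),\widehat{E}_{i}(\cdot)$. Two facts must be verified: (i) the conditional variance $\widehat{\sigma}_{\tau}^{2}$ of the bootstrap leading term converges in probability to $\sigma_{\tau}^{2}$ uniformly in $\tau$, which follows from consistency of the plug-in influence functions and a law of large numbers for the resampled residuals; and (ii) a conditional Berry--Esseen bound with the same $m$-dependence. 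Crucially, since the bootstrap DGP uses the already-smoothed $\widetilde{\theta}$ as the truth and the re-smoothing in Step~4 only re-smooths a smooth function, the bootstrap smoothing bias is of smaller order, so $T_{\tau}^{b}$ is effectively centered at $T_{\tau}$ and the $\sqrt{n}h^{2}$ term appears only on the sampling side. Assembling by the triangle inequality,
\[
\big|\mathbb{P}(T_{\tau}-\textrm{CQTE}_{\tau}\le z)-\mathbb{P}(T_{\tau}^{b}-T_{\tau}\le z\mid\textrm{Data})\big|\le |\mathbb{P}(T_{\tau}-\textrm{CQTE}_{\tau}\le z)-G_{\sigma_{\tau}}(z)|+|G_{\sigma_{\tau}}(z)-G_{\widehat{\sigma}_{\tau}}(z)|+|G_{\widehat{\sigma}_{\tau}}(z)-\mathbb{P}(T_{\tau}^{b}-T_{\tau}\le z\mid\textrm{Data})|,
\]
where the first and third terms are $O(n^{-1/8})$, the middle is $O(|\sigma_{\tau}-\widehat{\sigma}_{\tau}|)$ and negligible, giving the claimed bound with probability approaching one.

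The main obstacle is the diverging horizon $m$. Classical bootstrap and weak-convergence results such as \citet{horowitz2018bootstrap} and \citet{van1996weak} hold $m$ fixed while only $n$ grows, so the empirical-process machinery does not transfer directly. The real work is tracking the $m$-dependence through the nonlinear CQTE functional: bounding the propagated Bahadur remainders and the quadratic delta-method remainder after they are amplified by the $O(m)$-fold sums and the matrix products $\prod_{l}\Phi(l)$, and obtaining a Berry--Esseen rate that degrades only polynomially in $m$. The stability/contraction assumption on $\Phi$ is precisely what prevents these quantities from blowing up and what converts the horizon into the benign factor $m^{-1}$ in the final bound.
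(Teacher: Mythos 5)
Your proposal is correct in outline, but it takes a genuinely different route from the paper in the decisive distributional-approximation step. You and the paper share the skeleton: a uniform Bahadur representation for the quantile-regression estimators (the paper's Lemma~\ref{lem:uniform_Bahadur}), exact linearity of the least-squares step, propagation of the smoothing bias and Bahadur remainders through the CQTE functional, and the stability condition on $\Phi$ (Assumption~\ref{asmp:st1}) to keep the $O(m)$-fold sums and matrix products benign. Where you diverge is in how the sampling law and the conditional bootstrap law are compared. You pass through explicit normal limits on both sides: a univariate Berry--Esseen bound for the linearized statistic, a conditional Berry--Esseen bound for the bootstrap, and a comparison of the two Gaussians $G_{\sigma_\tau}$ and $G_{\widehat\sigma_\tau}$ via variance consistency. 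The paper never asserts asymptotic normality of $T_\tau$; it normalizes by $m^{-1}$, writes $\kappa_\tau=\kappa^*_{01,\tau}+\kappa^*_{02,\tau}$, and bounds $\rho^*\le I_1+I_2+I_3$, where the intermediate object is not a limiting normal but the Gaussian-multiplier analogue $\widetilde\kappa^b_\tau$ of the full nonlinear statistic (each day's estimated error process multiplied by a single i.i.d.\ $N(0,1)$ draw $\xi_i$): $I_2$ compares the linearized statistic to this multiplier version by comparison-of-distributions arguments in the style of \cite{chernozhukov_gaussian_2013} (via \cite{Luo2021policy}), and $I_3$ compares the multiplier version to the empirical bootstrap, with the key scalar Gaussian-comparison step $\sup_z I_{31}\lesssim \vert \mathrm{Var}(\widetilde\kappa^b_\tau)-\mathrm{Var}(\check\kappa^b_\tau)\vert^{1/3}$. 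The paper's route buys freedom from proving a CLT for the composite statistic (whose limit the authors deliberately treat as intractable) together with nonasymptotic bounds whose $m$-dependence is explicit; your route buys transparency and elementariness, and explains directly why the bootstrap works (variance matching), but it presupposes everything a normal-approximation argument needs: $\sigma_\tau$ bounded away from zero uniformly on $[\varepsilon,1-\varepsilon]$ (anti-concentration, which is not an explicit hypothesis of the theorem, though the paper's $I_1$ step also needs it implicitly), a quantified rate such as $\vert\widehat\sigma_\tau-\sigma_\tau\vert=O_p(n^{-1/8})$ rather than mere consistency, and a triangular-array Berry--Esseen bound that is uniform in $\tau$ with only polylogarithmic degradation in $m$.

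Two small corrections to your accounting. First, the $\sqrt{n}\,m^{-1}$ term is not a delta-method remainder: in the paper it arises, together with $\sqrt{n}h^2$, from the kernel-smoothing bias over the discrete time grid (Riemann-sum error), and the paper's bound on $I_1$ is $C(\sqrt{n}h^2+\sqrt{n}m^{-1}+n^{-1/4}\log^3(nm)+n^{-1/2}\log^4(nm))$, with the nonlinear and Bahadur remainders absorbed into the last two terms. Second, the bootstrap side is not exactly bias-free as you claim: Step~4 re-smooths estimates built from the already-smoothed $\widetilde\theta,\widetilde\Theta$, which contributes an $O_p(h^2)$ shift; this is harmless only because the stated bound already carries $\sqrt{n}h^2$.
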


 Under the conditions specified in Theorem \ref{thm:bootstrap_consistency_QTE}, the error bound delineated in Theorem \ref{thm:bootstrap_consistency_QTE} tends towards zero. When the null hypothesis holds true, we derive that
$
\prob(T_{\tau} \leq z)\le \prob( T^b_{\tau}-T_{\tau} \leq z |\textrm{Data})+ o_p(1),
$
where the little-$o_p$ term uniformly applies to $z$ and $\tau$. When the alternative hypothesis is true and the QTE signal satisfies $m^{-1}$QTE$_{\tau}\gg n^{-1/2}
\log(n m) $, the power of the proposed test method approaches $1$ (refer to the proof of Theorem \ref{thm:bootstrap_consistency_QTE} for more details). Consequently, the consistency of the proposed test is demonstrated. 

There are two significant challenges in establishing
Theorem \ref{thm:bootstrap_consistency_QTE}: (i) the non-differentiable nature of the checkloss function, and (ii) the allowance for $m$ to diverge with $n$. In order to address the first challenge, we utilize classical M-estimation theory from the literature to obtain a Bahadur representation of the proposed estimator \citep[see, for instance,][]{koenker1987estimation}. 
To tackle the second challenge, we draw from arguments similar to those from the high-dimensional multiplier bootstrap theorem \citep{chernozhukov_gaussian_2013} to generate a nonasymptotic error bound that explicitly characterizes the dependence of the bootstrap approximation error on $m$.

Secondly, we  establish  the consistency of the proposed test procedure in spatiotemporal dependent experiments.

\begin{thm}
\label{thm:bootstrap_consistency_QTE_st}
Suppose that Assumption \ref{assump:kernel} and
Assumptions \ref{assump:F_e_spatial}--\ref{asmp:st1_spatial} of the supplement material hold,
$h, h_{st}=o(n^{-1/4})$, $m, r \asymp n^{c_2}$ for some $1/2 < c_2<3/2$ and $rh_{st}, mh\to\infty$, then
\begin{eqnarray*}
	\sup_{ \tau \in [\varepsilon, 1- \varepsilon] } \sup_{z}|\prob(T_{\tau st}-\textrm{CQTE}_{\tau st}  \leq z)-\prob( T^b_{\tau st}-T_{\tau st} \leq z |\textrm{Data}) \vert \\
	\leq  C[\sqrt{n}(h^2 +h_{st}^2) + \sqrt{n} ( m^{-1} + r^{-1})  
	+n^{-1/8}]. 
\end{eqnarray*}
with probability approaching 1, for some $\varepsilon \in (0,1)$ and some positive constant $C$.
\end{thm}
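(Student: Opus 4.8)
The plan is to mirror the architecture of the proof of Theorem \ref{thm:bootstrap_consistency_QTE}, adding a second (spatial) smoothing layer and accounting for the cross-region dependence of the error processes. Throughout I would treat each of the $n$ days as one i.i.d.\ draw of the entire trajectory $\{(S_{i,t,\iota},A_{i,t,\iota},Y_{i,t,\iota})\}_{t,\iota}$, so that independence across days supplies the randomness driving the central-limit behaviour, while the within-day temporal dependence and the cross-region spatial dependence are carried inside each summand. First I would establish a uniform Bahadur representation for the pilot estimators: the quantile-regression estimator $\widehat\theta(t,\iota,\tau)$ via classical M-estimation theory for the non-differentiable check loss \citep{koenker1987estimation}, and the least-squares estimator $\widehat\Theta(t,\iota)$ via a standard expansion, both uniform over $\tau\in[\varepsilon,1-\varepsilon]$, $t=1,\dots,m$ and $\iota=1,\dots,r$. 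Empirical-process and maximal-inequality arguments supply the uniformity, with the logarithmic dependence on $m$ and $r$ producing the $\log(nm)$-type factors eventually absorbed into the $n^{-1/8}$ rate.

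Second I would propagate these representations through the two-stage kernel smoothing of Section \ref{sec:inference_st}. The temporal smoothing contributes a bias of order $h^2$ together with a grid-discretization error of order $m^{-1}$, and the subsequent spatial smoothing contributes an analogous bias $h_{st}^2$ and discretization error $r^{-1}$; under the stated smoothness of the varying coefficients these are controlled by second-order Taylor expansions of the kernel weights $\omega_{j,h}$ and $\kappa_{\ell,h_{st}}$. Multiplying by the $\sqrt{n}$ scaling of the statistic yields precisely the $\sqrt{n}(h^2+h_{st}^2)$ and $\sqrt{n}(m^{-1}+r^{-1})$ contributions in the bound. The smoothing also reduces variance, so the smoothed estimators $\widetilde\theta(t,\iota,\tau)$ and $\widetilde\Theta(t,\iota)$ admit their own linear influence-function expansions that I would carry into the next step.

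Third I would linearize the plug-in statistic $T_{\tau st}=\widehat{\textrm{CQTE}}_{\tau st}$ around the truth using the closed form in Proposition \ref{prop:QTE_form_spatial}. Because the map from $(\beta,\gamma_1,\gamma_2,\Phi,\Gamma_1,\Gamma_2)$ to $\textrm{CQTE}_{\tau st}$ is polynomial, in particular involving the matrix products $\prod_{l=k+1}^{t-1}\Phi(l,\iota)$, a first-order delta-method expansion reduces $T_{\tau st}-\textrm{CQTE}_{\tau st}$ to a sum over days of approximately i.i.d.\ mean-zero influence terms, plus the bias already accounted for and a quadratic remainder. I would then invoke a high-dimensional Gaussian-approximation and multiplier-bootstrap argument in the spirit of \citet{chernozhukov_gaussian_2013}, treating the process indexed by $(\tau,z)$ in sup-norm, to show that both $T_{\tau st}-\textrm{CQTE}_{\tau st}$ and its bootstrap analogue $T^b_{\tau st}-T_{\tau st}$ are close in Kolmogorov distance to the same Gaussian law; matching the influence-function covariance of the resampled error processes to that of the data then closes the argument via the triangle inequality and delivers the $n^{-1/8}$ term.

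The main obstacle I anticipate is twofold and specific to the spatial extension. First, the error process $E_i(\cdot,\iota)$ is permitted to be dependent across regions, so spatial smoothing mixes correlated components and the variance of $\widetilde\Theta(t,\iota)$ is not a simple sum of per-region variances; I would need to control these cross-region covariances uniformly, exploiting the spatial smoothness and the decay of $\kappa_{\ell,h_{st}}$, to justify the Gaussian approximation. Second, the matrix products $\prod_{l}\Phi(l,\iota)$ can grow uncontrollably as $m$ diverges unless their operator norms are uniformly bounded; showing that the delta-method remainder stays negligible, i.e.\ that estimation errors do not compound through these lengthening products, while simultaneously permitting $m,r\asymp n^{c_2}$ to diverge and handling the non-differentiable check loss, is the crux of the argument and the step that most delicately ties together the diverging dimensions and the two bandwidths.
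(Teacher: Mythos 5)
Your proposal is correct and takes essentially the same route as the paper: the paper's own proof of this theorem is a one-line deferral to the proof of Theorem \ref{thm:bootstrap_consistency_QTE}, and your plan — uniform Bahadur representation over $(\tau,t,\iota)$, propagation of the temporal and spatial smoothing biases into the $\sqrt{n}(h^2+h_{st}^2)$ and $\sqrt{n}(m^{-1}+r^{-1})$ terms, and a Gaussian-comparison/multiplier-bootstrap argument in the spirit of \cite{chernozhukov_gaussian_2013} producing the $n^{-1/8}$ term — is exactly the adaptation of that argument. The two difficulties you flag (cross-region dependence of the errors, handled because days are i.i.d.\ and whole error processes are resampled, and the growth of the products $\prod_{l}\Phi(l,\iota)$, controlled by the stationarity condition $\Vert\Phi(t,\iota)\Vert_\infty\leq q<1$ in Assumption \ref{asmp:st1_spatial}) are precisely the points the paper's omitted details would have to address, mirroring how the temporal proof uses Assumption \ref{asmp:st1}.
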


Based on Theorem \ref{thm:bootstrap_consistency_QTE_st}, we can demonstrate that the proposed test effectively controls the type-I error and its power tends towards $1$ when $m^{-1} r^{-1}$QTE$_{\tau st}\gg n^{-1/2} \log(n m r) $. More details can be found in the proof of Theorem \ref{thm:bootstrap_consistency_QTE_st}. 
Furthermore, we have provided additional theoretical properties of the proposed test statistics for CQDE and CQIE in Section \ref{sec: decomposition theoretical results} of the supplementary material.

\section{Real Data Analysis}
\label{sec:real data}

To address (Q1)-(Q3), we apply the proposed test procedures to the three real datasets obtained from Didi Chuxing introduced in Section \ref{sec:data}. 

Firstly, we examine the dataset from a temporally dependent A/B experiment conducted from Dec 10, 2021 to Dec 23, 2021. As detailed in Section \ref{sec:data}, two order dispatch policies are tested in alternating one-hour time intervals. The new policy, in comparison to the old one, is designed to fulfill more call orders and elevate drivers' total income. We set drivers' total income as the outcome, and the observation variables include the number of call orders and drivers' total online time. 
To address question (Q1), we apply model \eqref{model:TQVCM DE} to elucidate the correlation structure between supply and demand and model \eqref{model:TQVCM IE} to elucidate the temporal interference effects. For question (Q3), we utilize the testing procedure described in Section \ref{subsec:estimation_temporal} for these temporally dependent experiments. 
As a means to validate the proposed test, we also apply our procedure to the A/A dataset outlined in Section \ref{sec:data}, where a single order dispatch strategy is employed. We anticipate that our test will not reject the null hypothesis when applied to this dataset.

In Figure \ref{fig:residuals_temporal}, we display the estimated residuals of the outcome over time for $\tau \in {0.1, 0.5, 0.9}$. As can be seen from Figure \ref{fig:residuals_temporal}, some residuals are significantly larger than others, suggesting that the outcome likely originates from heavy-tailed distributions. This reinforces the use of quantile treatment effects for policy evaluation. 
Table \ref{table:temporal_results} presents the $p$-values of the proposed test for CQTE$_{\tau}$, CQDE$_{\tau}$, and CQIE$_{\tau}$, respectively. Furthermore, Figure \ref{fig:p-values_temporal} illustrates the estimated treatment effects and the p-values across various quantiles. As expected, the proposed test does not reject the null hypothesis at any quantile level when applied to the A/A experiment. 
However, when applied to the A/B experiment, the new policy demonstrates significant quantile direct effects on the business outcome at most quantile levels. In contrast, the indirect effects are not significant.

\begin{figure}[!t]
\centering
\includegraphics[height=3.5cm, width=4.5cm]{./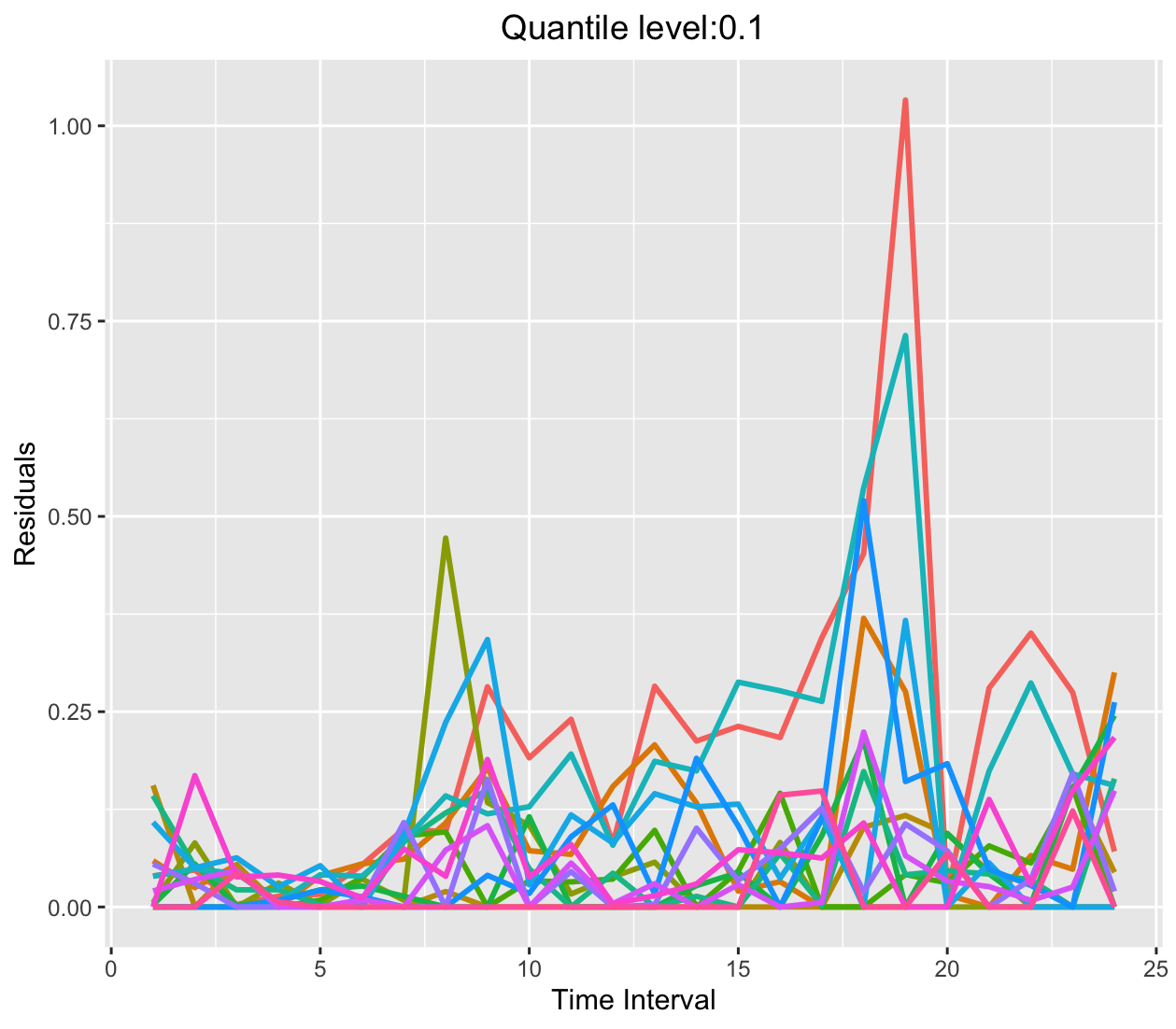}
\includegraphics[height=3.5cm, width=4.5cm]{./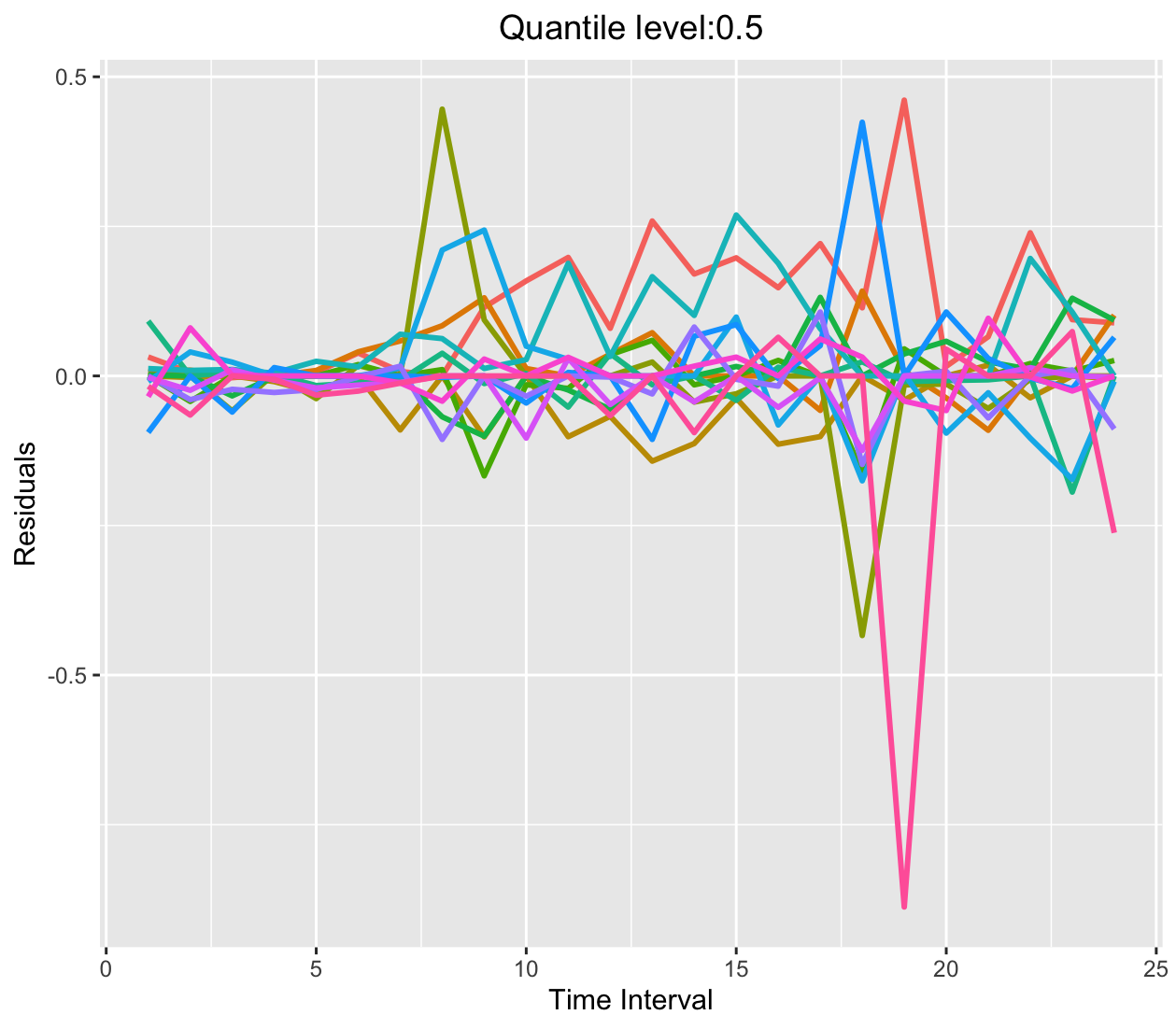}
\includegraphics[height=3.5cm, width=4.5cm]{./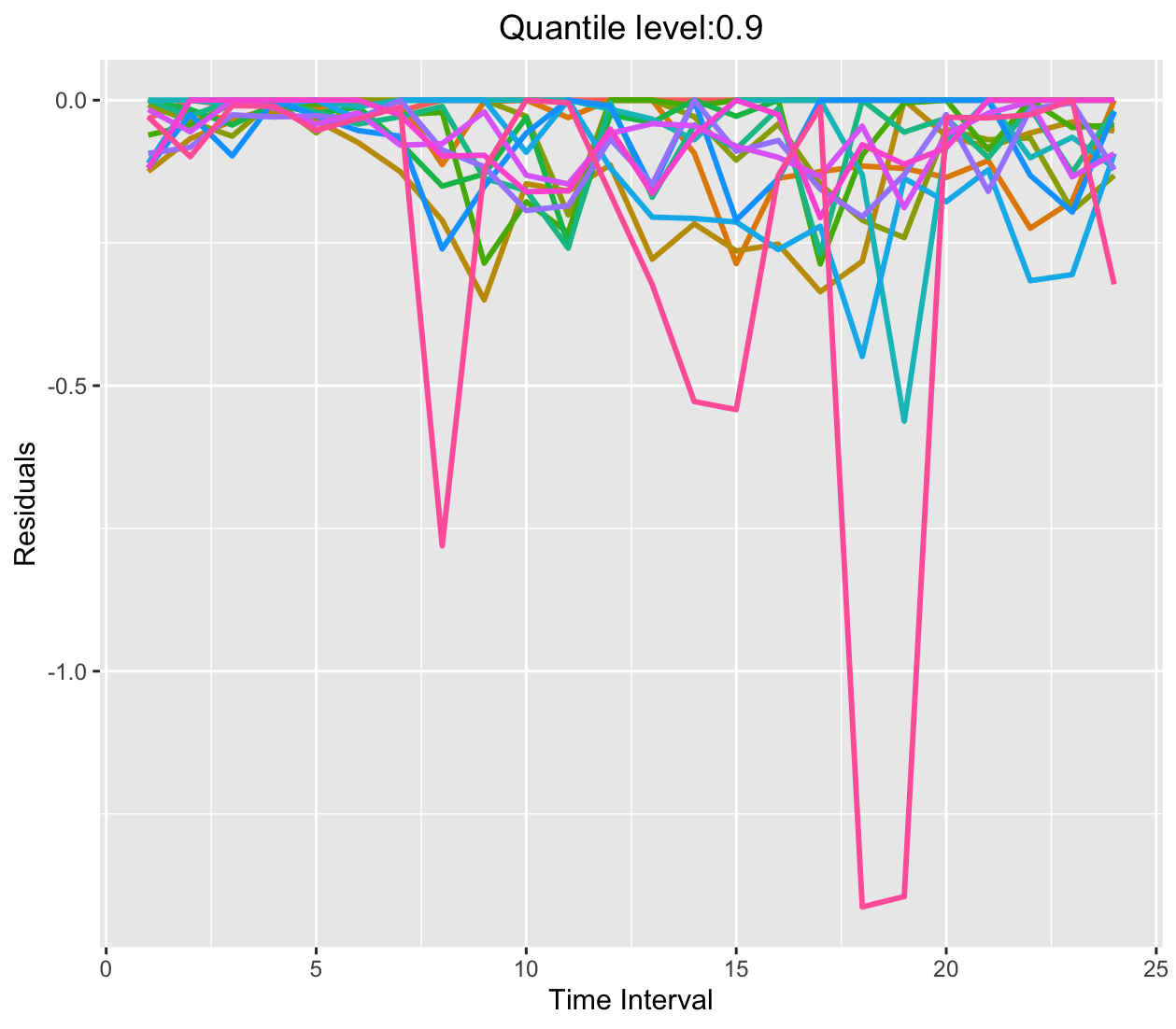} 
\caption{\small Estimated residuals of drivers' total income at quantile levels 0.1, 0.5, and 0.9 in the temporal dependent experiment.}
\label{fig:residuals_temporal}
\end{figure}

\begin{table}
\begin{center}
\small 
\caption{
	$p$-values of the proposed test for CQDE$_{\tau}$ and CQIE$_{\tau }$ for both datasets from the A/A experiment and A/B experiment, utilizing the time-alternation design.  
	}
\label{table:temporal_results}

\tabcolsep 8pt
\begin{tabular}{cccccccccccc}
	\hline
	\hline
	&  \multicolumn{2}{c}{pvalues for AA} & &	\multicolumn{2}{c}{pvalues for AB}  \\
	\cline{2-4} \cline{5-6}
	$\tau$  & $\text{CQDE}_{\tau }$   &  $\text{CQIE}_{\tau }$  & &   $\text{CQDE}_{\tau }$  & CQIE$_{\tau }$   \\
	\hline
	0.1    &  0.286  &  0.084   &  &  0.208	& 0.076 \\
	0.2  &   0.522   &  0.096   &   & 0.080 & 0.060 \\
	0.3	 &   0.53	& 0.098  &  & 0.002 &	0.068 \\
	0.4	 &   0.568	&  0.122 &  &  0.010 &	0.086 \\
	0.5	 &  0.536	&  0.116  &  &  2e-4 &	0.072 \\
	0.6	 &  0.464	&  0.100  &  &  0.002  &	0.068 \\
	0.7	&  0.548	&  0.102 &  &  7e-4  &	0.092 \\
	0.8	&  0.606	&  0.108  &  &  2e-4  &	0.068 \\
	0.9	&  0.322	&  0.102  &  &  7e-5 &	0.100 \\         
	\hline \hline     
\end{tabular}
\end{center}
\end{table}

\begin{figure}
\centering
\includegraphics[height=3.5cm, width=4cm]{./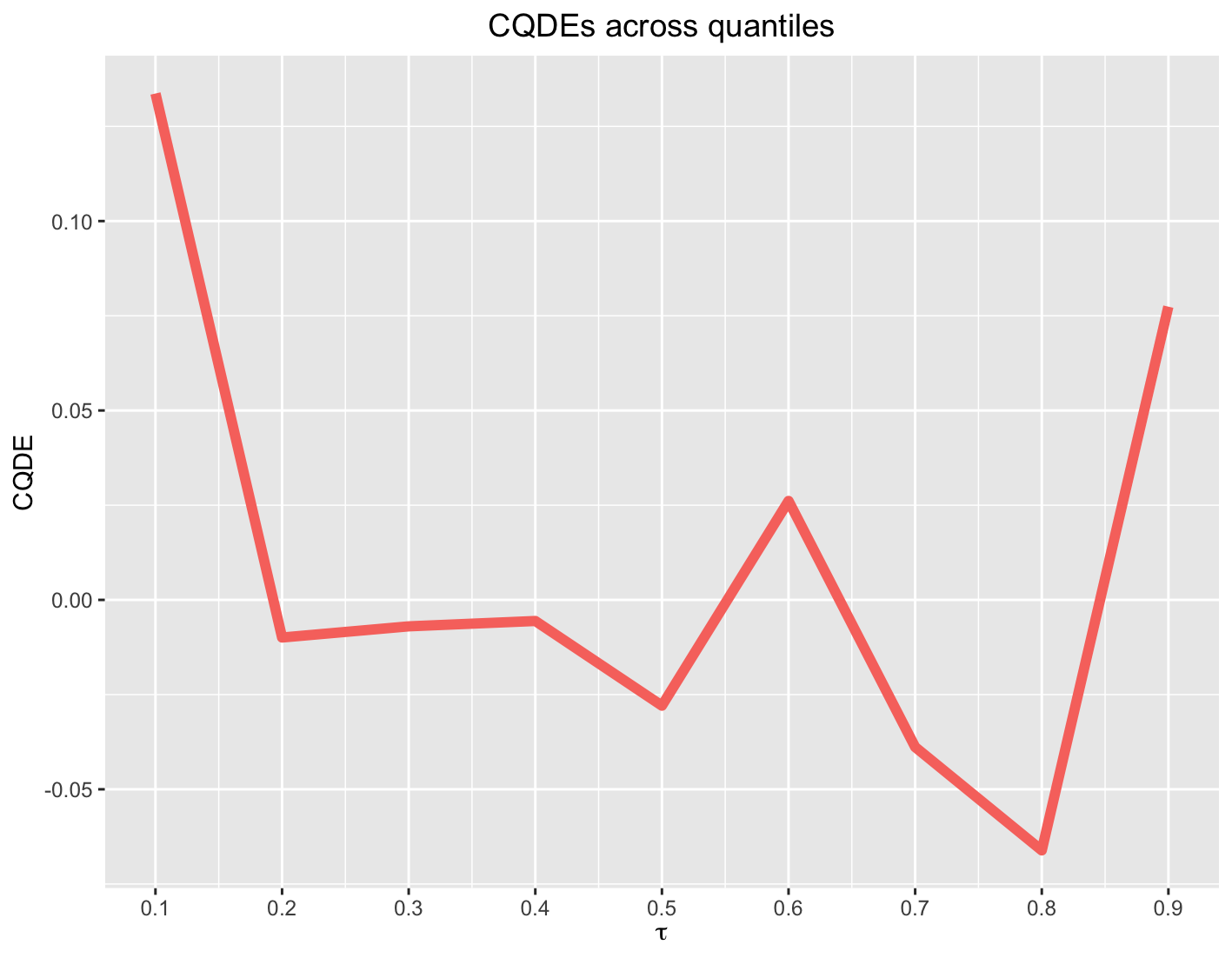}
\includegraphics[height=3.5cm, width=4cm]{./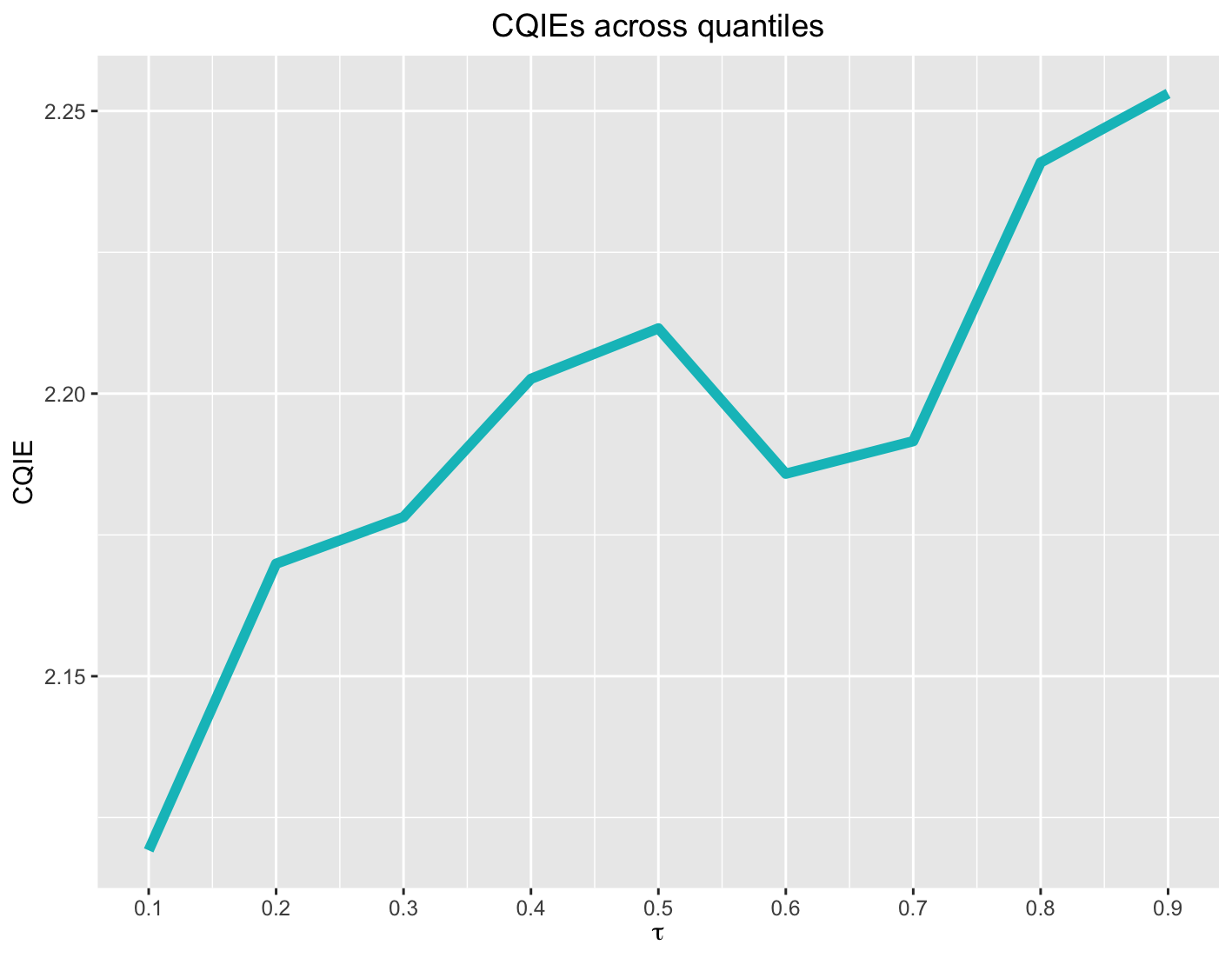}
\includegraphics[height=3.5cm, width=5cm]{./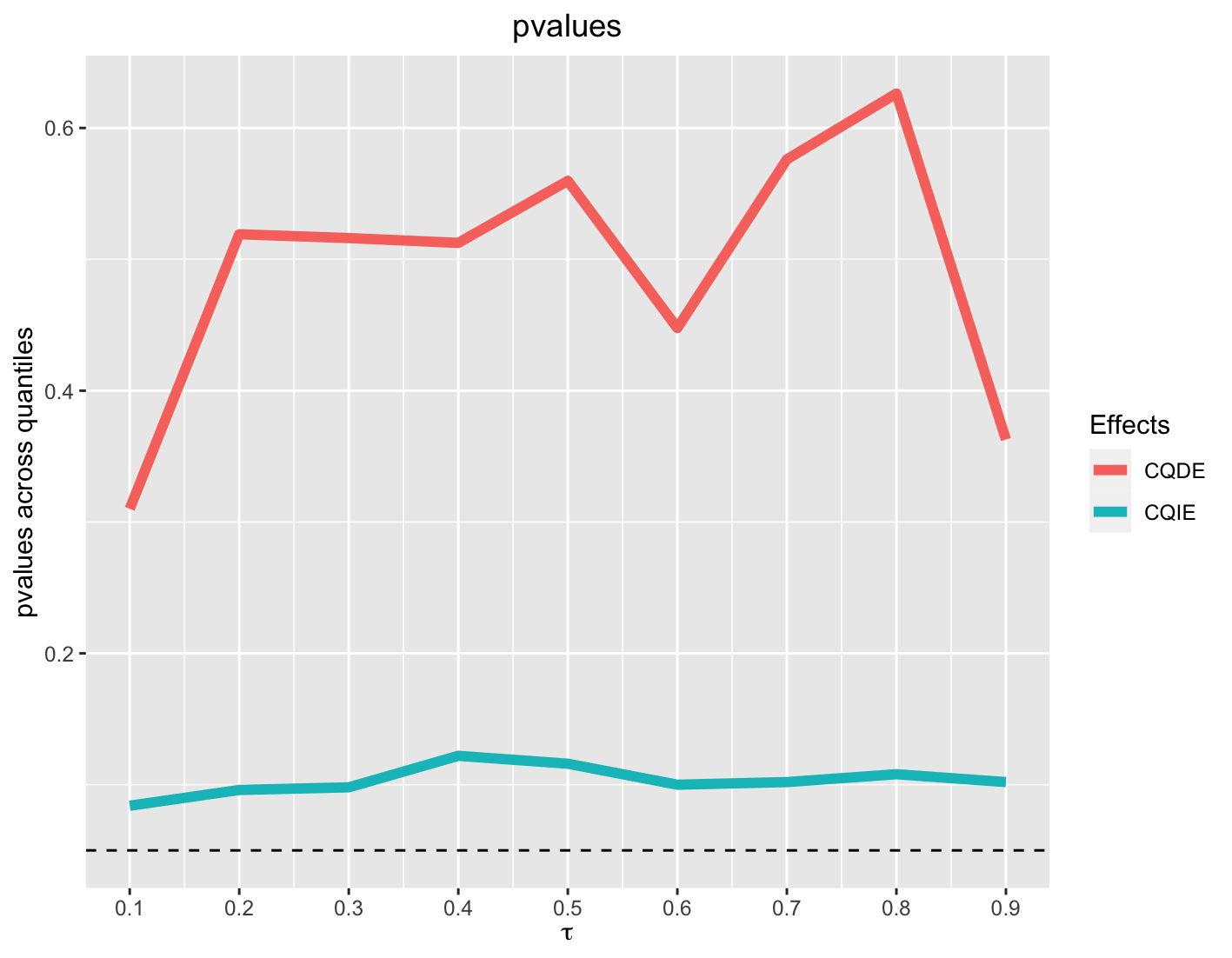} \\
\includegraphics[height=3.5cm, width=4cm]{./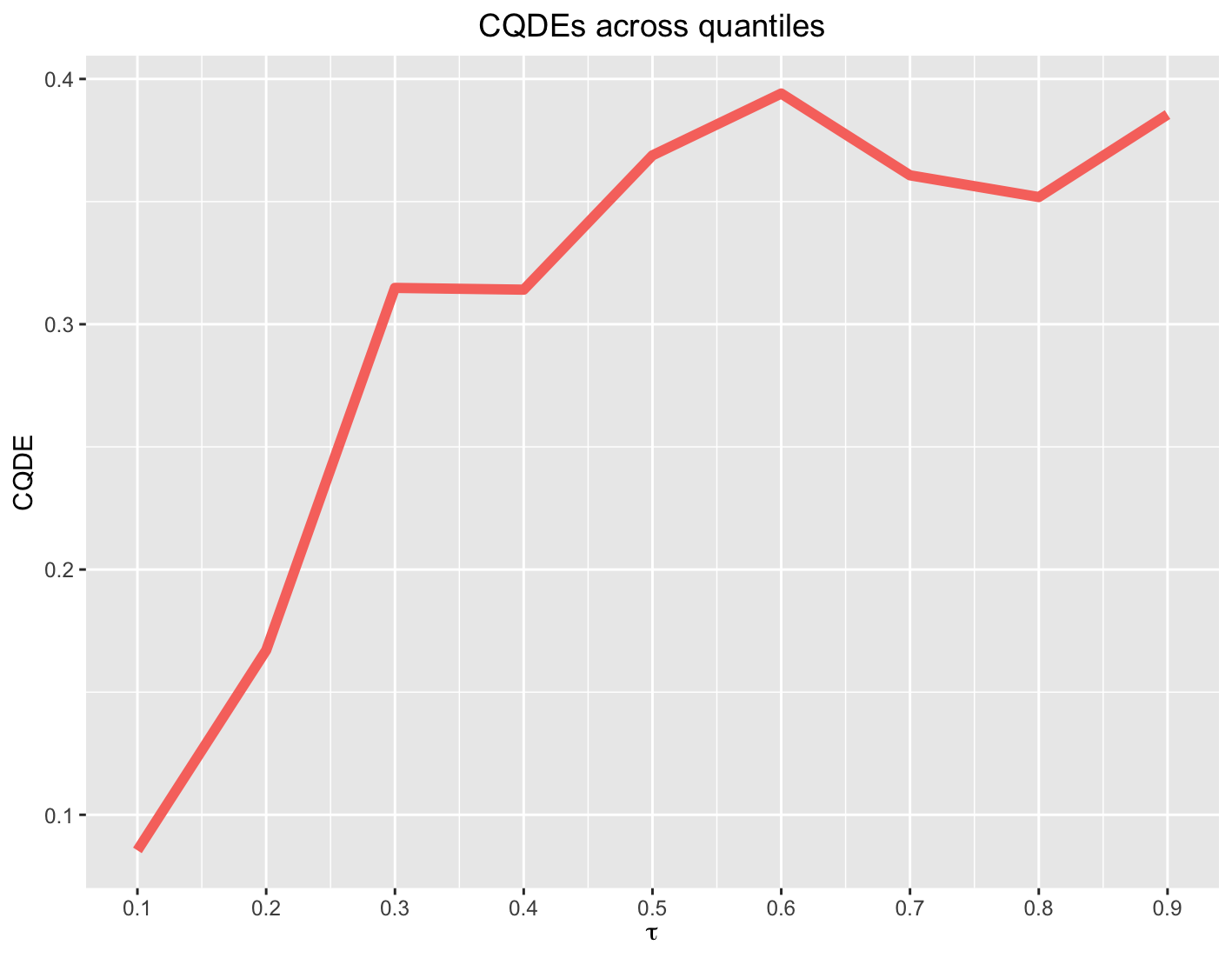}
\includegraphics[height=3.5cm, width=4cm]{./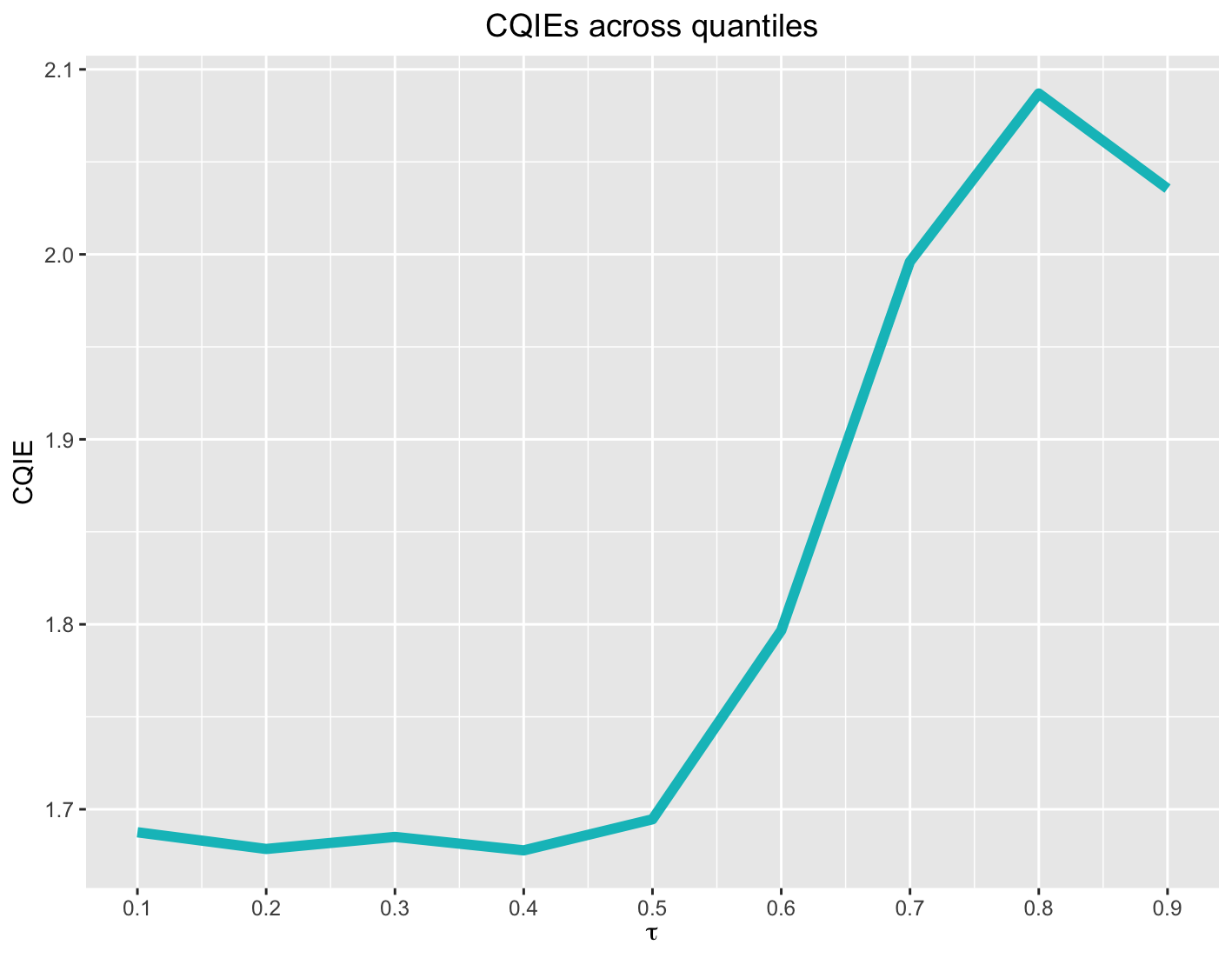}
\includegraphics[height=3.5cm, width=5cm]{./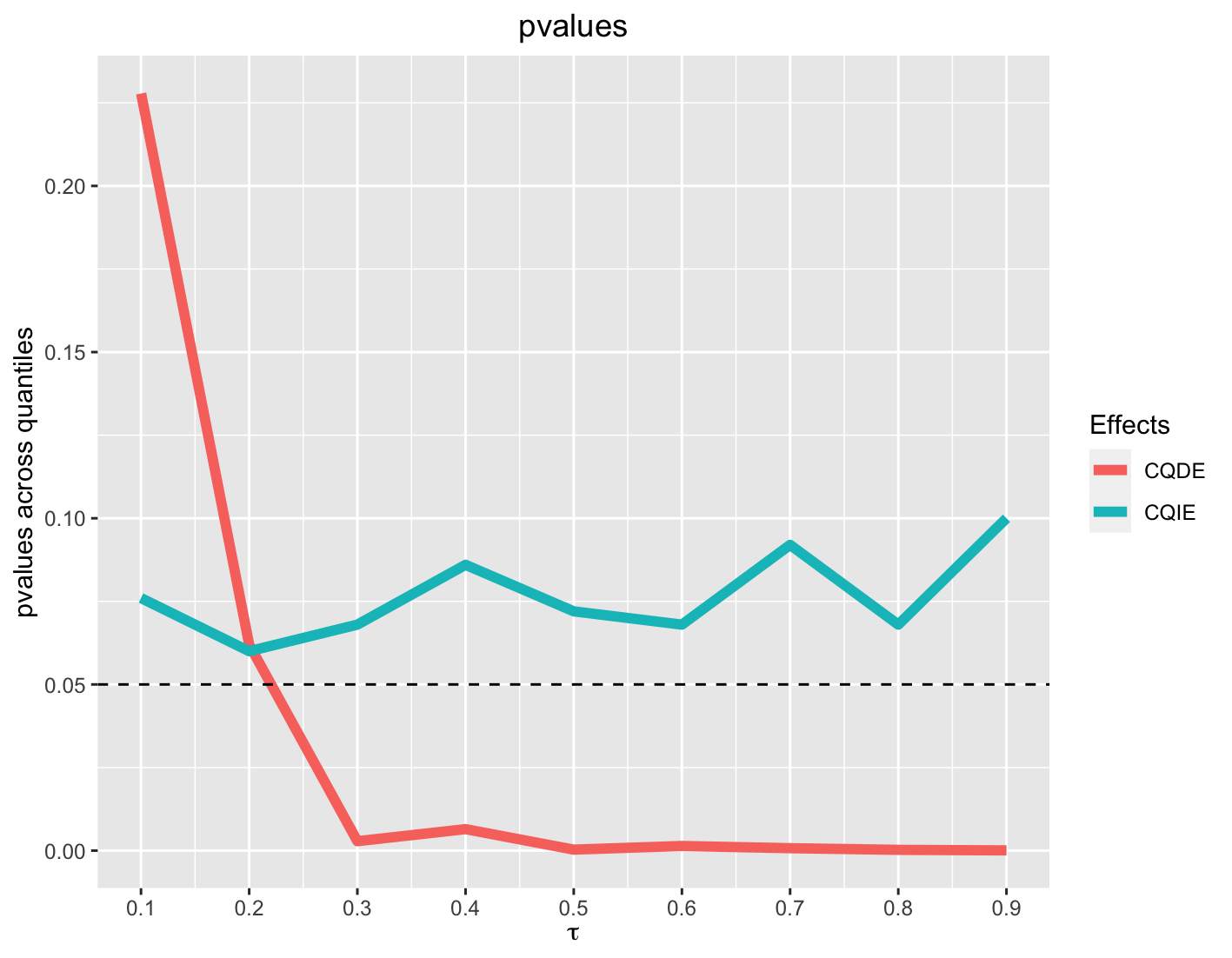} 
\caption{\small 
Estimates of CQDE$_{\tau}$ and CQIE$_{\tau}$ and their $p$-values across quantile levels for the A/A experiment (top panels) and A/B experiment (bottom panels) under the temporal design. }
\label{fig:p-values_temporal}
\vspace{-0.1in}
\end{figure}

Secondly, we analyze the dataset from the spatiotemporal dependent experiment as described in Section \ref{sec:data}. Recall that in this experiment, the city is divided into 12 regions. Policies are implemented based on alternating 30-minute time intervals within each region. We concentrate on a data subset collected from 7 am to midnight each day, as there are relatively few order requests from midnight to 7 am. The drivers' total income and the number of call orders are designated as the outcome and state variable, respectively. 
We fit the spatiotemporal VCDP models \eqref{model:STQVCM DE} and \eqref{model:STQVCDP IE} to address (Q2), and apply the testing procedure from Section \ref{sec:inference_st} to address (Q3) for this spatiotemporal dependent experiment. 
Our aim is to determine whether the new policy has significant treatment effects on drivers' total income across various quantile levels.

For each quantile level, we implement the proposed estimation and testing procedures on the data. The $p-$values are generated through the bootstrap procedure outlined in Section \ref{sec:ST method}, utilizing 500 bootstrap samples. The estimation and testing results for CQDE$_{\tau st}$ and CQIE$_{\tau st}$ are summarized in Table \ref{table:spatial_results} and Figure \ref{fig:p-values}. The treatment effects are significant at most quantile levels, and both the estimated direct and indirect effects are positive across all quantiles. Generally, these effects escalate with the quantile level. However, the new policy doesn't seem to boost the lower quantile of the outcome (e.g., $\tau=0.1$). These results underline the heterogeneous effects of the new policy across different quantile levels.

Finally, we display the scaled outcomes, and residuals for  the representative region 5 over time, with $\tau \in \{0.1, 0.5, 0.9\}$, in Figure \ref{fig:residuals}. It is evident that there may be several outliers in the data. This observation further supports the use of quantiles as the evaluation metric. Similar patterns are observed for other regions as well.

\begin{table}[!t]
\begin{center}
\caption{\small $p$-values and estimators of CQDE$_{\tau st}$ and CQIE$_{\tau st}$ for the spatiotemporal data.}
\label{table:spatial_results}

\tabcolsep 8pt
\begin{tabular}{cccccccccccc}
	\hline
	\hline
	$\tau$ & pvalue$_{ \text{CQDE}_{\tau st}}$   & pvalue$_{ \text{CQIE}_{\tau st}}$  &  $\widehat{\textrm{ CQDE} }_{\tau st}$ &  $\widehat{\textrm{ CQIE} }_{\tau st}$   \\
	\hline
	0.1	&  0.290  &  0.024   &  1.566  &  14.153 		\\
	0.2	&  0.072  &  0.036   &  3.403  &  15.002 		\\
	0.3	&  0.026  &  0.020   &  4.022  &  16.032		\\
	0.4	&  0.032  &  0.016   &  3.678  &  16.939		\\
	0.5	&  0.010  &  0.022   &  5.482  &  17.725	 	\\
	0.6	&  0.004  &  0.020   &  5.902  &  18.559	\\
	0.7	&  0.004  &  0.022   &  7.139  &  19.535	 	\\
	0.8	&  0.006  &  0.014   &  5.746  &  20.473	 	\\
	0.9	&  7e-4   &  0.008   &  8.414  &  21.320		\\    
	\hline \hline     
\end{tabular}
\end{center}
\end{table}

\begin{figure}[t]
\centering
\includegraphics[height=3.5cm, width=4cm]{./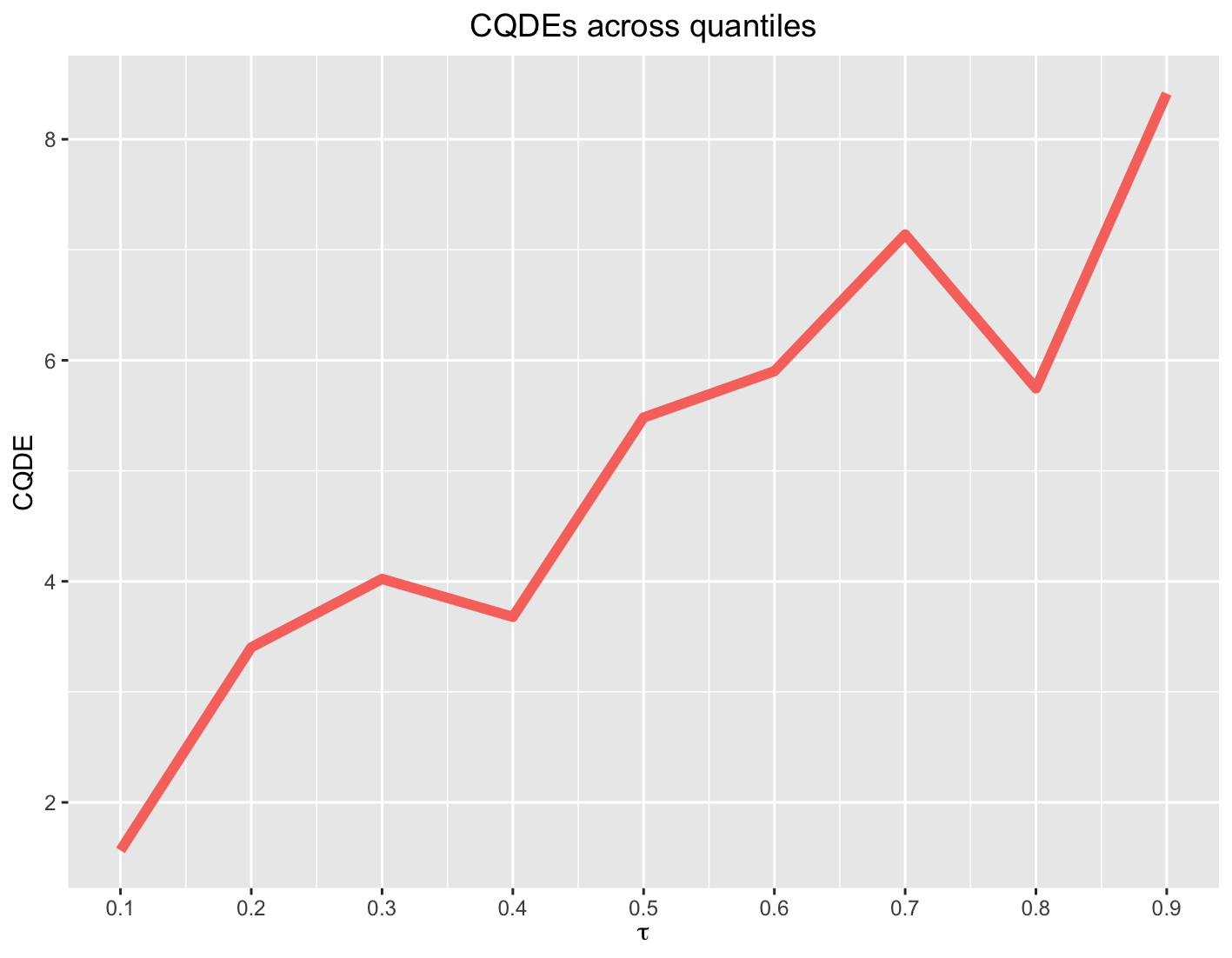}
\includegraphics[height=3.5cm, width=4cm]{./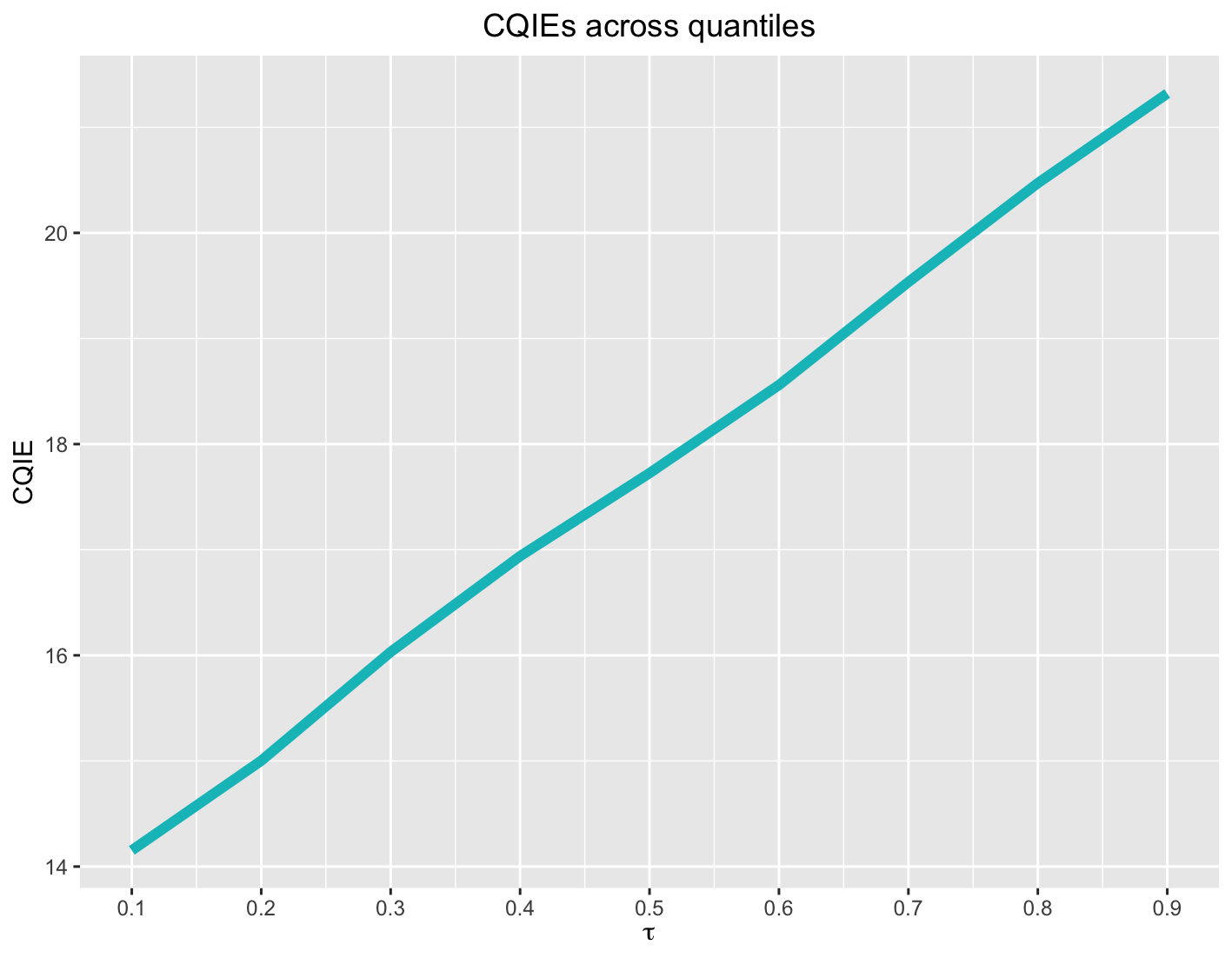}
\includegraphics[height=3.5cm, width=5cm]{./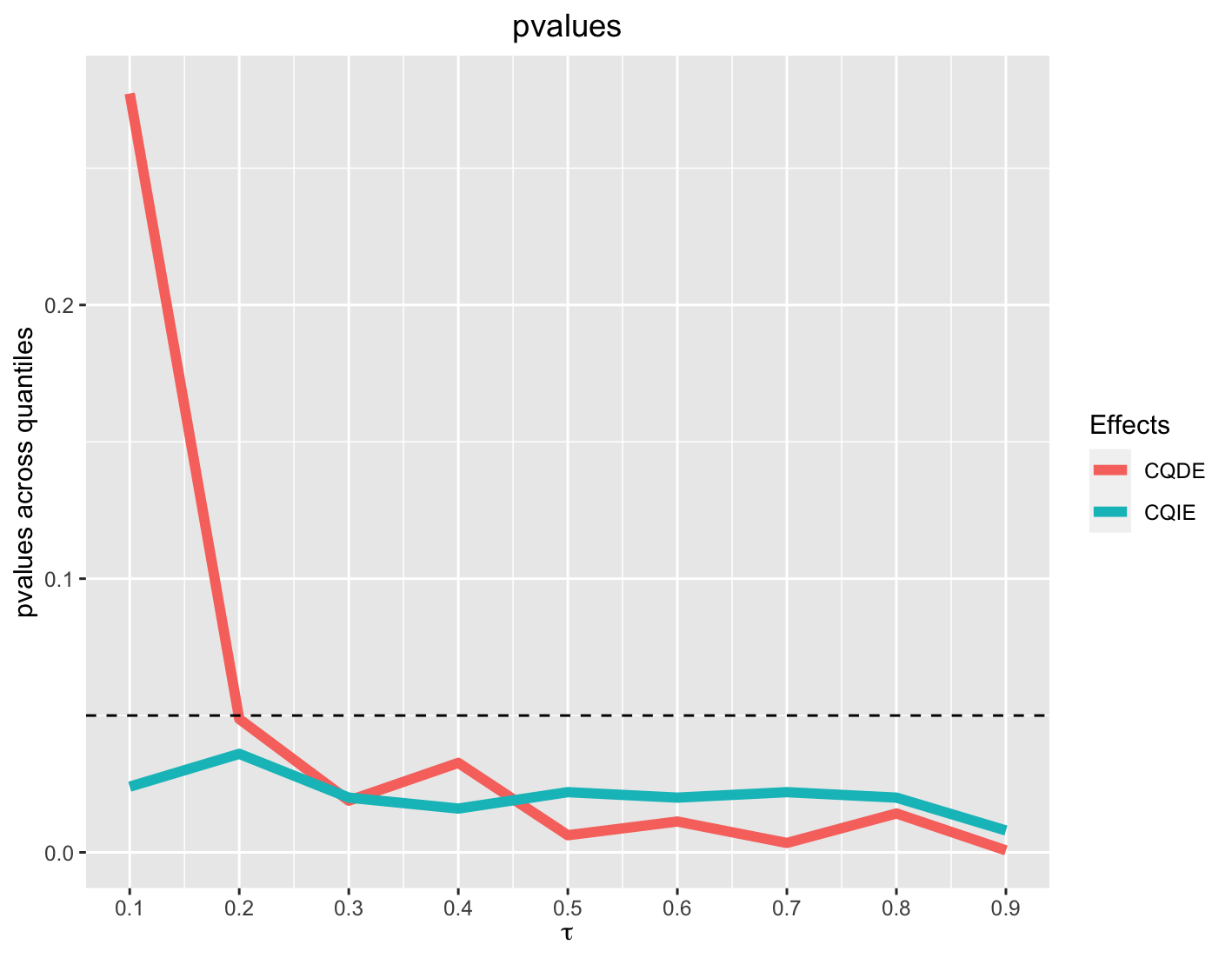}
\caption{\small Estimates of CQDE$_{\tau st}$ and CQIE$_{\tau st}$ and pvalues for the spatiotemporal data across quantiles.}
\label{fig:p-values}
\end{figure}

\begin{figure}[!t]
\centering
\includegraphics[height=3.5cm, width=3.8cm]{./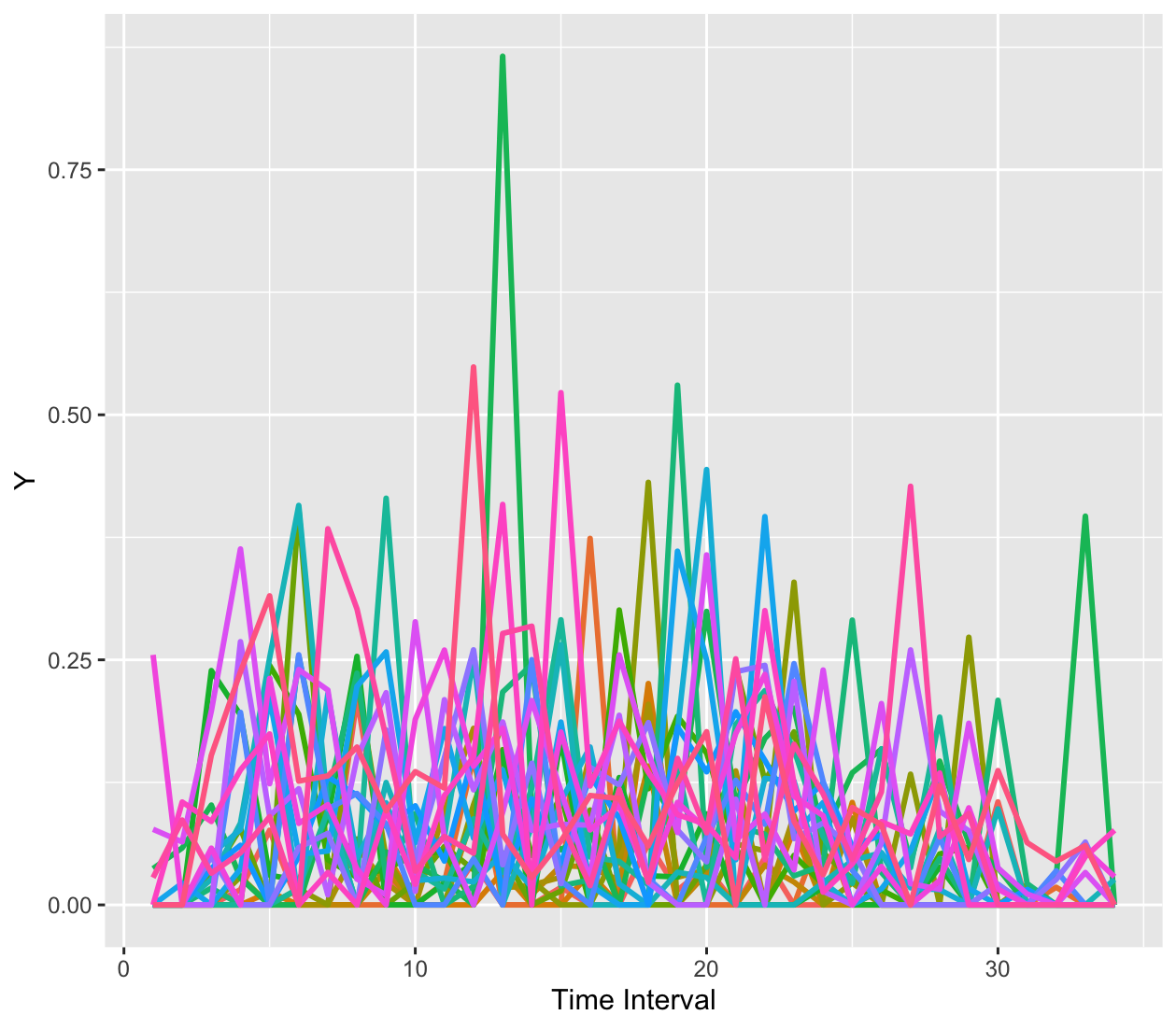}
\includegraphics[height=3.5cm, width=3.8cm]{./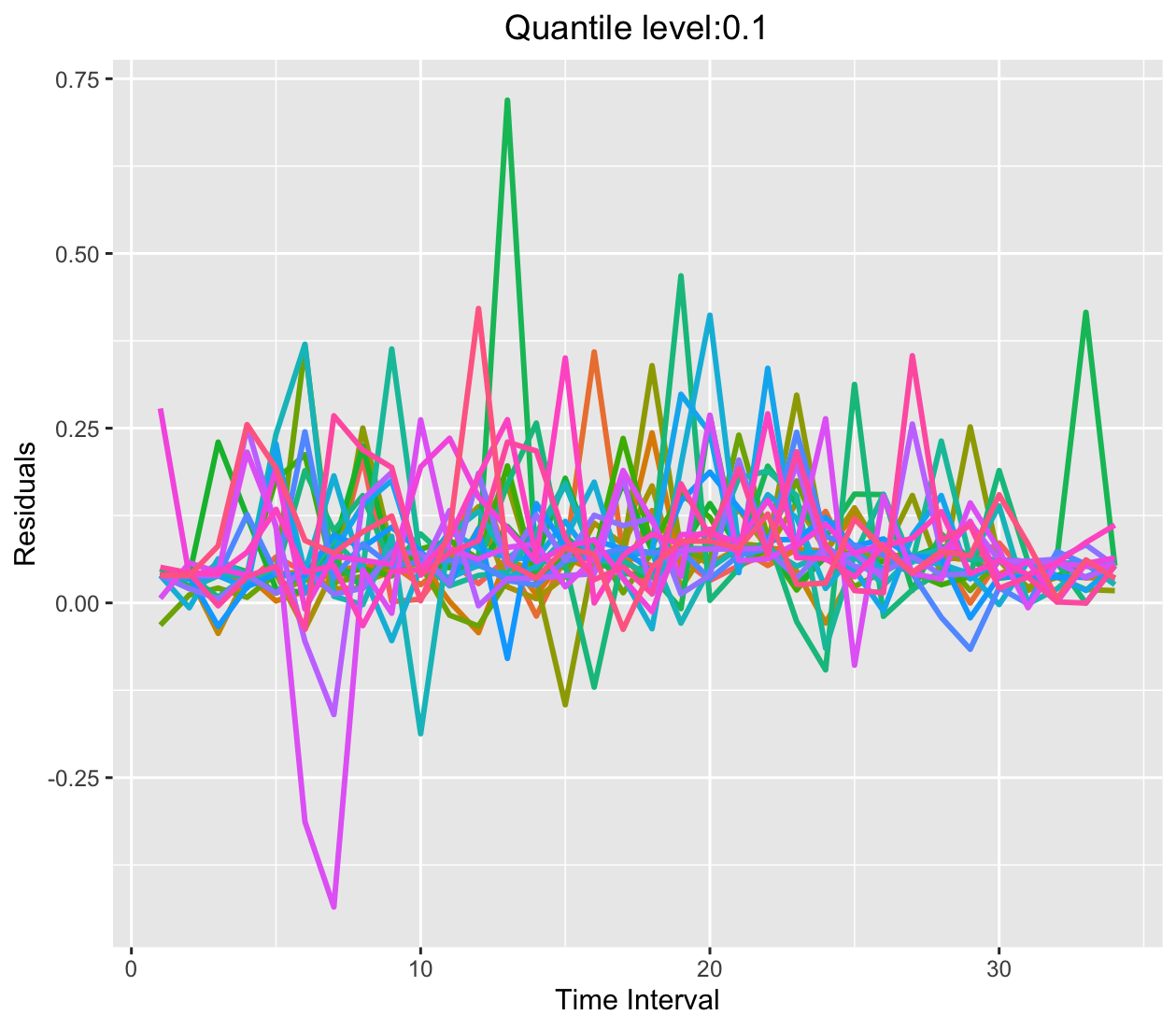}
\includegraphics[height=3.5cm, width=3.8cm]{./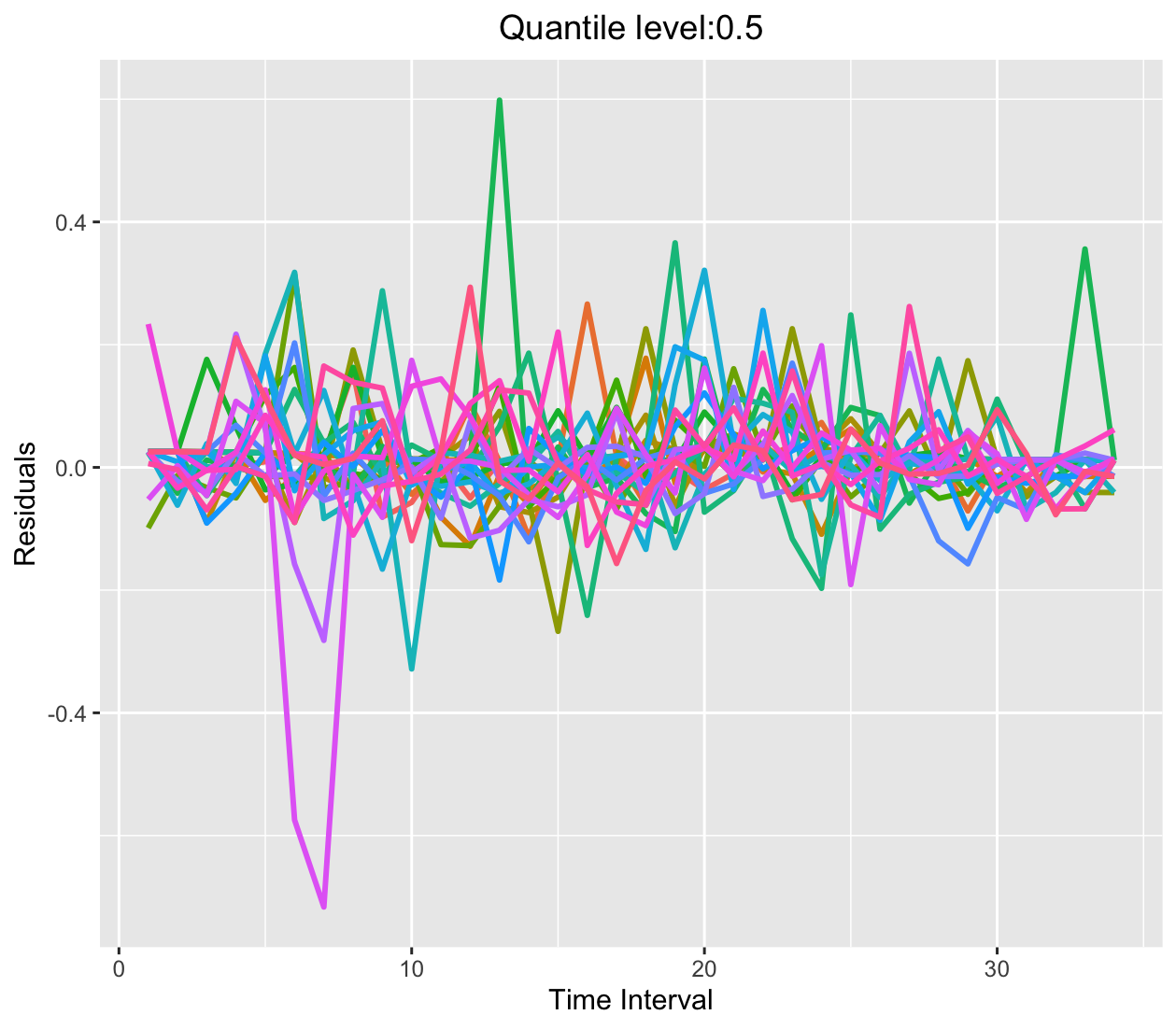}
\includegraphics[height=3.5cm, width=3.8cm]{./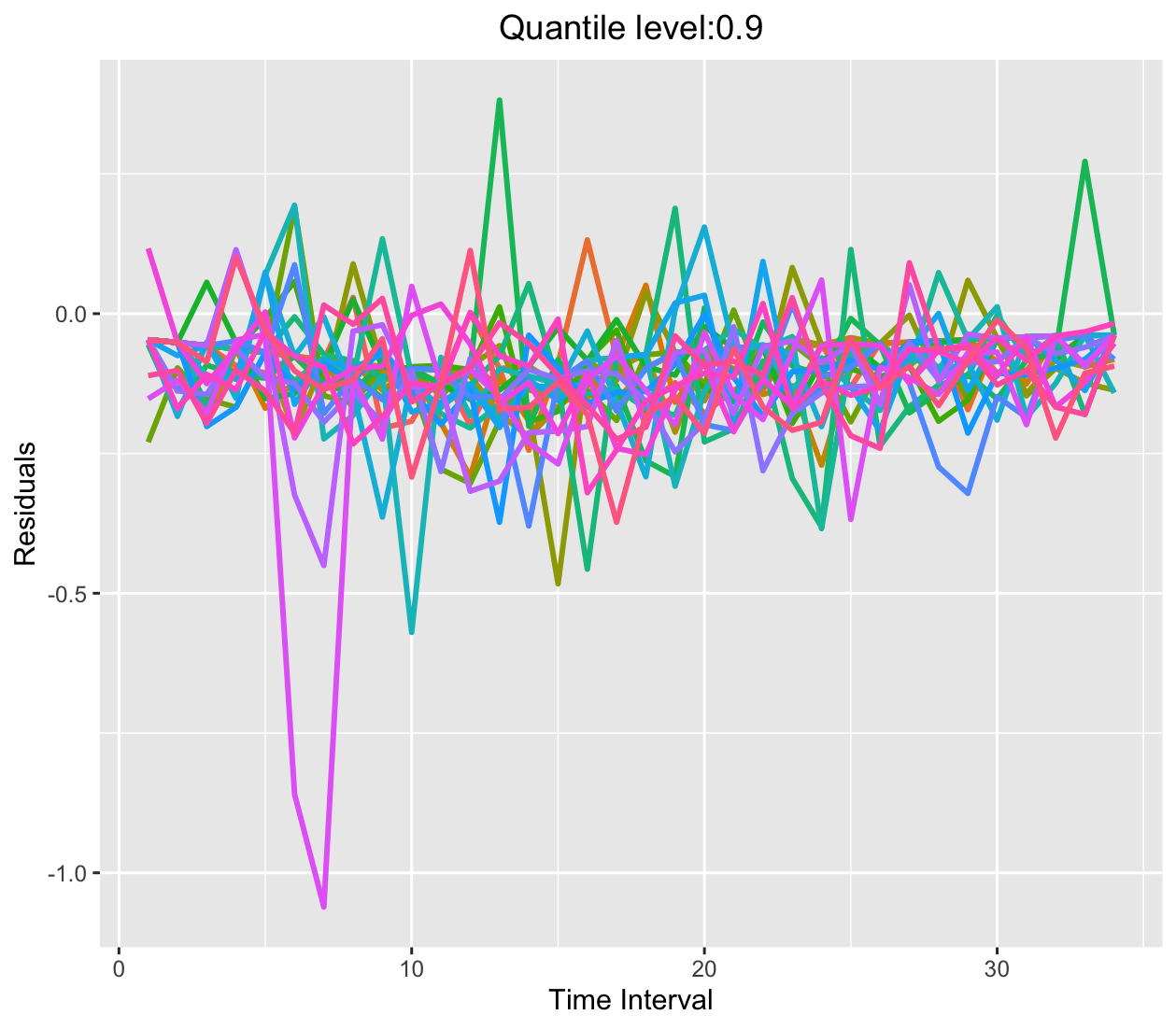} 
\caption{\small Scaled values of drivers' total income and their estimated residuals at quantile levels 0.1, 0.5, 0.9 in Regions 5. } 
\label{fig:residuals}
\end{figure}

\section{Real data based simulations}
\label{sec:simulation}

In this section, we evaluate the finite sample performance of the proposed estimation and testing procedures through simulations. Simulation experiments are conducted based on the real dataset collected from the A/A experiment described in Section \ref{sec:data}. Recall that one hour is defined as a time unit, and drivers' total income within each time unit is set as the outcome of interest. The observation variables correspond to the number of call orders and drivers' total online time. These variables characterize the demand and supply of the ridesharing platform and have a substantial impact on the outcome.

Next, we outline the simulation environment. For a given quantile level $\tau$, we fit the proposed VCDP models \eqref{model:TQVCM DE} and \eqref{model:TQVCM IE} to the data by setting $\gamma(t, \tau) = \Gamma(t)$ = 0, since the two policies being compared are essentially the same. This enables us to obtain the estimated model parameters $ \widetilde \beta_{0 \tau} (t) $, $ \widetilde \beta_{ \tau} (t) $, $ \widetilde \phi_{ 0} (t) $, and $ \widetilde \Phi (t) $ and the estimated error processes $\widetilde e_{i,\tau}(t)$ and  $\widetilde E_{i}(t)$ for $1\le t\le 24$ and $1\le i\le 68$. To simulate data, we set 
$
\widetilde \gamma ( t, \tau )  = \delta  Q_\tau ( Y_{t} )$ and $ \widetilde \Gamma ( t )  = \delta  E( S_{t} )$ 
for some constant $\delta\ge 0$, 
where  $Q_\tau ( Y_{t} )$  and $ E( S_{t} )$ denote the (elementwise) empirical $\tau$-th quantile of $\{ Y_{it} \}_i$ and  empirical mean of $\{ S_{it} \}_i$, respectively. 
Under this formulation, the coefficients  $ \widetilde \gamma ( t, \tau )$ 
are allowed to vary across different quantiles, and the constant $\delta$ controls the strength of CQTE.   Specifically, no CQTE exists if $\delta = 0$, and the new policy is better if $\delta > 0$.

Thirdly, we employ the bootstrap method for data generation. Specifically, in each simulation run, we randomly sample $n$ initial observations and $n$ error processes with replacement. Then, we generate $n$ days of data according to the proposed VCDP models:
\begin{eqnarray*}
\widetilde Y_{i,t } &=& \widetilde \beta_{0 } (t, \tau) + \widetilde S_{i,t}^\top \widetilde \beta(t, \tau) + A_{i,t}\widetilde \gamma(t, \tau) + \widetilde e_{i,\tau}(t), \\
\widetilde S_{i,t+1} &=& \widetilde \phi_{0 }(t)+ \widetilde \Phi(t) \widetilde S_{i,t}+A_{i,t} \widetilde \Gamma (t) + \widetilde E_{i}(t+1),
\end{eqnarray*}
based on these samples and the estimated model parameters. The treatments $A_{i,t}$ are generated according to the temporal alternation design. Specifically, we first implement one policy for TI time units, then switch to the other policy for another TI time units, and alternate between the two policies.
We consider a wide range of simulation settings by setting $\tau \in {0.2, 0.5, 0.8}$, $n \in { 20, 40 }$, TI $\in {1, 3 }$, and $\delta \in {0, 0.001, 0.025, 0.05, 0.075, 0.1}$.
For each scenario, we generate 500 simulation runs to compute the empirical type-I error rate and power.
The significance level is fixed at 5\% throughout the simulation.

Finally, we discuss the simulation results. Figure \ref{fig:QTE_simu} presents the empirical rejection rates of the proposed test for CQTE (refer also to Table \ref{table:simu_QTE} in the supplementary material). The type-I error is approximately at the nominal level in all cases. The empirical power generally increases with the sample size and approaches $1$ as the signal
strength $\delta$ increases to $0.1$. Furthermore, the empirical power increases with the quantile level $\tau$, which is expected since $\widetilde \gamma_\tau ( t )$ are set to be proportional to $Q_{\tau}(Y_t)$, whose values increase with the quantile level. These results validate our theoretical assertions.
We also report the empirical rejection rates of the proposed test for CQDE and CQIE in Figures \ref{fig:QDE_simu} and \ref{fig:QIE_simu} of the supplementary material, respectively. The results are very similar to those of CQTE. It is worth noting that the power for CQDE is generally larger than that of CQTE, whereas the power for CQIE is generally smaller than that for CQTE. This is because the test statistics of CQIE have much larger variances than those of CQDE.

\begin{figure}[!t]
\centering
\includegraphics[height=4cm, width=5cm]{./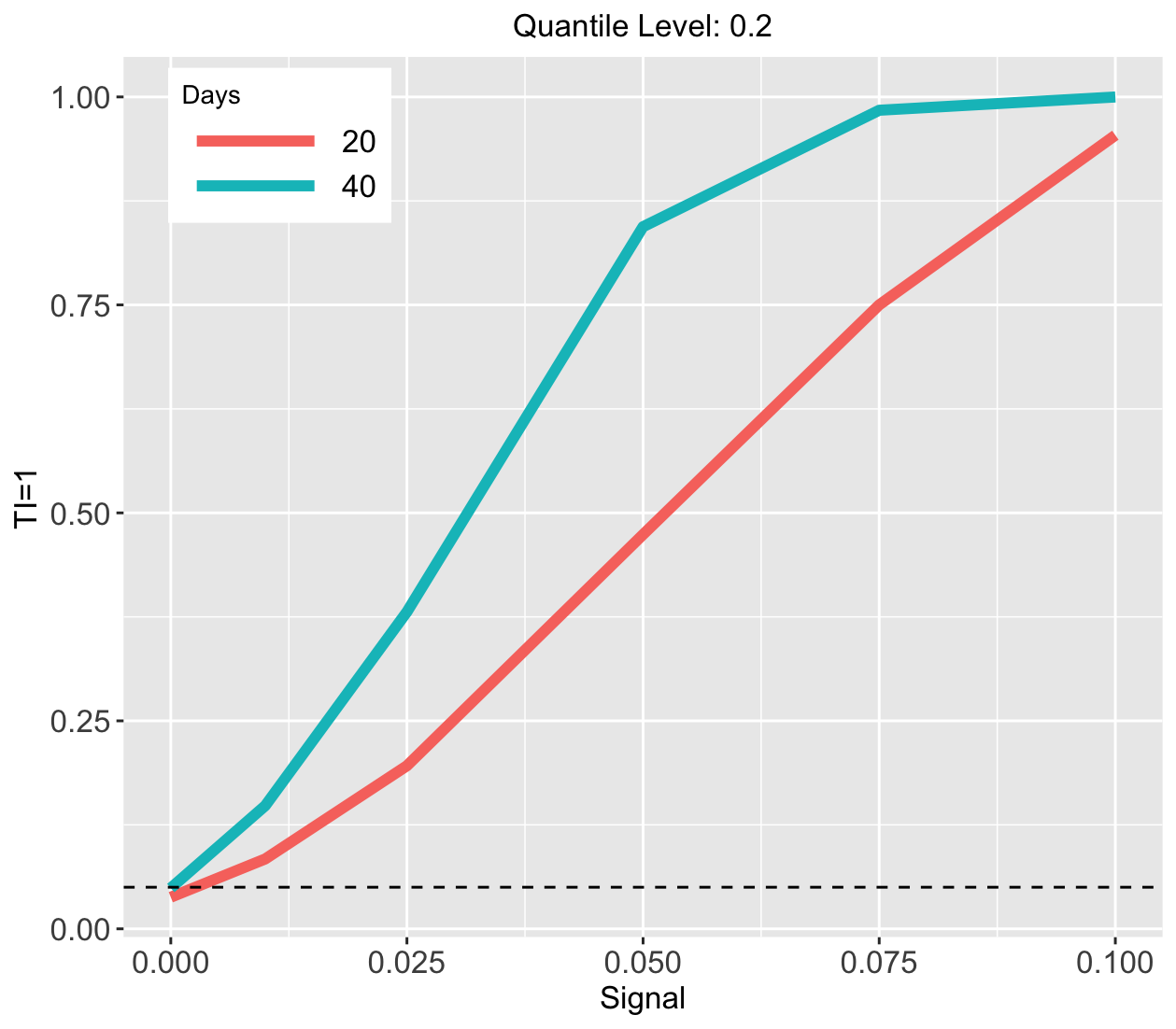}
\includegraphics[height=4cm, width=5cm]{./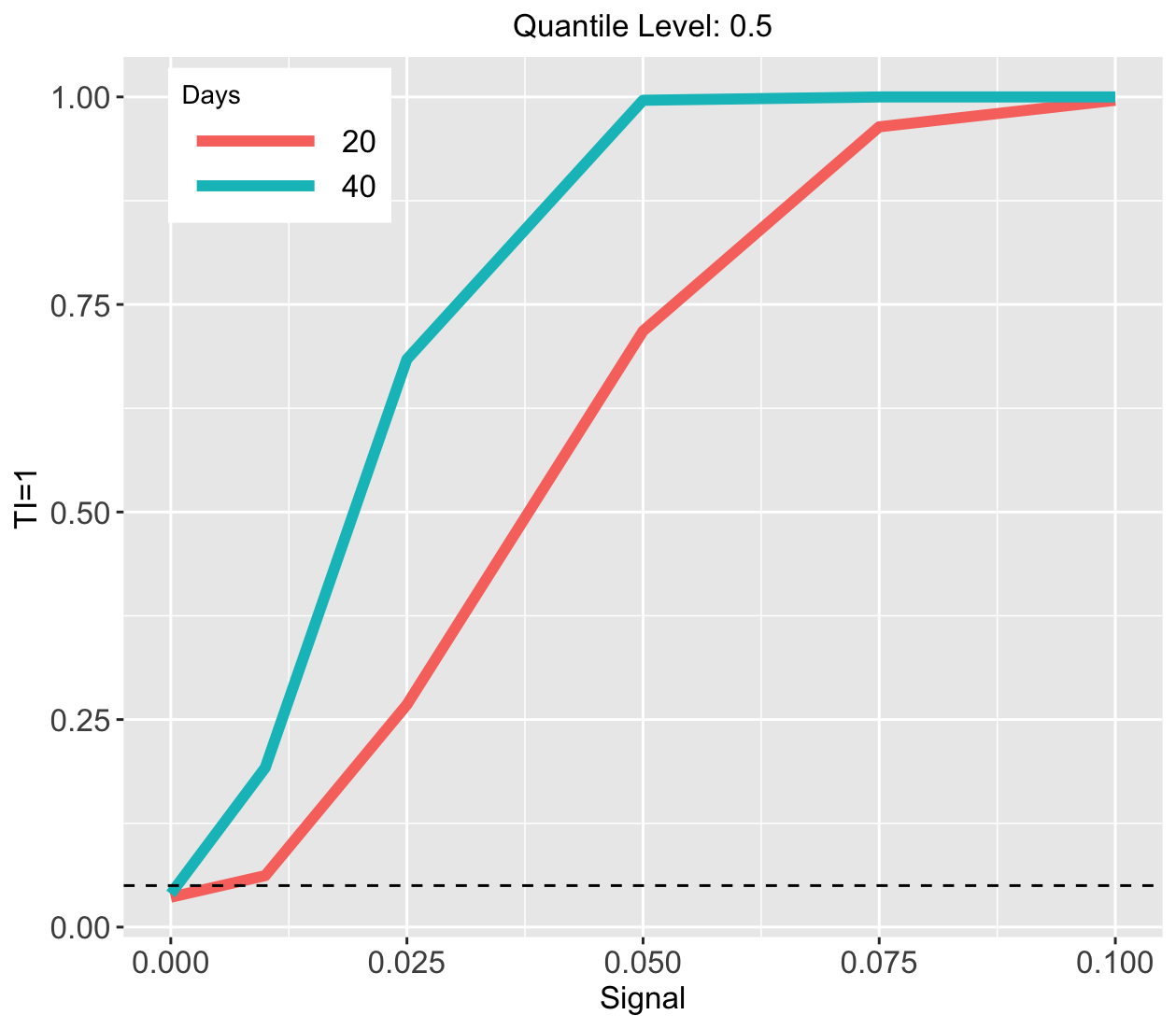}
\includegraphics[height=4cm, width=5cm]{./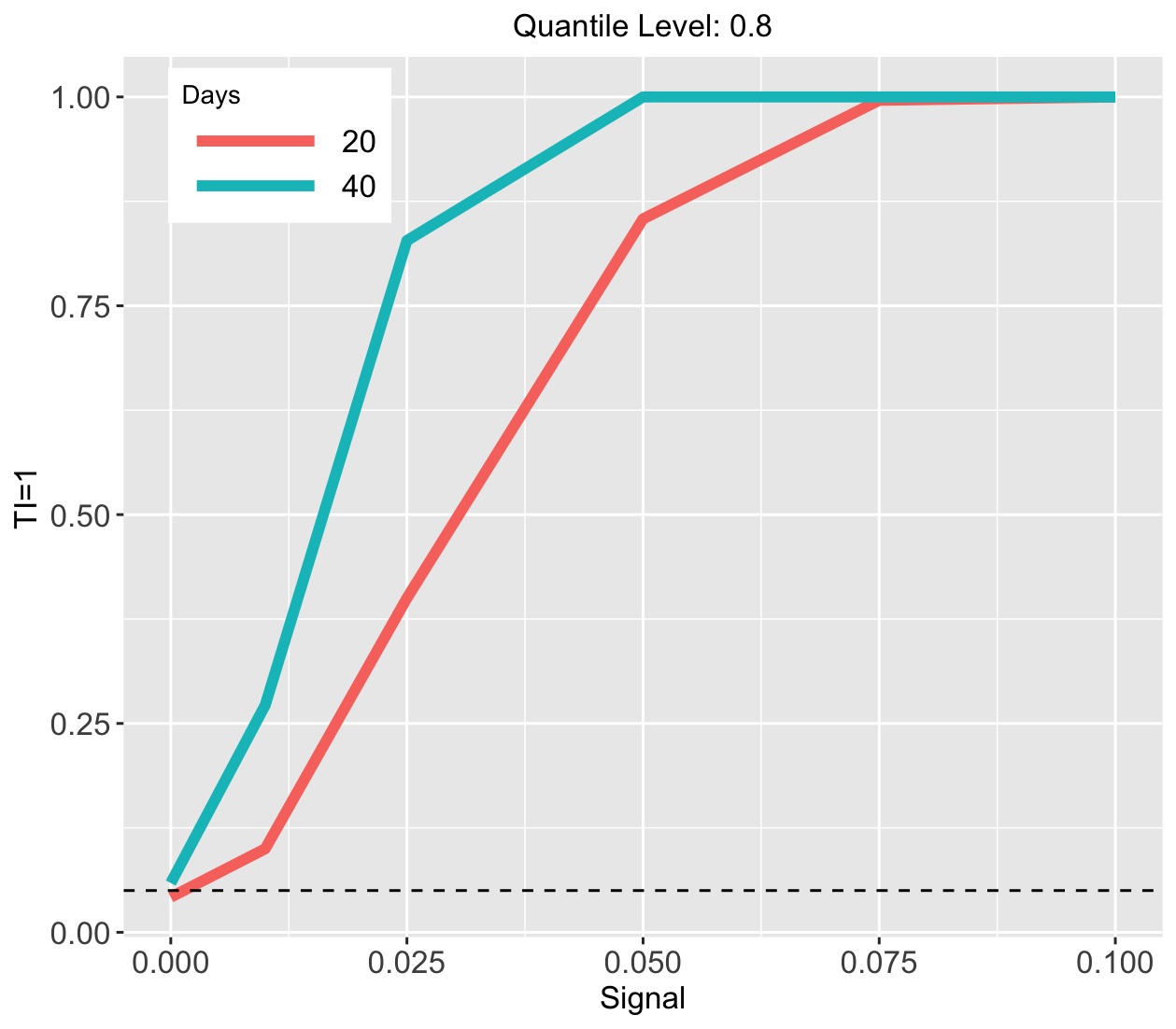} \\
\includegraphics[height=4cm, width=5cm]{./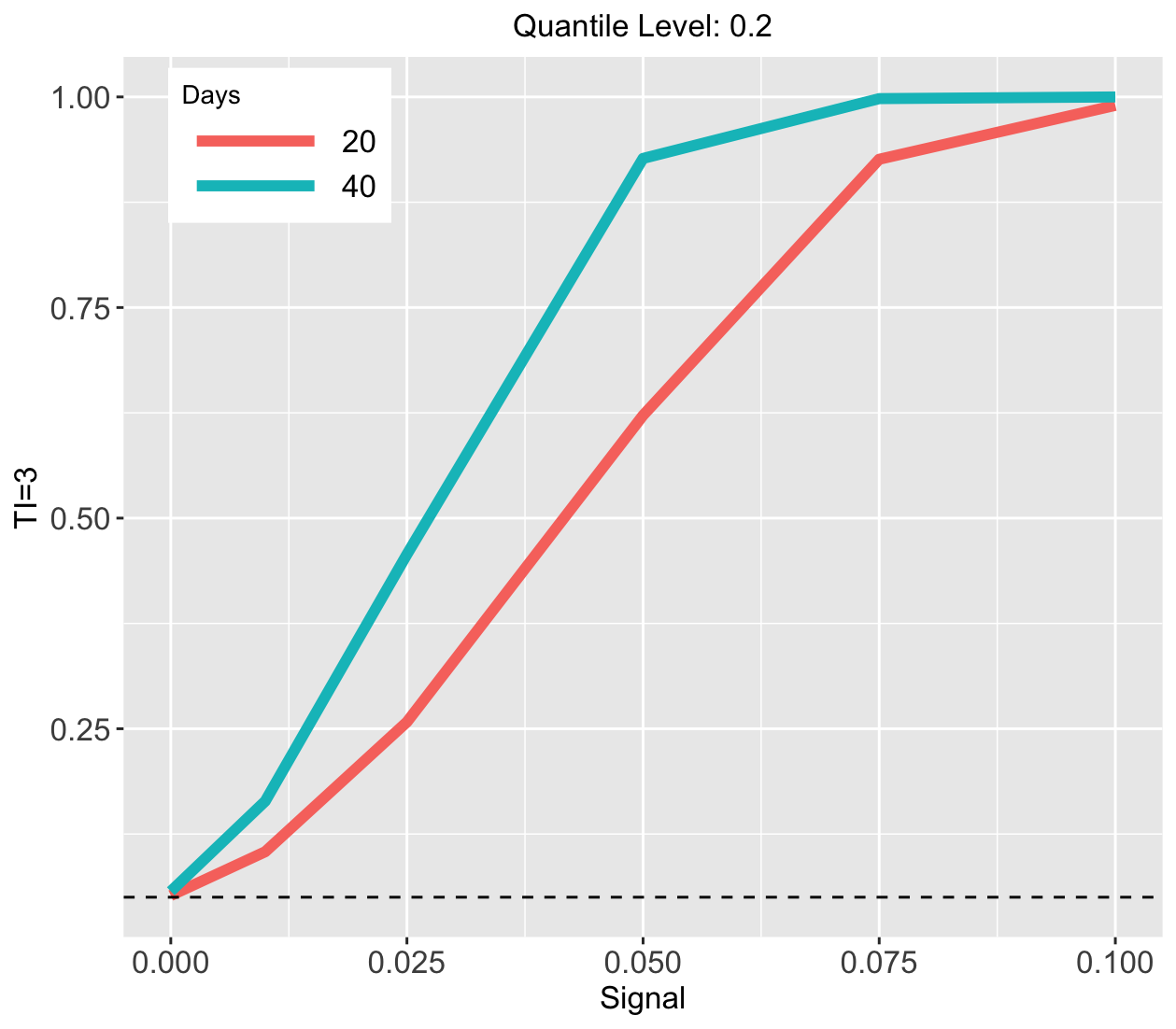}
\includegraphics[height=4cm, width=5cm]{./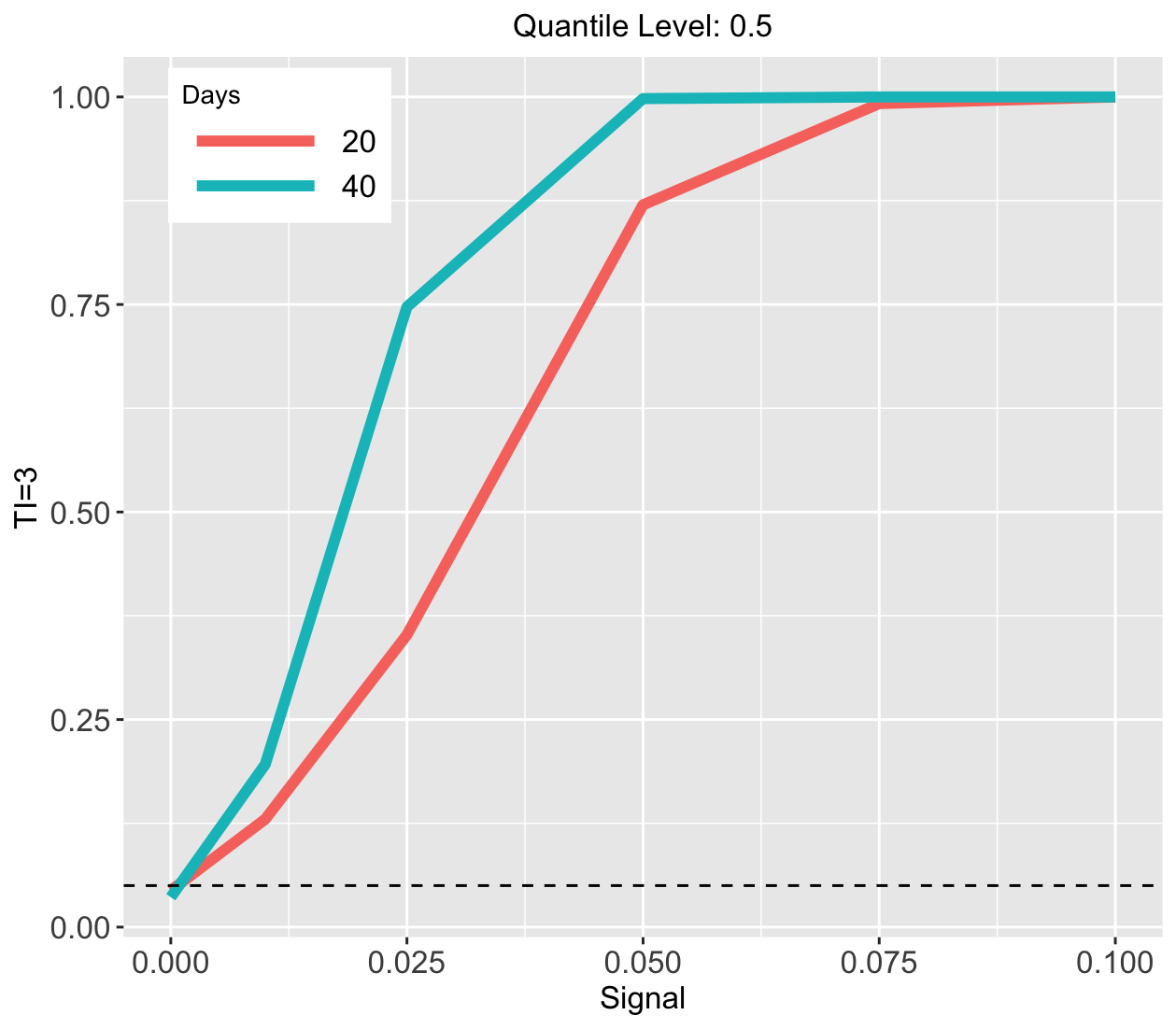}
\includegraphics[height=4cm, width=5cm]{./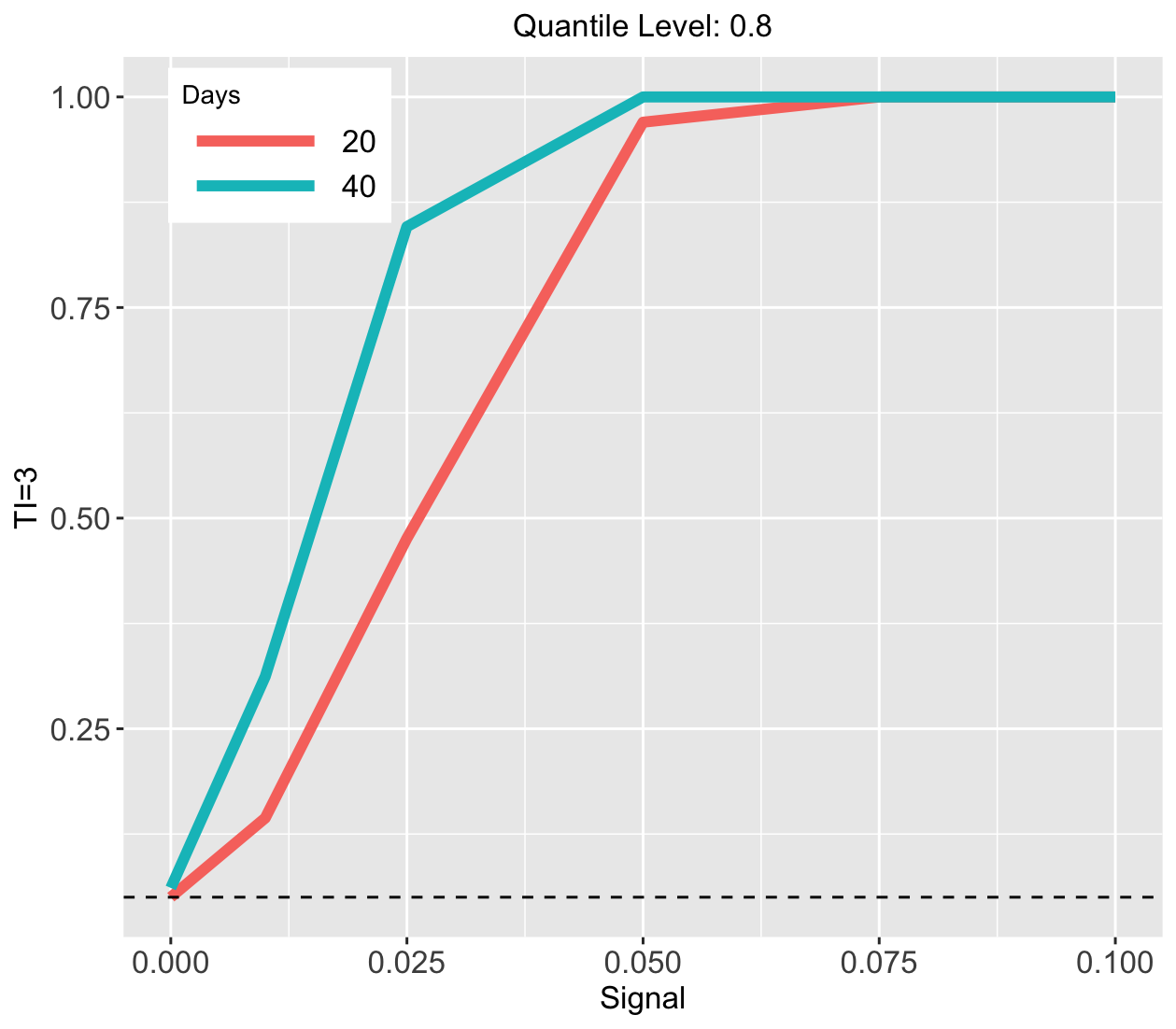}
\caption{\small Empirical rejection rates of the proposed test for CQTE$_\tau$. TI equals $1$ for the top panels and $3$ for the bottom panels. The quantile level $\tau=0.2$, $0.5$ and $0.8$, from left to right plots.} 
\label{fig:QTE_simu}
\end{figure}

\section{Discussion}
Motivated by the policy evaluation of A/B testing, we proposed a framework for inferring dynamic CQTE on market features, demonstrating that under certain conditions, the CQTE equals the sum of individual CQTE. This significantly simplifies the estimation and inference procedures by focusing on the CQTE at each spatiotemporal unit. To address the interference effect, we proposed two VCDP models for parameter estimation and inference. We developed a two-step method and a bootstrap-based testing procedure for the inference of CQTE. 
Further, we extended the proposed procedure to accommodate spatiotemporal data, decomposing the CQTE into the sum of CQDE and CQIE. We established consistency results for parameter estimation and the test procedure. Through the analysis of real datasets obtained from DiDi Chuxing, we demonstrated that our proposed method is a valuable statistical tool for assessing the dynamic QTE of new policies.

\bibliographystyle{chicago}
\bibliography{references}

\appendix

\section{Tables for simulation results}
\label{sec:tables}
We report Tables \ref{table:simu_QTE} to \ref{table:simu_QIE} in this section. These tables contain empirical rejection rates of the proposed tests for CQTE, CQDE and CQIE in simulation studies, respectively, based on 500 simulation replications. 
Figures \ref{fig:QDE_simu} and \ref{fig:QIE_simu} depict the empirical rejection rates for
CQDE and CQIE.

\begin{table}[htbp]
	\begin{center}
		\caption{\small Empirical rejection rates of the proposed test for CQTE$_\tau$ with standard errors in parentheses 
			based on 500 replicates.}
		\label{table:simu_QTE}
		
		\vspace{4ex}
		\scriptsize
		\begin{tabular}{ccccccccc}
			\hline
			\hline
			$\tau$	&	TI  &      $n$    &   0   &   0.001    &  0.025  &  0.050  &   0.075   &   0.100 \\
			\hline
			0.2 &	1 &	20 &	0.038(0.009) & 	0.084(0.012) & 	0.196(0.018) & 	0.474(0.022) & 	0.750(0.019) & 	0.954(0.009) \\
			&	  &	40 &	0.049(0.010) & 	0.148(0.016) & 	0.381(0.022) & 	0.844(0.016) & 	0.984(0.006) & 	1.000(0.000) \\
			\cline{2-9}
			&	3 &	20 &	0.052(0.010) & 	0.104(0.014) & 	0.258(0.020) & 	0.622(0.022) & 	0.926(0.012) & 	0.990(0.004) \\
			&	  &	40 &	0.057(0.010) & 	0.164(0.017) & 	0.457(0.022) & 	0.927(0.012) & 	0.998(0.002) & 	1.000(0.000) \\ 
			\hline
			0.5 &	1 &	20 &	0.036(0.008) & 	0.062(0.011) & 	0.268(0.020) & 	0.718(0.020) & 	0.964(0.008) & 	0.996(0.003) \\ 
			&	  &	40 &	0.040(0.009) & 	0.192(0.018) & 	0.684(0.021) & 	0.996(0.003) & 	1.000(0.000) & 	1.000(0.000) \\ 
			\cline{2-9}
			&	3 &	20 &	0.044(0.009) & 	0.130(0.015) & 	0.352(0.021) & 	0.870(0.015) & 	0.992(0.004) & 	1.000(0.000) \\ 
			&	  &	40 &	0.036(0.008) & 	0.196(0.018) & 	0.747(0.019) & 	0.998(0.002) & 	1.000(0.000) & 	1.000(0.000) \\ 
			\hline
			0.8 &	1 &	20 &	0.042(0.009) & 	0.100(0.013) & 	0.400(0.022) & 	0.854(0.016) & 	0.996(0.003) & 	1.000(0.000) \\ 
			&	  &	40 &	0.059(0.011) & 	0.272(0.020) & 	0.828(0.017) & 	1.000(0.000) & 	1.000(0.000) & 	1.000(0.000) \\ 
			\cline{2-9}
			&	3 &	20 &	0.050(0.010) & 	0.144(0.016) & 	0.476(0.022) & 	0.970(0.008) & 	1.000(0.000) & 	1.000(0.000) \\ 
			&	  &	40 &	0.061(0.011) & 	0.312(0.021) & 	0.846(0.016) & 	1.000(0.000) & 	1.000(0.000) & 	1.000(0.000) \\ 
			\hline \hline
		\end{tabular}
	\end{center}
\end{table}

\begin{table}[htbp]
	\begin{center}
		\caption{\small Empirical rejection rates of the proposed test for CQDE$_\tau$ with standard errors in parentheses 
			based on 500 replicates.}
		\label{table:simu_QDE}
		
		\vspace{4ex}
		\scriptsize
		\begin{tabular}{ccccccccc}
			\hline
			\hline
			$\tau$	&	TI  &      $n$    &   0   &   0.001    &  0.025  &  0.050  &   0.075   &   0.100 \\
			\hline
			0.2 &	1 &	20 &	0.072(0.012) & 	0.158(0.016) & 	0.380(0.022) & 	0.706(0.020) & 	0.924(0.012) & 	0.998(0.002) \\
			&	  &	40 &	0.075(0.012) & 	0.236(0.019) & 	0.524(0.022) & 	0.929(0.011) & 	0.996(0.003) & 	1.000(0.000) \\
			\cline{2-9}
			&	3 &	20 &	0.074(0.012) & 	0.164(0.017) & 	0.346(0.021) & 	0.790(0.018) & 	0.972(0.007) & 	1.000(0.000) \\
			&	  &	40 &	0.075(0.012) & 	0.216(0.018) & 	0.524(0.022) & 	0.957(0.009) & 	0.998(0.002) & 	1.000(0.000) \\
			\hline
			0.5 &	1 &	20 &	0.064(0.011) & 	0.152(0.016) & 	0.524(0.022) & 	0.936(0.011) & 	0.998(0.002) & 	1.000(0.000) \\
			&	  &	40 &	0.056(0.010) & 	0.256(0.020) & 	0.828(0.017) & 	0.998(0.002) & 	1.000(0.000) & 	1.000(0.000) \\
			\cline{2-9}
			&	3 &	20 &	0.064(0.011) & 	0.206(0.018) & 	0.484(0.022) & 	0.958(0.009) & 	0.998(0.002) & 	1.000(0.000) \\
			&	  &	40 &	0.050(0.010) & 	0.278(0.020) & 	0.856(0.016) & 	1.000(0.000) & 	1.000(0.000) & 	1.000(0.000) \\
			\hline
			0.8 &	1 &	20 &	0.054(0.010) & 	0.198(0.018) & 	0.582(0.022) & 	0.940(0.011) & 	1.000(0.000) & 	1.000(0.000) \\
			&	  &	40 &	0.059(0.011) & 	0.312(0.021) & 	0.828(0.017) & 	1.000(0.000) & 	1.000(0.000) & 	1.000(0.000) \\
			\cline{2-9}
			&	3 &	20 &	0.078(0.012) & 	0.210(0.018) & 	0.536(0.022) & 	0.944(0.010) & 	0.994(0.003) & 	1.000(0.000) \\
			&	  &	40 &	0.059(0.011) & 	0.284(0.020) & 	0.818(0.017) & 	0.998(0.002) & 	1.000(0.000) & 	1.000(0.000) \\
			\hline \hline
		\end{tabular}
	\end{center}
\end{table}

\begin{table}[htbp]
	\begin{center}
		\caption{\small Empirical rejection rates of the proposed test for CQIE$_\tau$ with standard errors in parentheses 
			based on 500 replicates.}
		\label{table:simu_QIE}
		
		\vspace{4ex}
		\scriptsize
		\begin{tabular}{ccccccccc}
			\hline
			\hline
			$\tau$	&	TI  &      $n$    &   0   &   0.001    &  0.025  &  0.050  &   0.075   &   0.100 \\
			\hline
			0.2 &	1 &	20 &	0.044(0.009) & 	0.040(0.009) & 	0.030(0.008) & 	0.034(0.008) & 	0.062(0.011) & 	0.090(0.013) \\ 
			&	  &	40 &	0.048(0.010) & 	0.046(0.009) & 	0.034(0.008) & 	0.103(0.014) & 	0.170(0.017) & 	0.302(0.021) \\ 
			\cline{2-9}
			&	3 &	20 &	0.036(0.008) & 	0.038(0.009) & 	0.044(0.009) & 	0.074(0.012) & 	0.110(0.014) & 	0.138(0.015) \\
			&	  &	40 &	0.040(0.009) & 	0.046(0.009) & 	0.071(0.011) & 	0.113(0.014) & 	0.200(0.018) & 	0.346(0.021) \\
			\hline
			0.5 &	1 &	20 &	0.030(0.008) & 	0.032(0.008) & 	0.038(0.009) & 	0.078(0.012) & 	0.130(0.015) & 	0.216(0.018) \\
			&	  &	40 &	0.030(0.008) & 	0.054(0.010) & 	0.105(0.014) & 	0.233(0.019) & 	0.393(0.022) & 	0.583(0.022) \\
			\cline{2-9}
			&	3 &	20 &	0.038(0.009) & 	0.046(0.009) & 	0.048(0.010) & 	0.102(0.014) & 	0.178(0.017) & 	0.310(0.021) \\
			&	  &	40 &	0.034(0.008) & 	0.048(0.010) & 	0.117(0.014) & 	0.269(0.020) & 	0.470(0.022) & 	0.690(0.021) \\
			\hline
			0.8 &	1 &	20 &	0.038(0.009) & 	0.032(0.008) & 	0.078(0.012) & 	0.186(0.017) & 	0.400(0.022) & 	0.564(0.022) \\
			&	  &	40 &	0.034(0.008) & 	0.072(0.012) & 	0.209(0.018) & 	0.625(0.022) & 	0.879(0.015) & 	0.982(0.006) \\
			\cline{2-9}
			&	3 &	20 &	0.044(0.009) & 	0.056(0.010) & 	0.120(0.015) & 	0.298(0.020) & 	0.538(0.022) & 	0.752(0.019) \\
			&	  &	40 &	0.034(0.008) & 	0.092(0.013) & 	0.249(0.019) & 	0.700(0.020) & 	0.957(0.009) & 	0.996(0.003) \\
			\hline \hline
		\end{tabular}
	\end{center}
\end{table}

\begin{figure}[!h]
	\centering
	\includegraphics[height=4cm, width=5cm]{./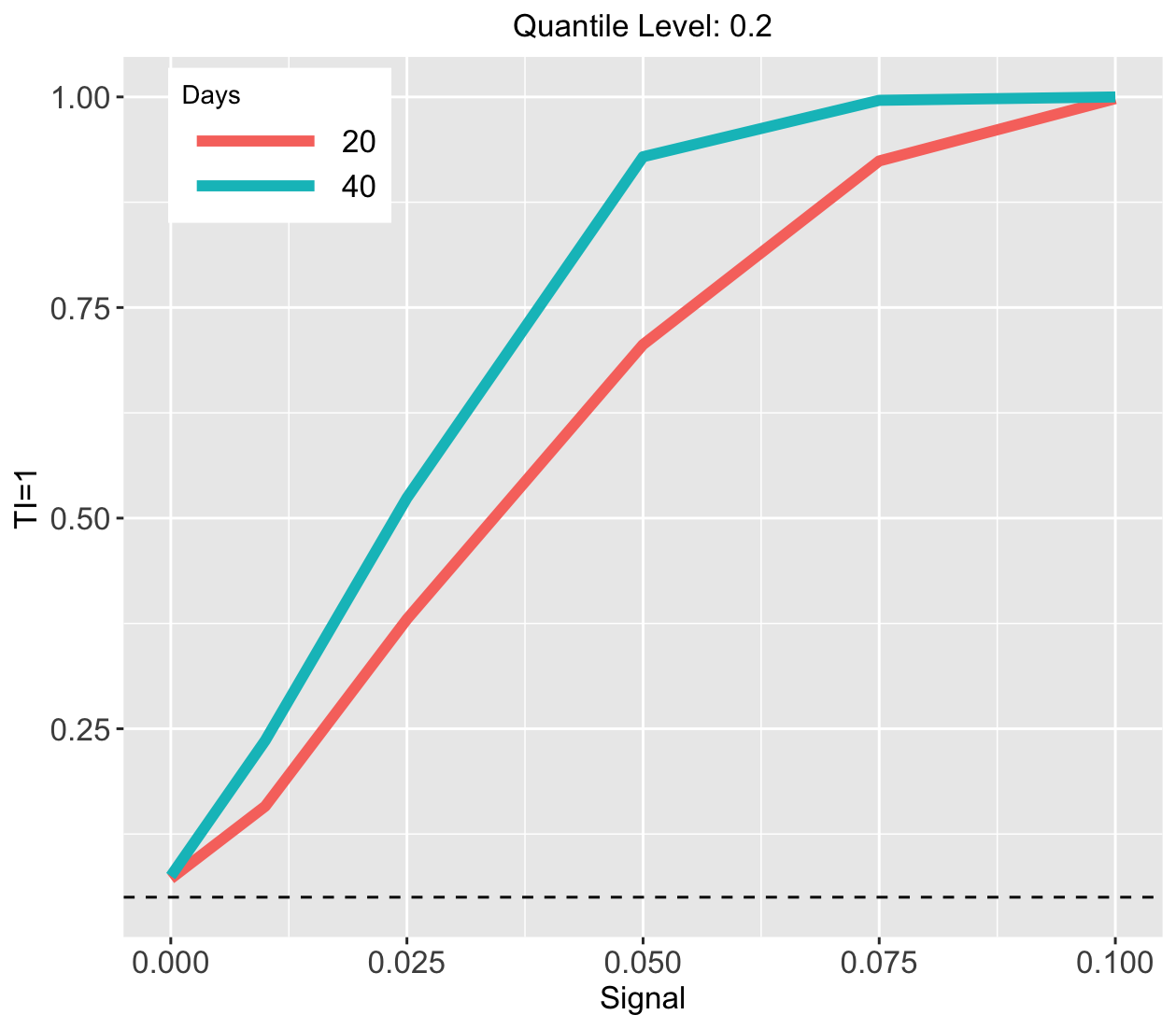}
	\includegraphics[height=4cm, width=5cm]{./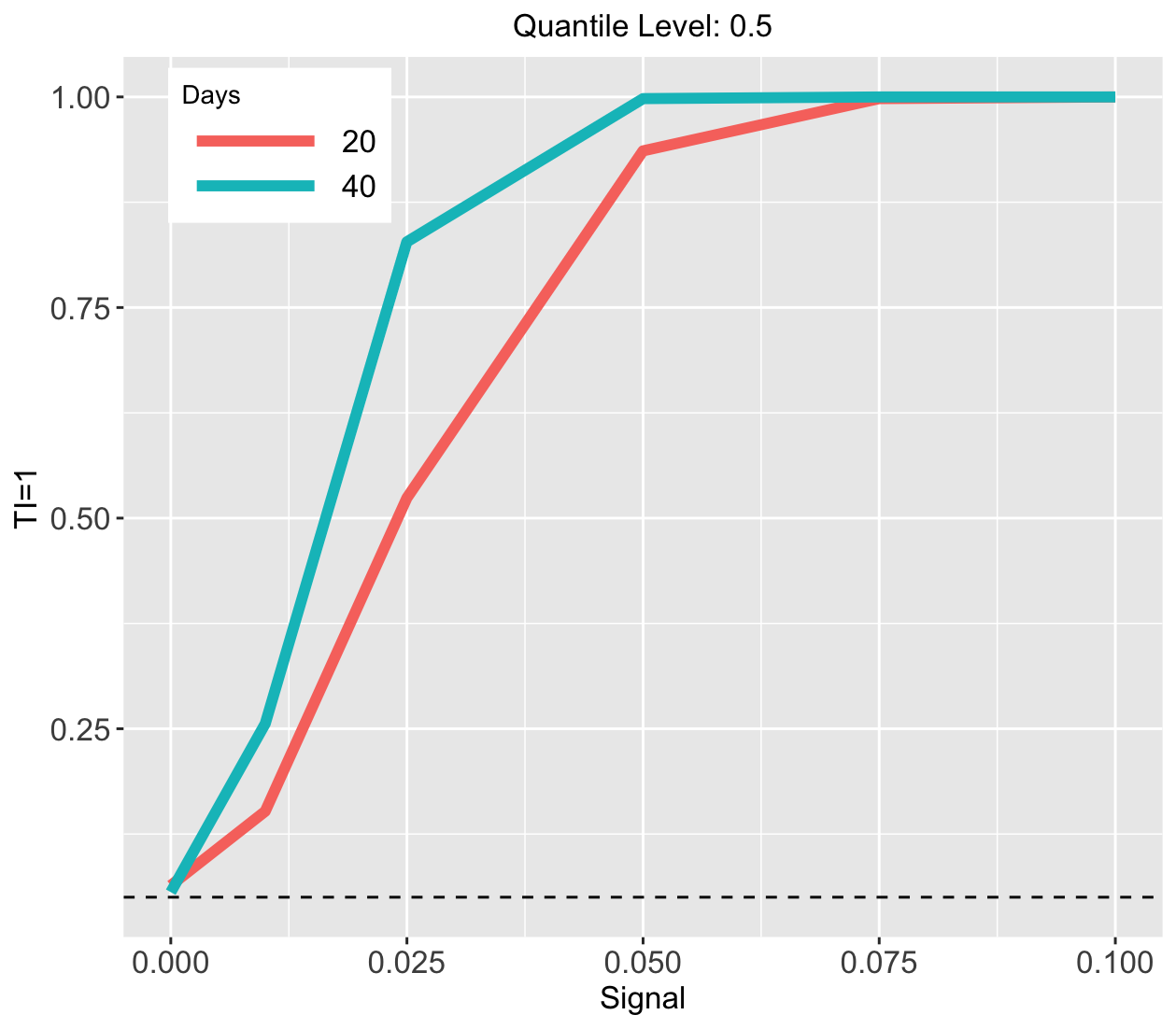}
	\includegraphics[height=4cm, width=5cm]{./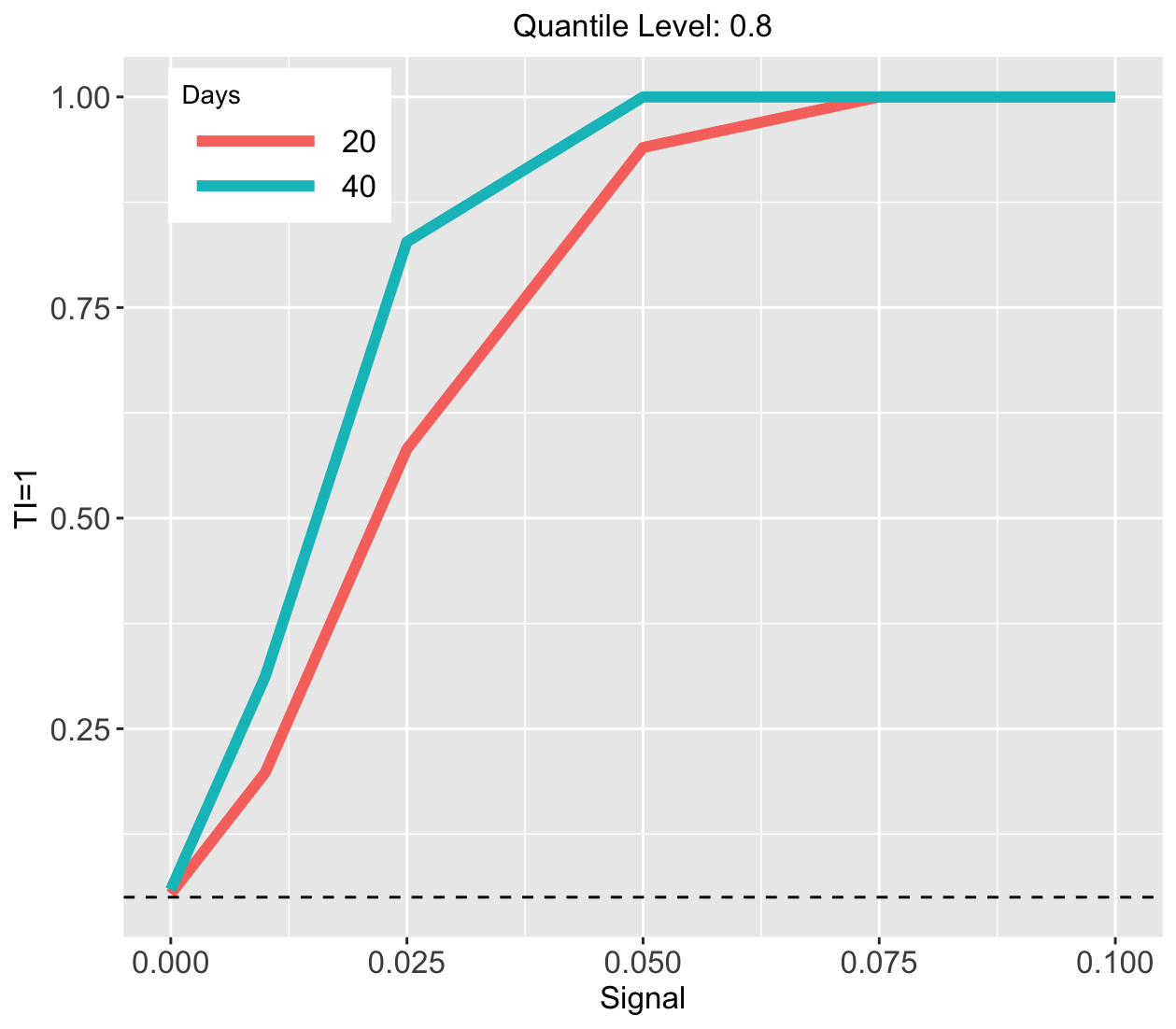} \\
	\includegraphics[height=4cm, width=5cm]{./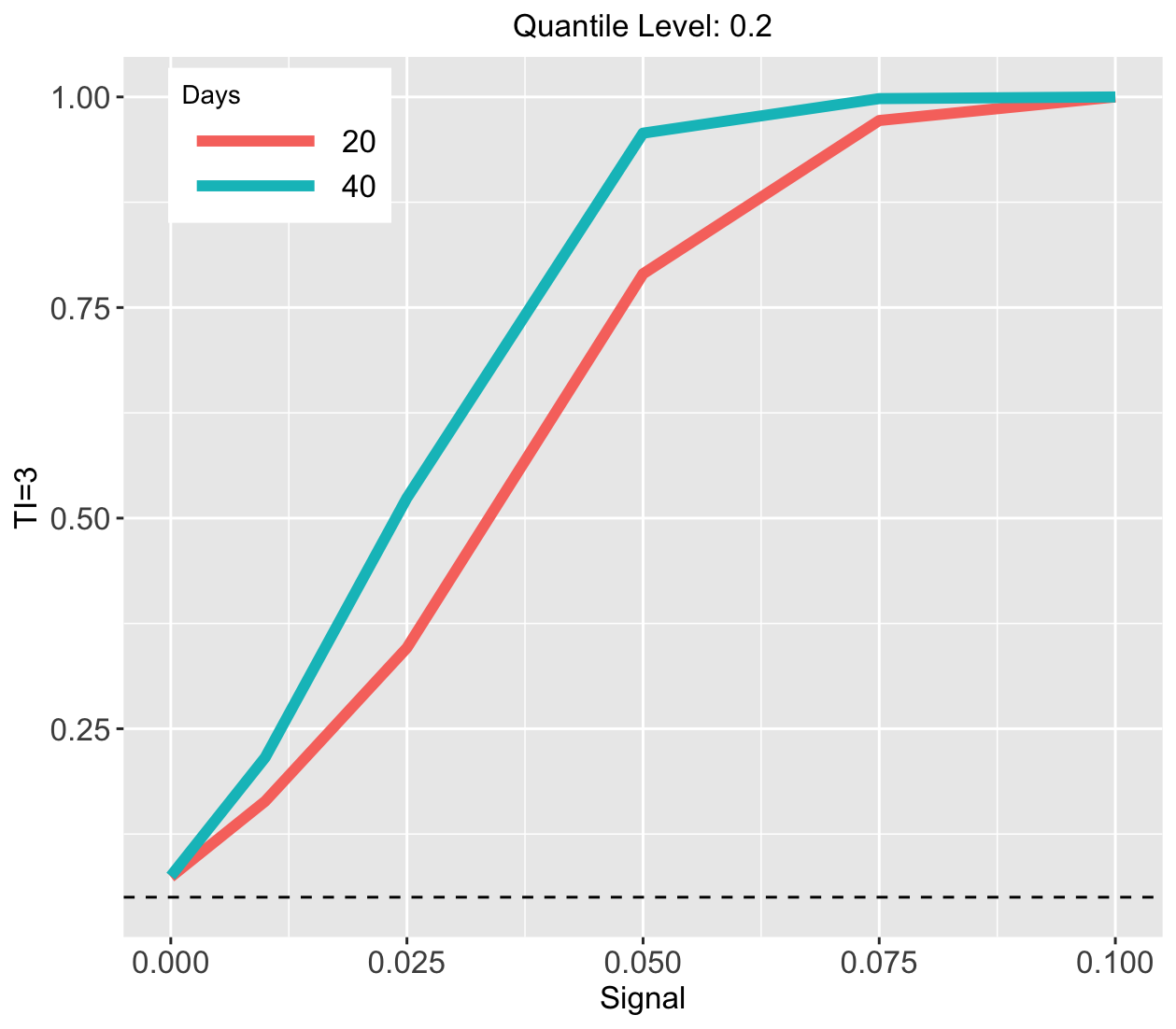}
	\includegraphics[height=4cm, width=5cm]{./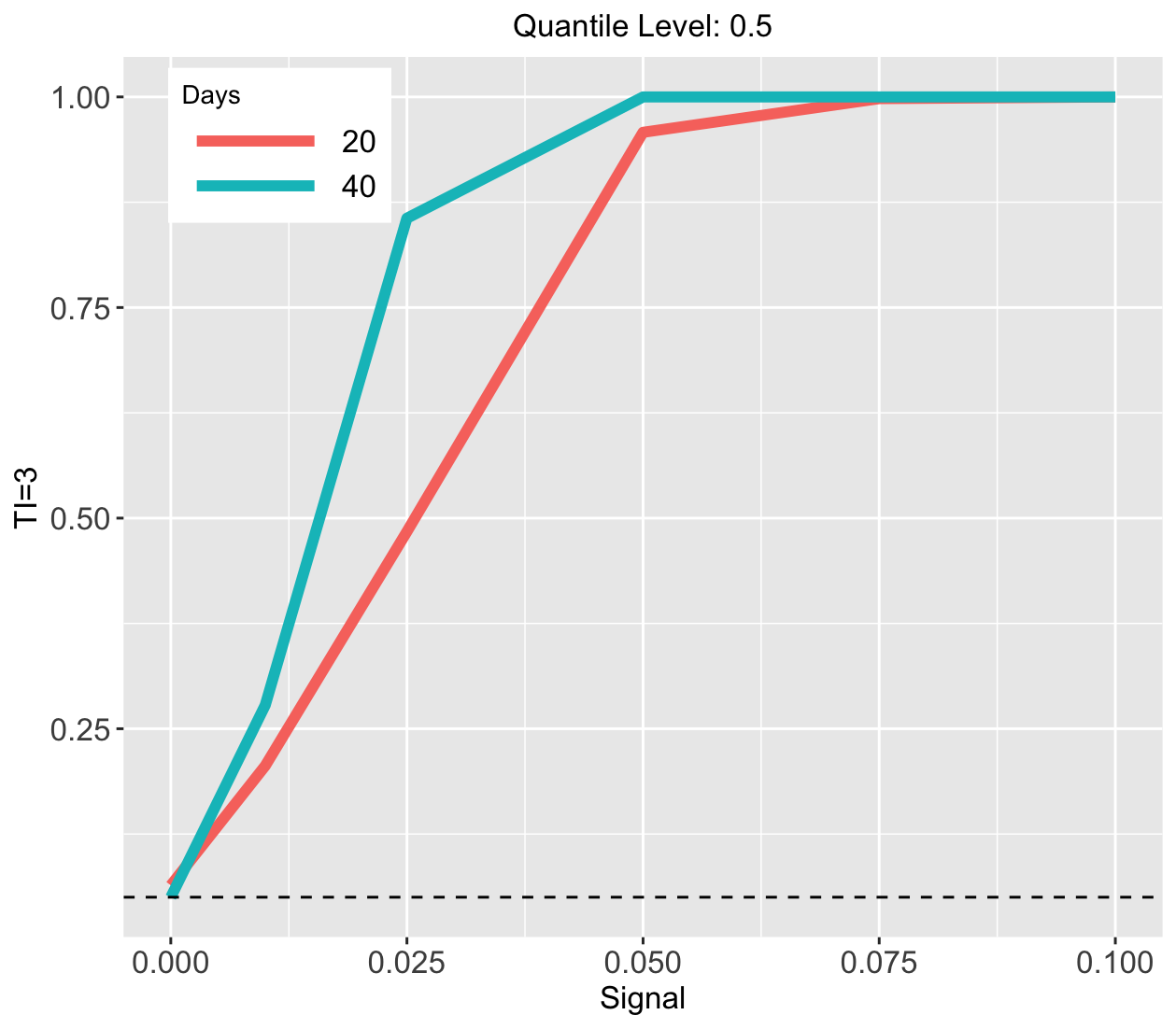}
	\includegraphics[height=4cm, width=5cm]{./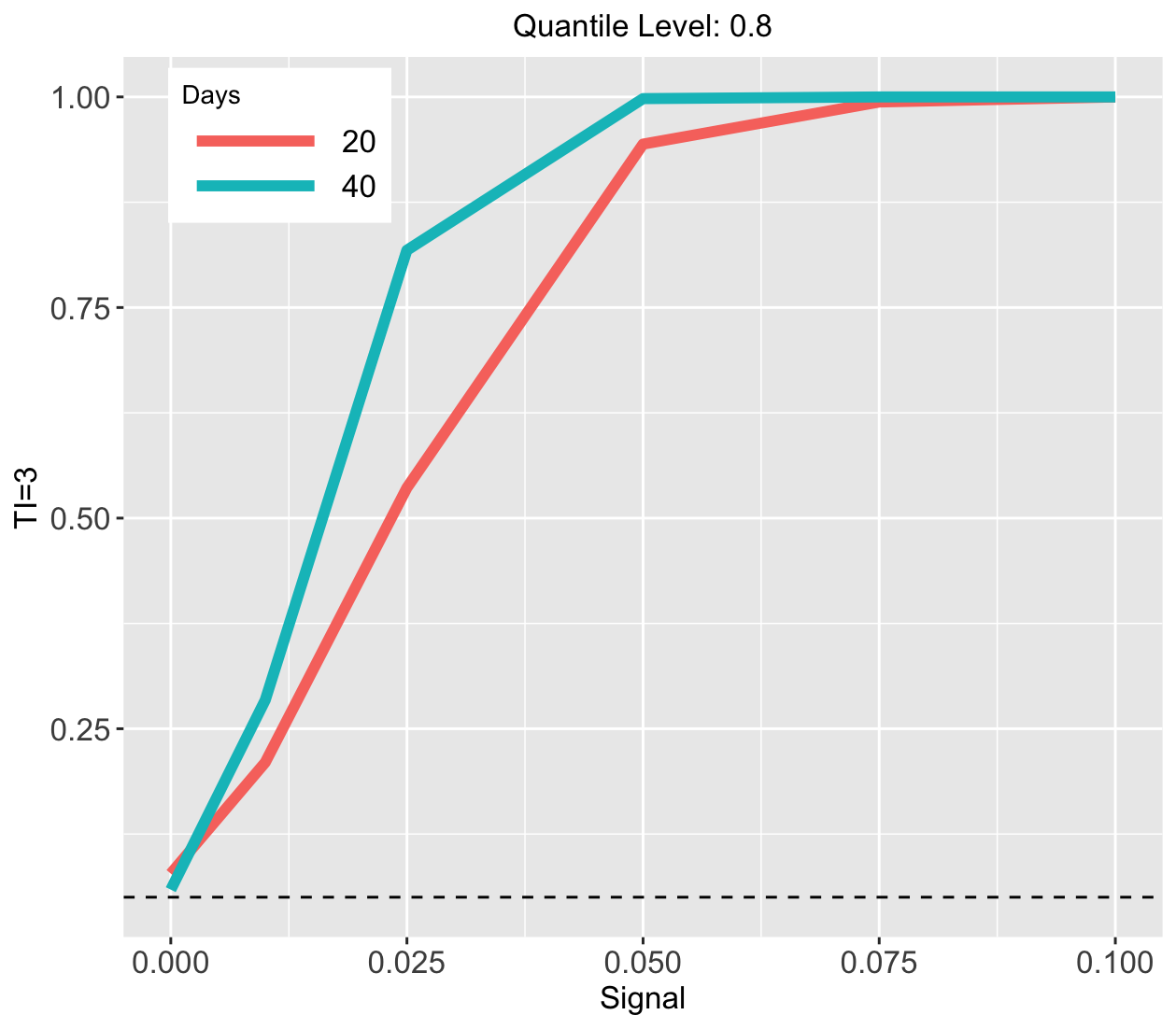}
	\caption{\small Empirical rejection rates for CQDE$_\tau$. TI equals $1$ for the top panels and $3$ for the bottom panels. The quantile level $\tau=0.2$, $0.5$ and $0.8$, from left to right.} 
\label{fig:QDE_simu}
\end{figure}

\begin{figure}
\centering
\includegraphics[height=4cm, width=5cm]{./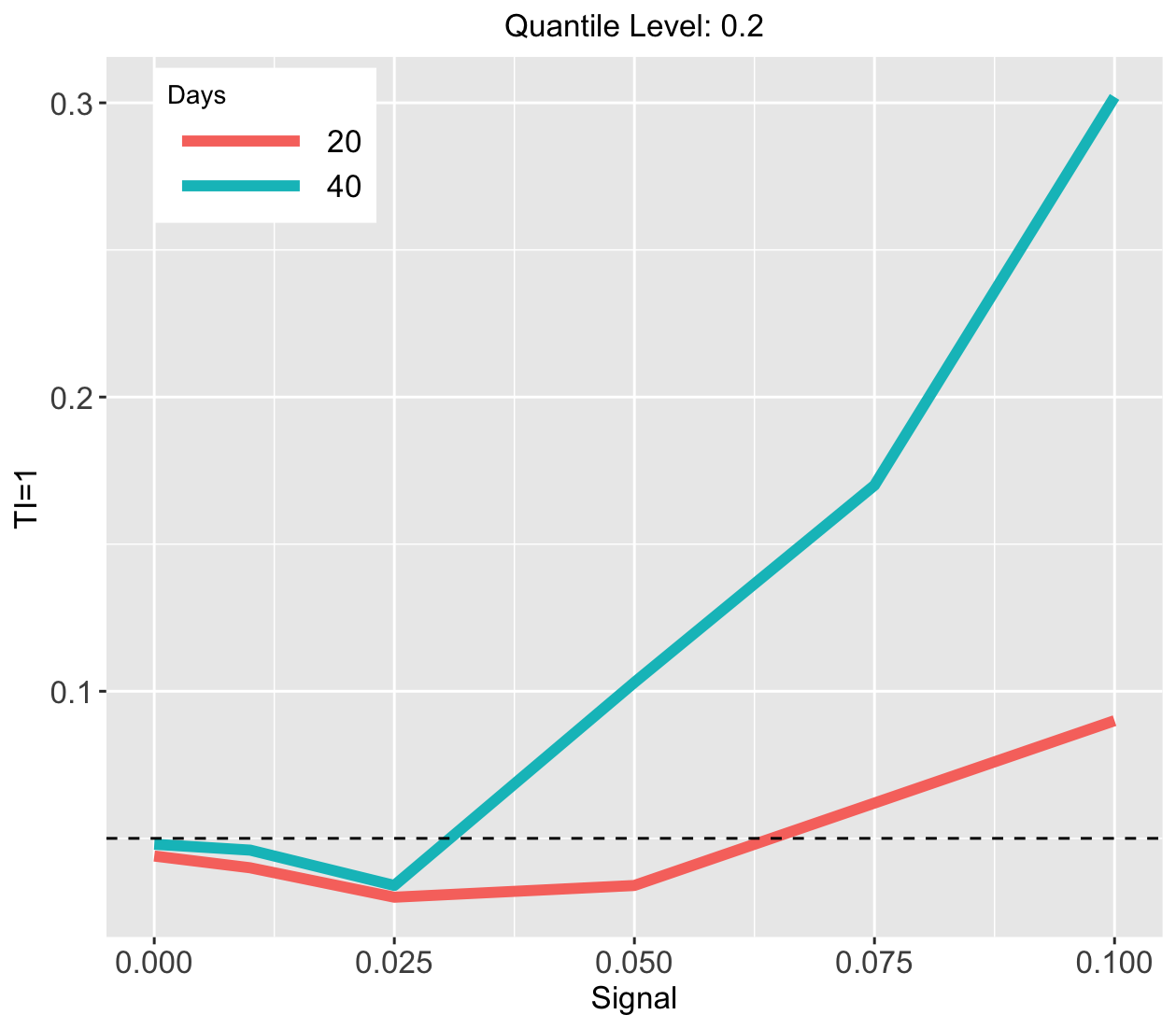}
\includegraphics[height=4cm, width=5cm]{./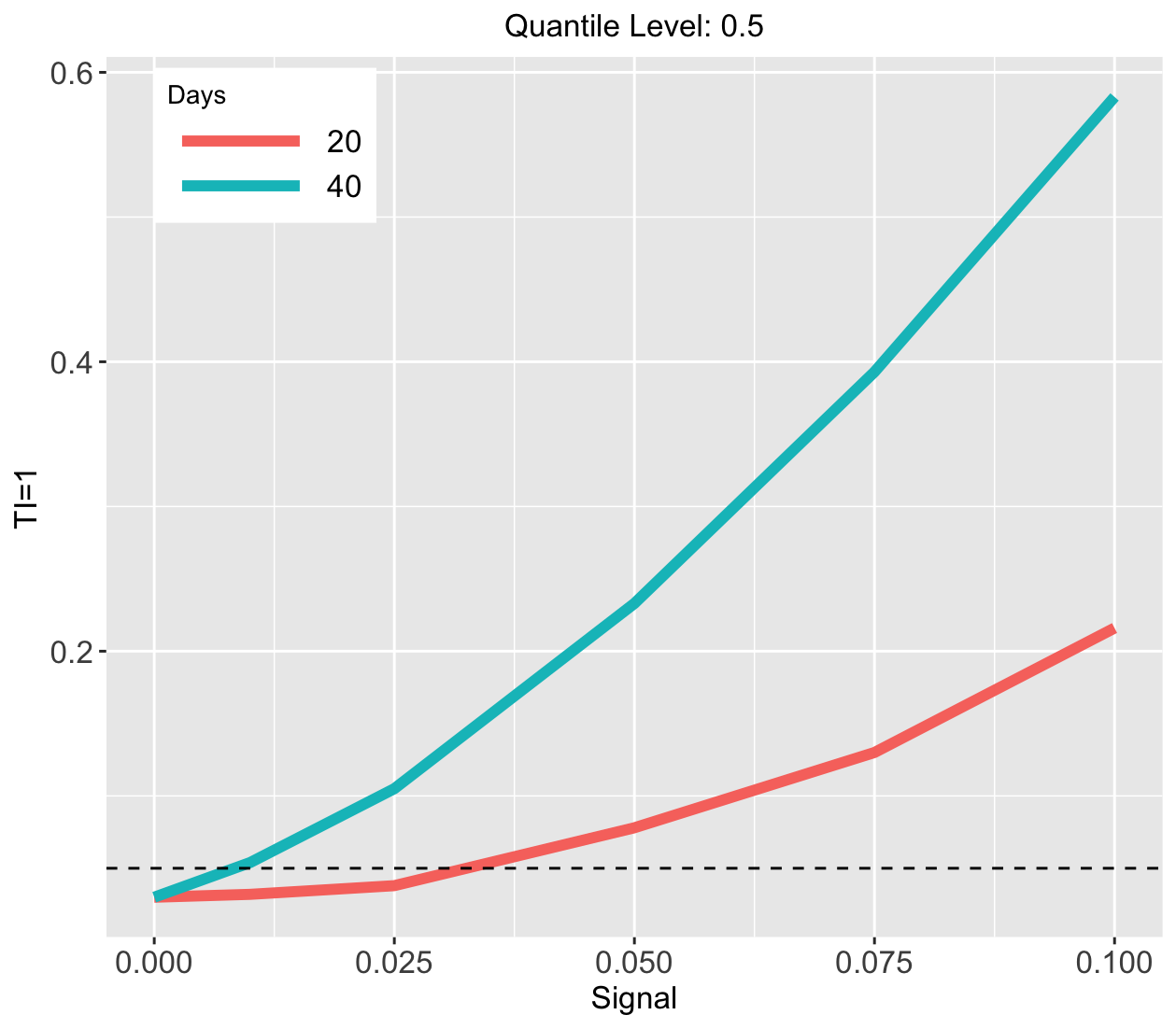}
\includegraphics[height=4cm, width=5cm]{./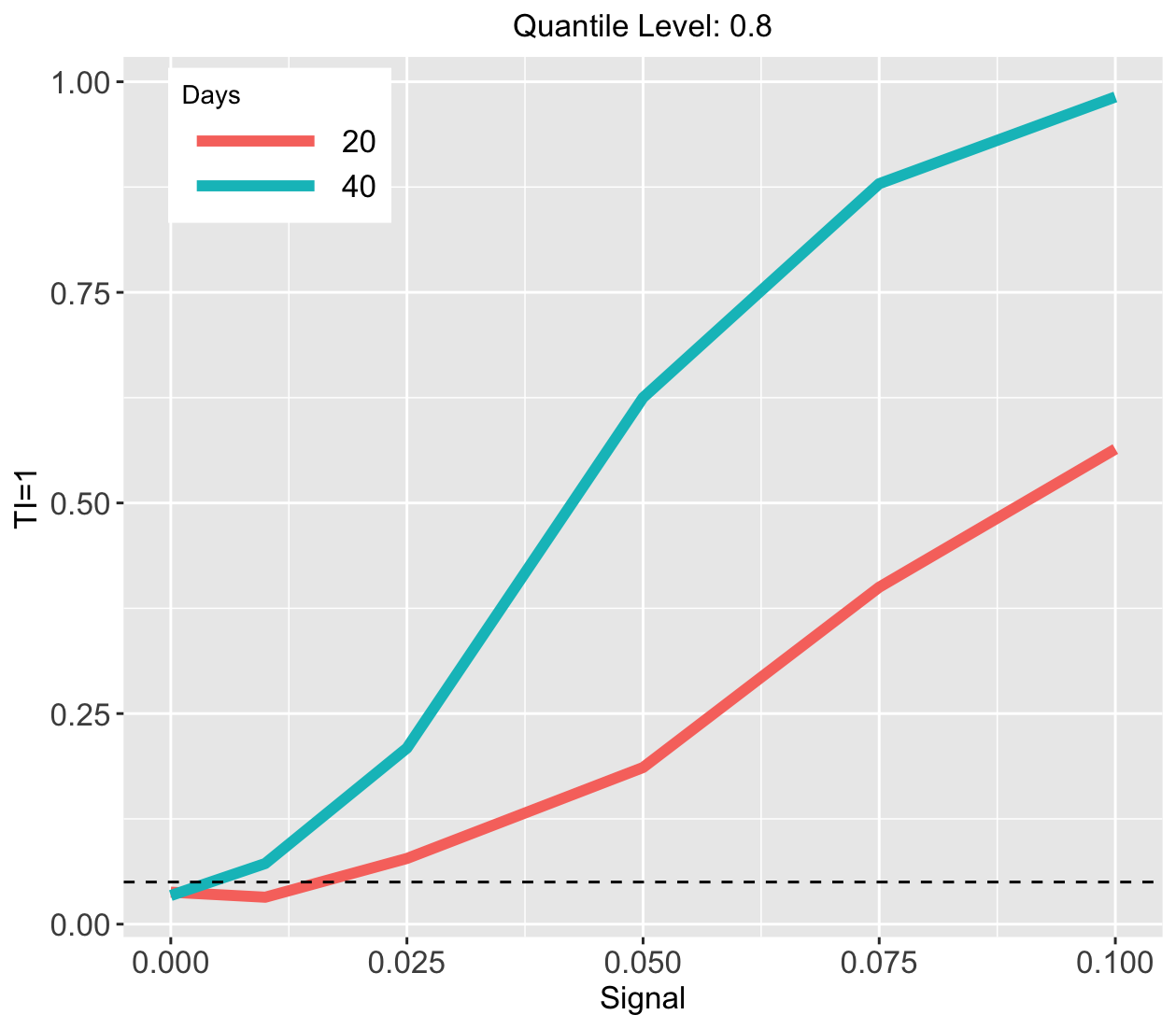} \\
\includegraphics[height=4cm, width=5cm]{./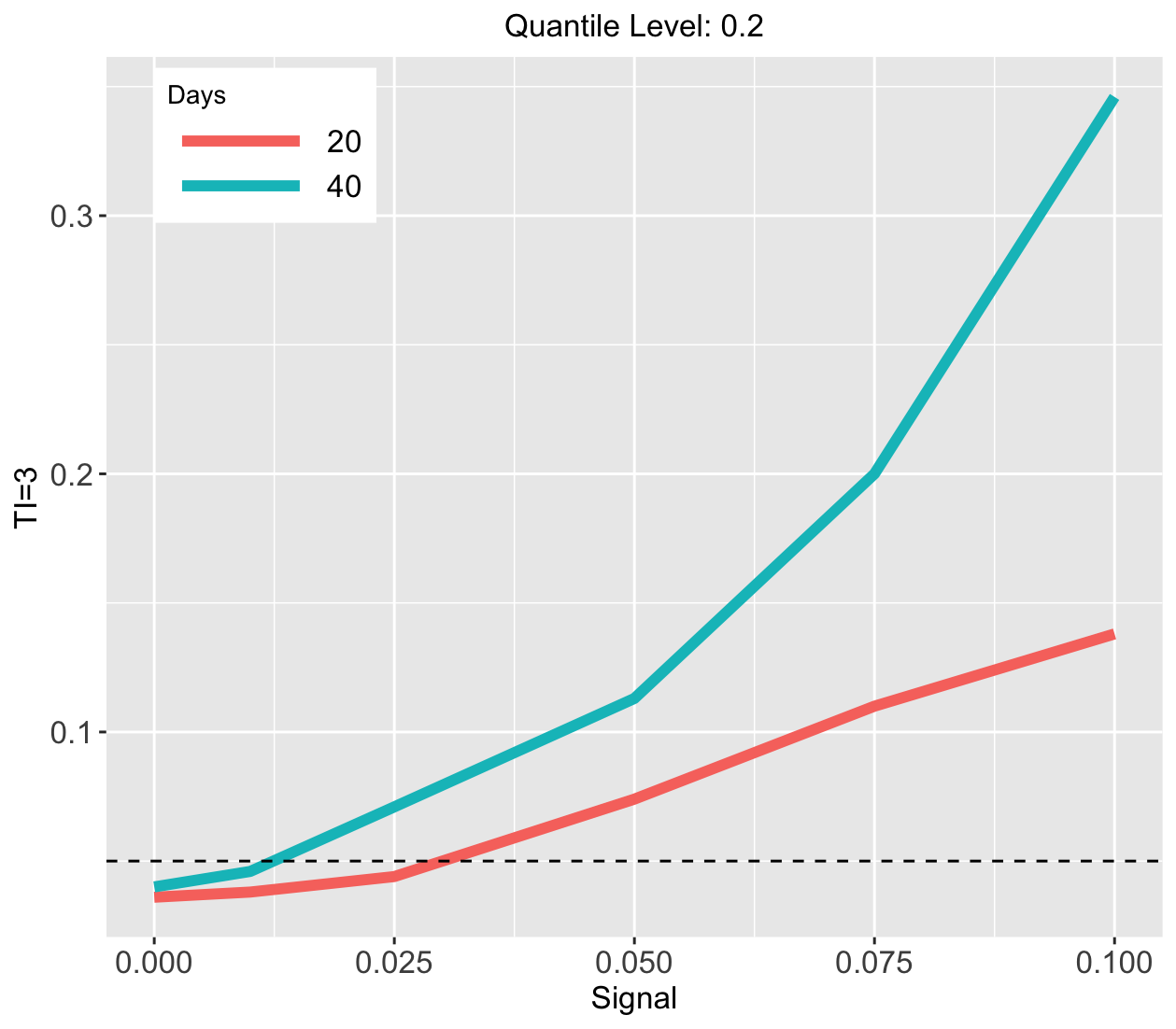}
\includegraphics[height=4cm, width=5cm]{./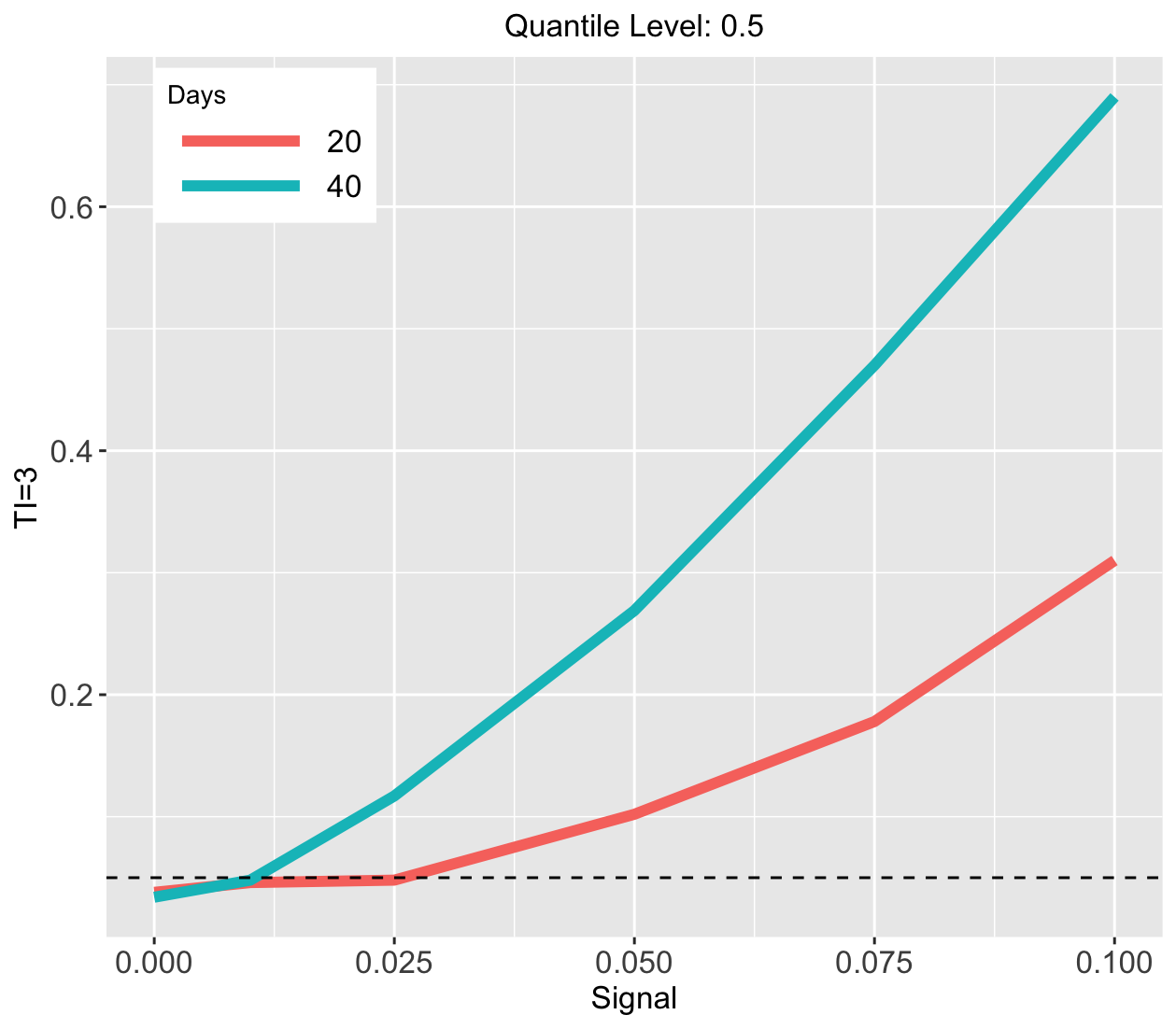}
\includegraphics[height=4cm, width=5cm]{./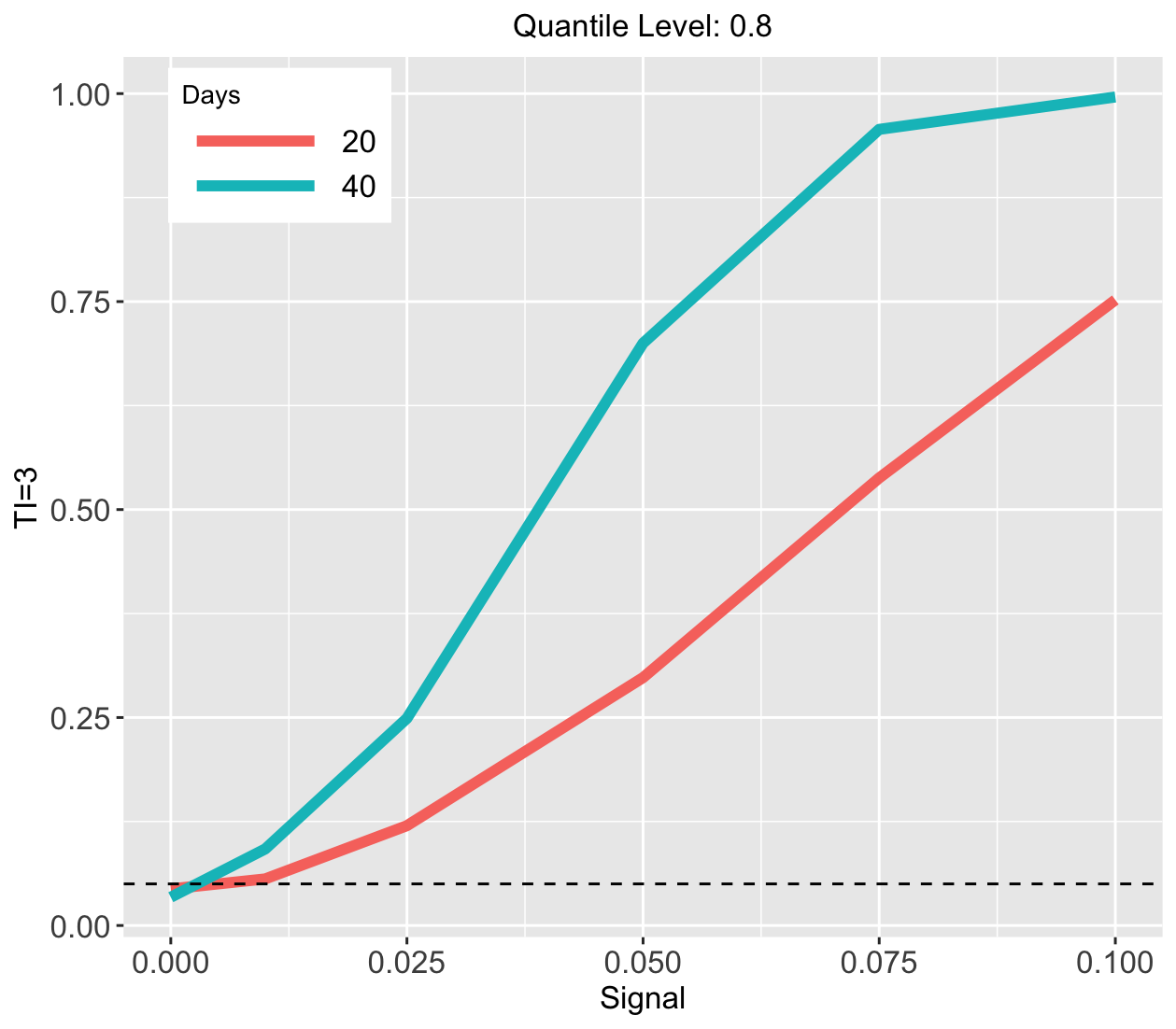}
\caption{\small Empirical rejection rates for CQIE$_\tau$. TI equals $1$ for the top panels and $3$ for the bottom panels. The quantile level $\tau=0.2$, $0.5$ and $0.8$, from left to right plots.} 
\label{fig:QIE_simu}
\end{figure}

\section{Regularity assumptions}
In this section, we present the assumptions for studying the asymptotic properties of the estimators. First, we introduce some notations. Let $F_e$ and $f_e$ denote the probability distribution and density function of the error process. We rescale the time and set $s=t/m$ for $t=1,\cdots,m$. It follows that $s\in [0,1]$. In temporal dependent experiments, we define
\begin{eqnarray*}
F_e(\kappa, s; \tau) &=& P( e_\tau (ms) \leq \kappa), \quad  	F_e(\kappa_1, \kappa_2, s_1, s_2; \tau_1, \tau_2)= P( e_ {\tau_1} (ms_1) \leq \kappa_1 , e_{\tau_2}(ms_2) \leq \kappa_2  ), \\
f_e(\kappa, s; \tau) &=& \frac{\partial F_e(\kappa, s; \tau)}{\partial \kappa}, \quad 
f_e(\kappa_1, \kappa_2, s_1, s_2; \tau_1, \tau_2) = \frac{ \partial^2 F_e(\kappa_1, \kappa_2, s_1, s_2; \tau_1, \tau_2)}{\partial \kappa_1 \partial \kappa_2 }.
\end{eqnarray*}
In spatiotemporal dependent experiments, we also 
set $l=\iota/r$ for $\iota=1,\cdots,r$ such that 
$l \in [0,1]$.
We define $F_e(\kappa, s, l; \tau) = P( e_\tau (ms, r\iota) \leq \kappa)$, $	f_e(\kappa, s, l; \tau) = { \partial F_e(\kappa, s, \iota; \tau)}/{\partial \kappa}, $  $F_e  (\kappa_1, \kappa_2, s_1, s_2, l_1, l_2; \tau_1, \tau_2)= P( e_ {\tau_1} (m s_1, r l_1) \leq \kappa_1 , e_{\tau_2}(m s_2, r l_2) \leq \kappa_2  )$,  and
\begin{eqnarray*}
\quad
f_e(\kappa_1, \kappa_2, s_1, s_2, l_1, l_2; \tau_1, \tau_2) = \frac{ \partial^2 F_e(\kappa_1, \kappa_2, s_1, s_2, l_1, l_2; \tau_1, \tau_2)}{\partial \kappa_1 \partial \kappa_2 }.
\end{eqnarray*}

Second, we list the following assumptions to guarantee the theoretical results of the proposed test in temporal dependent experiments. Notice that we allow $m$ to grow with $n$.

\begin{asmp}
\label{assump:kernel}
The kernel function $K(\cdot)$ is a symmetric probability density function defined over the interval $ [-1,1] $. It is Lipschitz continuous and satisfies the condition $\int_{-1}^1 | t K^\prime(t )| dt < \infty$, signifying its finiteness under the integral of its absolute derivative. 
\end{asmp}

\begin{asmp}
\label{assump:F_e}
The probability density function $f_e(\kappa, s; \tau)$ is strictly positive and continuously varies as a function of $\kappa$. It is twice differentiable at any $s$, possessing a second-order derivative that is both continuous and uniformly bounded. 
Similarly, the joint probability density function $f_e(\kappa_1, \kappa_2, s_1, s_2; \tau_1, \tau_2)$ is strictly positive and continuously varies as a function of $(\kappa_1,\kappa_2)$. It is twice differentiable at any pair $(s_1,s_2)$, maintaining a second-order derivative that is continuous and uniformly bounded.	
\end{asmp}

\begin{asmp}
\label{assump:Z}
The covariates $Z_i$ are drawn independently and identically from a sub-Gaussian process. Furthermore, for any integer $t$ within the range $1\leq t\leq m$, the smallest eigenvalue of the expected value matrix $\Mean(Z_{i,t} Z_{i,t}^\top)\in\M^{p\times p}$ remains greater than zero.  
\end{asmp}

\begin{asmp}
\label{assump:coe}
All components of $\theta_\tau (ms)$ and $\Theta(ms)$ possess second-order derivatives with respect to $s$ that are not only bounded but also continuous for each $\tau$. 
\end{asmp}	

\begin{asmp}
\label{asmp:st1}
For any $\tau$ in the interval [0, 1], there exist constants $0<q<1$, $M_{\Gamma},M_{\beta}>0$, and $M_{min}>0$ such that   $\Vert\Phi(t)\Vert_\infty\leq q$, $\Vert\Gamma(t)\Vert_\infty\leq M_{\Gamma}$, and $\Vert\beta_\tau(t)\Vert_\infty\leq M_\beta$.
\end{asmp}


Assumption \ref{assump:kernel}, which is often found in the literature on varying coefficient models and kernel smoothing, pertains to regular conditions in kernel methods \citep[see e.g.,][]{zhu2012multivariate, zhu2014spatially, cao2015regression}.
Assumption \ref{assump:F_e} addresses the distribution function of the error process, a condition frequently stipulated in the quantile regression literature \citep{koenker2001quantile, cai2012semiparametric,ma2019quantile}. This assumption is crucial for establishing the necessary conditions on the distribution and density functions of the error process.
Assumption \ref{assump:Z} ensures the well-defined nature of the first-step estimators, essentially guaranteeing the positive definiteness of the moment of the design matrix.
Assumption \ref{assump:coe} is a common requirement in the varying coefficient literature. It mandates the smoothness of the varying coefficients \citep{zhu2012multivariate,zhu2014spatially,zhou2021quantile}. 
Finally, Assumption \ref{asmp:st1} is put forth to ensure the stationarity of the observation vector. This type of condition is frequently imposed in the time series literature \citep[see e.g.,][]{brockwell2009time}. 

Finally, we present technical assumptions that establish the consistency of the proposed test in a spatiotemporal dependent experiment. We allow both $m$ and $r$ to grow with $n$.

\begin{asmp}
\label{assump:F_e_spatial}
The probability density function $f_e(\kappa, s, l; \tau)$ is strictly positive, continuous as a function of $\kappa$, and twice differentiable at any $(s, l)$ with continuous and uniformly bounded second-order derivative. Moreover, the joint probability density function $f_e(\kappa_1, \kappa_2, s_1, s_2, l_1, l_2; \tau_1, \tau_2)$ is strictly positive, continuous at $(\kappa_1,\kappa_2)$, and twice differentiable at any $(s_1, s_2, l_1, l_2)$ with continuous and uniformly bounded second-order derivative.
\end{asmp}

\begin{asmp}
\label{assump:Z_spatial}
The covariate ${Z}i$s are independently and identically distributed from a sub-Gaussian process. Furthermore, for $1\leq t\leq m$ and $1 \leq \iota \leq r$, the minimum eigenvalue of $\Mean(Z{i,t, \iota} Z_{i,t, \iota}^\top)\in\M^{p\times p}$ is bounded away from zero.
\end{asmp}

\begin{asmp}
\label{assump:coe_spatial}
All components of $\theta_\tau(m s, r l)$ and $\Theta( m s, r l)$ have bounded and continuous second-order derivatives with respect to $(s, l)$ for each $\tau$.
\end{asmp}

\begin{asmp}
\label{asmp:st1_spatial}
There exist some constants $0<q<1$, $M_{\Gamma},M_\beta>0$ such that $\Vert\Phi(t, \iota)\Vert_\infty\leq q$,
$\Vert \Gamma_{1,\tau} (t, \iota) \Vert_\infty\leq M_{\Gamma}$ and $\Vert\beta_\tau (t, \iota)\Vert_\infty\leq M_\beta$.
\end{asmp}



Assumptions \ref{assump:F_e_spatial}--\ref{asmp:st1_spatial}  
reduce to
those in Assumptions \ref{assump:F_e}--\ref{asmp:st1} 
when $r=1$.


\section{Asymptotic properties of the test procedures}
\label{sec: decomposition theoretical results}

In this section, we establish the asymptotic properties of the proposed test procedures in Section \ref{sec:DEIE}. We first establish the asymptotic normality of the one-step and second-step estimators, $\widehat{\bm{\theta}}(\tau) = \{ \widehat \theta_\tau(1)^\top, \dots, \widehat \theta_\tau(m)^\top\}^\top$ and $\widetilde{\bm{\theta}}(\tau) = \{\widetilde \theta_\tau(1)^\top, \dots, \widetilde \theta_\tau(m)^\top\}^\top$ in the temporal dependent experiment, and prove the validity of the test procedure for QDE$_{\tau}$.

Let's define $\Lambda(\tau) = \diag[f_e( 0,t_1; \tau), \dots, f_e( 0,t_m; \tau)]$, and $G$ as an $m \times m$ matrix where the $(i,j)$th entry equals $F_e(0,0, t_i, t_j; \tau)$. Furthermore, we define
\begin{eqnarray}\label{eq:V theta}
\bm{V}_{\widehat{\theta}\tau}=(\Mean \bm{Z}{i}^\top \Lambda(\tau) \bm{Z}i)^{-1}\Mean (\bm{Z}{i}^\top G \bm{Z}{i})
(\Mean \bm{Z}{i}^\top \Lambda(\tau) \bm{Z}i)^{-1}~~~~
\mbox{and}~~~~
\bm{V}_{\widetilde{\theta}\tau} = {m-1 \over m} \bm{V}_{\widehat{\theta}\tau}.
\end{eqnarray}
These two matrices represent the asymptotic covariance matrices of the one-step and two-step estimators, respectively.

\begin{thm}
\label{thm:theta_t asymp}
Suppose $\lambda_{\min}(\bm{V}_{\widetilde{\theta}, \tau})$ is bounded away from zero. 
Under Assumptions \ref{assump:kernel}--\ref{assump:coe} of the supplement material, for any $m(d+2)$-dimensional nonzero 
vectors $a_{n,1}$ and  $a_{n,2}$ with unit $\ell_2$ norm, 
we have as $n,m \rightarrow \infty$, $h\rightarrow0$, $mh \rightarrow \infty$,
\begin{itemize}
\item[(i)] $\sqrt{n} a_{n,1}^\top (\widehat{\bm{\theta}}_\tau - {\bm{\theta}}_\tau )/\sqrt{ {a}_{n,1}^\top \bm{V}_{\widehat{\theta}, \tau} {a}_{n,1}}\stackrel{d}{\longrightarrow} N(0,1)$. 

\item[(ii)] $\sqrt{n} {a}_{n,2}^\top (\widetilde{\bm{\theta}}_\tau - \bm{\theta}_\tau)/\sqrt{ {a}_{n,2}^\top\bm{V}_{\widetilde{\theta}, \tau} {a}_{n,2}}-b_n \stackrel{d}{\longrightarrow}N(0,1)$, where the bias $b_n=O(\sqrt{n}(h^2+m^{-1}))$. 

\item[(iii)]
Suppose $h=o(n^{-1/4})$ and $m \gg \sqrt{n}$. Then the conditional distribution 
of $\sqrt{n} ( \sum_{t=1}^m \tilde \gamma_\tau^b (t) - \sum_{t=1}^m \tilde \gamma_\tau (t)  )$ given the observed data weakly
converges to the null limit distribution of $\widehat{\textrm{CQDE}}_{\tau}= \sqrt{n}  (\sum_{t=1}^m \tilde \gamma_\tau (t) -\textrm{CQDE}_{\tau})$.

\item[(iv)]
If further Assumption \ref{asmp:st1} of the supplement material holds and $h=o(n^{-1/4})$, $m\asymp n^{c_2}$ for some $1/2< c_2<3/2$, 
then with probability approaching $1$,  there exist some constant $\varepsilon\in (0,1)$ and some positive constant $C$ such that
\begin{eqnarray*}
\sup_{ \tau \in [\varepsilon, 1- \varepsilon] }\sup_{z}|\prob(\widehat{\QIE}_\tau-\textrm{CQIE}_{\tau}  \leq z)-\prob( \widehat{\QIE}_\tau^b-\widehat{\QIE}_\tau \leq z |\textrm{Data}) \vert \\
\leq  C(\sqrt{n}h^2 + \sqrt{n} m^{-1} 
+n^{-1/8}). 
\end{eqnarray*}
\end{itemize}
\end{thm}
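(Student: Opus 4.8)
The plan is to prove the four claims sequentially, since each builds on the previous. The cornerstone is a Bahadur representation for the one-step quantile-regression estimators. Because the check-loss $\rho_\tau$ is non-differentiable, I would invoke classical M-estimation theory (as in \citet{koenker1987estimation}, cited for the main theorem) to obtain, for each $t$,
\begin{equation*}
\widehat{\theta}(t,\tau)-\theta(t,\tau)=\big(\Mean[f_e(0,t;\tau)Z_{i,t}Z_{i,t}^\top]\big)^{-1}\frac{1}{n}\sum_{i=1}^n Z_{i,t}\,\psi_\tau\big(e_i(t,\tau)\big)+o_p(n^{-1/2}),
\end{equation*}
where $\psi_\tau(u)=\tau-\mathbbm{1}(u<0)$. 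Stacking over $t=1,\dots,m$ turns $\widehat{\bm\theta}_\tau-\bm\theta_\tau$ into a sum of $n$ i.i.d.\ (over days) mean-zero vectors whose within-day covariance couples different time points through the joint error law, producing the sandwich form $\bm V_{\widehat{\theta}\tau}$ via $\Lambda(\tau)$ and $G$. For (i), I would then apply the Cram\'er--Wold device together with a Lindeberg triangular-array CLT to the scalar $\sqrt n\,a_{n,1}^\top(\widehat{\bm\theta}_\tau-\bm\theta_\tau)$, using Assumption \ref{assump:Z} (sub-Gaussianity and the eigenvalue lower bound) to verify the Lindeberg condition uniformly as $m$ grows.

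For (ii), I would write $\widetilde\theta(t,\tau)=\sum_j\omega_{j,h}(t)\widehat\theta(j,\tau)$ and split the error into a smoothing bias and a stochastic term. A second-order Taylor expansion of $\theta(\cdot,\tau)$ under the smoothness Assumption \ref{assump:coe}, combined with the symmetry of $K$ (Assumption \ref{assump:kernel}) and a Riemann-sum approximation of the weights, yields the bias $b_n=O(\sqrt n(h^2+m^{-1}))$; the stochastic term inherits asymptotic normality from the representation above, and a direct variance computation of the kernel-weighted sum gives the stated leading covariance $\bm V_{\widetilde{\theta}\tau}=\tfrac{m-1}{m}\bm V_{\widehat{\theta}\tau}$. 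Part (iii) then concerns $\widehat{\QDE}_\tau=\sum_{t=1}^m\widetilde\gamma(t,\tau)$, which by the decomposition in Section \ref{sec:DEIE} is a fixed linear functional $a_{n,2}^\top\widetilde{\bm\theta}_\tau$ selecting and summing the $\gamma$-coordinates. The conditions $h=o(n^{-1/4})$ and $m\gg\sqrt n$ force $b_n\to0$, so by (ii) the statistic is asymptotically normal with no bias. I would then show the residual bootstrap reproduces this limit: conditional on the data, resampling the entire per-day error process and regenerating pseudo-outcomes makes $\sqrt n(\sum_t\widetilde\gamma^b(t)-\sum_t\widetilde\gamma(t))$ a conditionally i.i.d.\ sum whose conditional covariance converges in probability to $a_{n,2}^\top\bm V_{\widetilde{\theta}\tau}a_{n,2}$, so a conditional Lindeberg CLT delivers the same Gaussian limit. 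The subtlety here is that resampling whole error paths, rather than pointwise residuals, is precisely what preserves the within-day dependence encoded in $G$.

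Part (iv), the bootstrap bound for $\widehat{\QIE}_\tau$, is the hard part and is established in parallel with Theorem \ref{thm:bootstrap_consistency_QTE}, since $\QIE$ is the carryover component of $\QTE$. Unlike $\QDE$, the estimator $\widehat{\QIE}_\tau$ is a nonlinear functional of the parameters through the matrix products $\prod_{l=k+1}^{t-1}\widetilde\Phi(l)$. I would first linearize $\widehat{\QIE}_\tau-\QIE_\tau$ by a delta-method expansion, controlling the remainder with the stationarity Assumption \ref{asmp:st1}: since $\Vert\Phi(l)\Vert_\infty\le q<1$, the products and their perturbations decay geometrically in $t-k$, so the $m$-fold sum stays bounded and the linearization error is negligible uniformly in $\tau\in[\varepsilon,1-\varepsilon]$. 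Combining the linearized form with the Bahadur representation reduces $\sqrt n(\widehat{\QIE}_\tau-\QIE_\tau)$ to a high-dimensional i.i.d.\ sum, and I would then invoke the high-dimensional Gaussian/multiplier bootstrap machinery of \citet{chernozhukov_gaussian_2013} to obtain the nonasymptotic Kolmogorov-distance bound with the explicit terms $\sqrt n h^2$, $\sqrt n m^{-1}$ and $n^{-1/8}$, before transferring it to the bootstrap law via the resampling argument of (iii). The principal difficulty, and the reason the classical weak-convergence theory does not apply, is simultaneously handling the diverging horizon $m\asymp n^{c_2}$, the nonlinearity of the matrix-product functional, and uniformity over $\tau$ and $z$; the geometric-decay control from Assumption \ref{asmp:st1} together with the nonasymptotic Gaussian approximation are what render this tractable.
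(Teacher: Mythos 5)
Your parts (i), (ii) and (iv) follow essentially the same route as the paper: (i) the pointwise Bahadur representation of \cite{koenker1987estimation} plus Cram\'er--Wold and a Lindeberg check; (ii) the bias/variance split of the kernel-smoothed estimator, with Taylor and Riemann-sum arguments producing the $O(\sqrt{n}(h^2+m^{-1}))$ bias and the $(m-1)/m$ variance reduction; and part (iv) is, in the paper, simply deferred to the proof of Theorem \ref{thm:bootstrap_consistency_QTE}, whose ingredients (geometric-decay control of the matrix products via Assumption \ref{asmp:st1} and the Gaussian-comparison machinery of \cite{chernozhukov_gaussian_2013}) are exactly the ones you list.

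The genuine gap is in part (iii). You assert that, conditional on the data, resampling whole error paths ``makes $\sqrt{n}(\sum_{t=1}^m\tilde\gamma^b_\tau(t)-\sum_{t=1}^m\tilde\gamma_\tau(t))$ a conditionally i.i.d.\ sum,'' and you then apply a conditional Lindeberg CLT. That statement is not true as written: by Step 4 of the bootstrap procedure in Section \ref{subsec:estimation_temporal}, $\tilde\gamma^b_\tau(t)$ is obtained by re-solving the non-differentiable check-loss minimization \eqref{eq:theta_estimator} on the pseudo-data and then kernel-smoothing, so the bootstrap statistic is an M-estimator built from the resampled errors, not a linear statistic of them. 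Before any conditional CLT applies, you must linearize this bootstrap M-estimator, i.e.\ prove a Bahadur representation conditional on the data,
\begin{eqnarray*}
\sqrt{n}\, D_t\,\bigl(\widehat{\theta}^b_\tau(t)-\widehat{\theta}_\tau(t)\bigr)
= \frac{1}{\sqrt{n}}\sum_{i=1}^n Z_{i,t}\,\psi_\tau\bigl(\widehat{e}^b_{i,\tau}(t)\bigr) + o_{p^*}(1),
\end{eqnarray*}
with remainder control adequate for summing over $m\gg\sqrt{n}$ time points. Establishing this is the entire substance of the paper's proof of (iii): it sets up the bootstrap loss $\ell^*_{nt}(\theta)$, proves conditional consistency $\widehat{\theta}^b_\tau(t)-\widetilde{\theta}_\tau(t)\to 0$, and Taylor-expands the conditional expected score $E^* S_{it}(\theta)$, following the wild-bootstrap-for-quantile-regression arguments of \cite{feng2011wild}. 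Because the check loss is non-smooth, this linearization cannot be taken for granted; once it is in place, the conditional covariance convergence and Lindeberg step you describe do finish the argument, mirroring (ii).
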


Part (i) of Theorem \ref{thm:theta_t asymp} is established using standard arguments in the quantile regression literature \citep[see e.g.,][]{koenker2001quantile}. This part of the theorem provides the limit normal distribution of the raw estimators. 
Part (ii) further shows that the smoothing step introduces only a negligible bias and reduces the asymptotic covariance by a factor of $1/m$. These two results imply that the proposed test statistic for CQDE $\sum_{t=1}^m \tilde \gamma_\tau (t)$ is asymptotically normal. However, as noted earlier, consistently estimating its asymptotic variance remains a challenge due to its complex form. 
Part (iii) of Theorem \ref{thm:theta_t asymp} validates the proposed bootstrap procedure for CQDE. Instead of using the empirical quantile of the bootstrapped statistics to determine the critical value, our test procedure relies on normal approximation and uses bootstrapped statistics to estimate the variance. 
Part (iv) validates the bootstrap method for CQIE, which directly follows from the bootstrap consistency for CQTE in Theorem \ref{thm:bootstrap_consistency_QTE}.

\section{Proofs of the theorems}
For any two positive sequences $\{a_n\}_n$ and $\{b_n\}_n$, the notation $a_n\lesssim b_n$ implies that there exists some constant $C>0$ such that $a_n\le Cb_n$ for any $n$. We use $C$ to denote some generic constant whose value is allowed to change from place to place. 
\begin{proof}[\bf Proof of Proposition \ref{prop0}]
Under the assumption of monotonicity, it is immediately evident that $\sum_{t=1}^m \phi_{t} (\mathcal{E}_t, \bar{a}_t, \tau )$ is a strictly increasing function of $\tau\in (0,1)$ when $\bar{a}_t=\bm{1}_t$ or $\bm{0}_t$. The independence between $U$ and $\mathcal{E}_t$ leads to the following: 
\begin{eqnarray*}
\mathbb{P}(   Y_t^*(\bar{a}_t) \leq   \phi_{t} ( \mathcal{E}_t, \bar{a}_t, \tau  )|\mathcal{E}_t )
=
\mathbb{P}(  \phi_{t} ( \mathcal{E}_t, \bar{a}_t, U  ) \leq   \phi_{t} ( \mathcal{E}_t, \bar{a}_t, \tau  )|\mathcal{E}_t ) = \mathbb{P}(U \leq \tau ) = \tau,
\end{eqnarray*}
when $\bar{a}_t=\bm{1}_t$ or $\bm{0}_t$. 
By definition, we know that $\mathcal{E}_1\subseteq \mathcal{E}_2\subseteq \cdots \subseteq \mathcal{E}_m$, which yields: 
\begin{eqnarray*}
\mathbb{P}( \sum_{t=1}^m  Y_t^*(\bar{a}_t) \leq  \sum_{t=1}^m  \phi_{t} ( \mathcal{E}_t, \bar{a}_t, \tau  )|\mathcal{E}_m)
=
\mathbb{P}( \sum_{t=1}^m  \phi_{t} ( \mathcal{E}_t, \bar{a}_t, U  ) \leq  \sum_{t=1}^m  \phi_{t} ( \mathcal{E}_t, \bar{a}_t, \tau  )|\mathcal{E}_m )
=
P( U \leq \tau ) = \tau.
\end{eqnarray*}
This leads to the conclusion that: 
$$
\textrm{CQTE}_\tau=\textrm{SCQTE}_\tau=  \sum_{t=1}^m  \phi_{t} ( \mathcal{E}_t, \bm{1}_{t}, \tau  ) - \sum_{t=1}^m  \phi_{t} ( \mathcal{E}_t, \bm{0}_{t}, \tau  ).
$$
With these steps, the proof is completed. 
\end{proof}

\begin{proof}[\bf Proof of Proposition \ref{prop:QTE_form}]
Consider the following varying coefficient models  
\begin{eqnarray}
\label{model:TQVCM_outcome_DE}
Y^*_{t}(\bar{a}_t)&=& \beta_{0 }(t, U) + {S^*_{t}}(\bar{a}_t)^\top\beta(t, U_i) + a_t\gamma(t, U_i),	\\
\label{model:TQVCM_outcome_IE}
S^*_{t+1}(\bar{a}_t) &=&\phi_{0 }(t)+\Phi (t) S^*_{i,t}(\bar{a}_{t-1})+a_t \Gamma(t) + E_{i}(t+1), 
\end{eqnarray}
where both $U$ and $E(t)$ are independent of the treatment history.  
Through simple calculations, we obtain: 
\begin{eqnarray*}
\phi_{t} ( \mathcal{E}_{t}, \bar a_{t}, U  )
&=&
\beta_{0 }(t, U)  + a_{t} \gamma(t, U) + \beta(t, U) ^\top 
\Big\{ \sum_{k=1}^{t-1} \Big( \prod_{l=k+1}^{t-1} \Phi(l) ( \phi_{0}(k) + a_{k} \Gamma(k)  )  \Big)  \Big\}  \\
&&+
\prod_{l=1}^{t-1}  \Phi(l)  S_{1} + \sum_{k=2}^t ( \prod_{l=k}^{t-1} \Phi(l) E(k) ).
\end{eqnarray*}
Given these assumptions and Proposition \ref{prop0}, Proposition \ref{prop:QTE_form} is therefore valid. 
\end{proof}

\begin{proof}[\bf Proof of Proposition \ref{prop:QTE_form_spatial}]
The proof of Proposition \ref{prop:QTE_form_spatial} is similar to that of Proposition \ref{prop:QTE_form}, and we omit it to save space. In the following, we introduce the two varying coefficient models for the potential outcomes, given by
\begin{eqnarray}
\label{model:STQVCM_outcome_DE}
Y^*_{t,\iota}(\bar{a}_{t,[1:r]}) = \beta_{0} (t,\iota,U) + \beta^\top (t,\iota,U){S^*_{t,\iota}}(\bar{a}_{t-1,[1:r]})+ a_{t,\iota}\gamma_{1}(t,\iota,U)+ \bar{a}_{i,t,\mathcal{N}_\iota}\gamma_{2}(t,\iota,U),\\ 
\label{model:STQVCM_outcome_IE}	
S^*_{t+1,\iota}(\bar{a}_{t,[1:r]}) = \phi_{0 }(t,\iota) + \Phi(t,\iota)S^*_{t,\iota}(\bar{a}_{t-1,[1:r]}) + a_{t,\iota}\Gamma_{1 }(t,\iota) + \bar{a}_{i,t,\mathcal{N}_\iota}\Gamma_{2 }(t,\iota) + E(t+1, \iota), 
\end{eqnarray}
where the rank variable $U$ and the errors $\{E(t,\iota)\}$ are independent of the treatment history. 
\end{proof}

\begin{proof}[\bf Proof of Theorem \ref{thm:bootstrap_consistency_QTE}]
First, we provide a sketch of the proof, which is divided into three parts.
In the initial step, we acquire a uniform Bahadur representation of the first-stage estimator, which is detailed in Lemma \ref{lem:uniform_Bahadur}.
In the second step, we dissect the difference between the distributions of the proposed statistic and the bootstrap statistic. These differences are subsequently bounded by employing the technique of comparison of distributions, as elaborated in \cite{chernozhukov_gaussian_2013}.
Finally, we investigate the power properties of the proposed test. 

We next detail each of the step.
Recall that $\tilde \theta_\tau (t) =  \sum_{\tilde t=1}^m \omega_{\tilde t,h}(t) \widehat\theta_\tau (\tilde t)$. According to the uniform Bahadur representation in  Lemma \ref{lem:uniform_Bahadur}, 
we have 
$$
\tilde \theta_\tau (t) = \theta_{ s, \tau}( t ) + {1 \over n} \sum_{i=1}^n ( \sum_{\tilde t=1}^m B_{i,\tilde t}(t) \psi_\tau( e_{i,\tau}( t) ) )
+ O_p( n^{-3/4} \log^3(nm) + n^{-1} \log^4(nm) ),
$$
where
$ 
B_{i,\tilde t}(t) = \omega_{\tilde t,h}(t) D_{\tilde t}^{-1} Z_{i, \tilde t},$ $\theta_{s, \tau}(t) = \sum_{\tilde t=1}^m \omega_{\tilde t,h}(t) \theta_\tau (\tilde t),$ 
$D_t =
n^{-1} \sum_{i=1}^n f_{e_i}(0, t; \tau) Z_{i,t}  Z_{i,t}^\top,
$ 
and the big-$O_p$ term is uniform in $\tau$ and $t$. 

Let $e_i^\theta (t; \tau)=\sum_{k=1}^mB_{i,k}(t)  \psi_\tau( E_{i}( k ) )=\{ e_{i}^{\beta_0} (t; \tau),(e_{t}^\beta (t; \tau) )^\top,e_{i}^\gamma(t; \tau) \}^\top$ and $e^\theta (t; \tau ) =n^{-1/2}\sum_{i=1}^n E_{i}^\Theta (t)$. 
Similarly, we can represent $\widetilde{\Theta}(t)$ as 
\begin{equation*}
\widetilde{\Theta}(t)=\Theta_{s}(t)+\frac{1}{n}\sum_{i=1}^n \left(\sum_{k=1}^{m-1}B_{i,k}(t)E_{i}(k) \right) + O_p( n^{-3/4} \log^3(nm) + n^{-1} \log^4(nm) ),
\end{equation*}
where $\Theta_{s}(t )=\sum_k \omega_{k,h}(t) \Theta(k)$. 

Let $E_{i}^\Theta(t )=\sum_{k=1}^mB_{i,k}(t)E_{i}(k)=\{E_{i}^{\phi_0}(t),(E_{i}^\Phi (t) )^\top,( E_{i}^\Gamma (t))^\top \}$ 
and $E^\Theta (t)=n^{-1/2}\sum_{i=1}^n E_{i}^\Theta (t) $. It follows that
\begin{eqnarray*}
\widetilde\beta_\tau(t)=\beta_{s, \tau}(t )+\frac{1}{\sqrt{n}}e^\beta(t; \tau),\
\widetilde\Gamma(t)=\gamma_{s, \tau}(t )+\frac{1}{\sqrt{n}}e^\gamma(t; \tau), \\
\widetilde\Phi (t)=\Phi_{s}( t )+\frac{1}{\sqrt{n}}E^\Phi (t),\
\widetilde\Gamma (t)=\Gamma_{s}(t)+\frac{1}{\sqrt{n}}E^\Gamma (t).
\end{eqnarray*}
For simplicity, let $\vc(\cdot)$ be the operator that reshapes a matrix into a vector by stacking its columns on top of one another. Define 
\begin{align}\label{eq:GARx}
&x_{i} (t; \tau ) = \Big[ (e_{i}^\gamma (t; \tau ) )^\top,
(e_{i}^\beta (t; \tau ) )^\top,
\{\vc(E_{i}^\Phi (t ) )\}^\top,
(E_{i}^\Gamma (t ) )^\top
\Big]^\top\in\R^{d(d+3)},\nonumber\\
&x_i(\tau)=\big(x_{i}(2; \tau )^\top,x_{i}(3; \tau )^\top,\ldots,x_{i}(m; \tau )^\top\big)^\top\in\R^{p_x},\quad p_x=(m-1)dp,\ d=p-3.
\end{align}
For any $\tau$, let $\left\{E_{i}^b (j),E_{i}^b(j) \right\}$ be randomly selected samples (with replacement) from $\{ \widehat E_{i}(j), \widehat E_{i}(j) \}$. 
We next define
\begin{align}\label{eq:Empirical_bootstrap}
&w_{i}(t; \tau) = \Big[(e_{i}^{\gamma,b} (t; \tau) )^\top,
(e_{i}^{\beta,b} (t; \tau) )^\top,
\{\vc(E_{i}^{\Phi,b} (t) )\}^\top,
(E_{i}^{\Gamma,b} (t) )^\top
\Big]^\top\in\R^{d(d+3)},\nonumber\\
&w_i(\tau)=\big(w_{i}(2; \tau )^\top, w_{i}(3; \tau )^\top,\ldots,w_{i}(m; \tau )^\top\in\R^{p_x}.
\end{align}
We also define $\left\{ \tilde e_{i,\tau} (j), \tilde E_{i} (j) \right\}$ be the empirical Gaussian analogs of $\left\{E_{i}(j ),E_{i}(j) \right\}$
such that
$
\tilde E_{i} (j )=\widehat E_{i}(j)\xi_i,\tilde  E_{i}(j )=\widehat E_{i}(j )\xi_i
$
where $\xi_1, \dots, \xi_n$ are i.i.d standard normal random variables,
and
\begin{align}\label{eq:GARw}
&\tilde w_{i}(t; \tau) = \Big[ (\tilde e_{i}^{\gamma} (t; \tau) )^\top,
(\tilde e_{i}^{\beta} (t; \tau) )^\top,
\{\vc( \tilde E_{i}^{\Phi} (t) )\}^\top,
(\tilde E_{i}^{\Gamma} (t) )^\top
\Big]^\top\in\R^{d(d+3)},\nonumber\\
&\tilde w_i(\tau)=\big( \tilde w_{i}(2; \tau )^\top, \tilde w_{i}(3; \tau )^\top,\ldots, \tilde w_{i}(m; \tau )^\top\in\R^{p_x}.
\end{align}

Let
\begin{align*}
&X(\tau)   = (X_2^\top(\tau),X_3^\top(\tau),\ldots,X_m^\top(\tau) )=\frac{1}{\sqrt{n}}\sum_{i=1}^n x_i(\tau),\\
&W( \tau ) = (W_2^\top(\tau),W_3^\top(\tau),\ldots,W_m^\top(\tau) )=\frac{1}{\sqrt{n}}\sum_{i=1}^n w_i(\tau),\\
&\tilde W( \tau ) = (\tilde W_2^\top(\tau), \tilde W_3^\top(\tau),\ldots, \tilde W_m^\top(\tau) )=\frac{1}{\sqrt{n}}\sum_{i=1}^n \tilde w_i(\tau).
\end{align*}
We remark that the term $X(\tau)$ represents the difference between the estimators and the true smoothed coefficients
in terms of the error processes,
while the term $W( \tau )$ represents the difference between the bootstrap estimators and the 
obtained estimators,
and $\tilde W( \tau ) $ is the Gaussian analogy of $X(\tau)$. 

Define the following functions
{\small\begin{eqnarray*}
\textrm{CQDE}_{\tau}(X;\theta,\Theta)&\equiv& 
\sum_{l=1}^m \left(\Gamma(l)+\frac{e^\gamma(l;\tau)}{\sqrt{n}}\right),\\
\textrm{CQIE}_{\tau}(X;\theta,\Theta)&\equiv& \sum_{l=2}^m
\left[\left(\beta_\tau(l)+\frac{e^\beta(l;\tau)}{\sqrt{n}}\right)^\top
\sum_{j=1}^{l-1}\left\{\prod_{k=j+1}^{l-1}\left(\Phi(k)+\frac{E^\Phi(k)}{\sqrt{n}}\right)\left(\Gamma(j)
+\frac{E^\Gamma(j )}{\sqrt{n}}\right)\right\}\right], \\
\textrm{CQTE}_{\tau}(X;\theta,\Theta) &\equiv& 
\textrm{CQDE}_{\tau}(X;\theta,\Theta) +	\textrm{CQIE}_{\tau}(X;\theta,\Theta).
\end{eqnarray*} }

To verify the proposed bootstrap procedure, we aim to provide an upper error bound for 
\begin{eqnarray*}
\rho^*(z;\tau)=\Big|\prob \left(  \kappa_{\tau} \leq z \right)
-\prob\left( \kappa_{\tau}^b \leq z \Big|\textrm{Data}\right) \Big|,
\end{eqnarray*}
where $\kappa_{\tau}=m^{-1}(T_{\tau}-\textrm{CQTE}_{\tau})$ and $\kappa_{\tau}^b=m^{-1} (T_{\tau}^b - T_{\tau})$. 
Notice that $\kappa_{\tau}$ and $\kappa_{\tau}^b$ can be represented by
\begin{eqnarray*}
\kappa_{\tau} &=& m^{-1}\QTE_{\tau}(X;\theta_{s, \tau},\Theta_{s})-m^{-1}\QTE_{\tau}(0;\theta_\tau,\Theta),\\
\kappa_{\tau}^b
&=&m^{-1}\QTE_{\tau}(W;\widetilde\theta_\tau,\widetilde\Theta)-m^{-1}\QTE_{\tau}(0;\widetilde\theta_\tau,\widetilde\Theta),
\end{eqnarray*}
respectively. 

Define $\widetilde \kappa_{\tau}^b = m^{-1}\QTE_{\tau}( \widetilde W;\widetilde\theta_\tau,\widetilde\Theta)-m^{-1} \QTE_{\tau}(0;\widetilde\theta_\tau,\widetilde\Theta)$, 
$\kappa_{01, \tau }^*=m^{-1}\QTE_{\tau}(X;\theta_{s, \tau},\Theta_{s})-m^{-1}\QTE_{\tau}(0;\theta_{s, \tau},\Theta_{s})$, 
$\kappa_{02, \tau }^*=m^{-1}\QTE_{\tau}(0;\theta_{s, \tau},\Theta_{s})-m^{-1}\QTE_{\tau}(0;\theta_\tau,\Theta)$. It follows that $\kappa_{\tau}=\kappa_{01, \tau}^*+\kappa_{02, \tau }^*$.
Therefore,
\begin{align}
\label{eq:rho*}
\rho^*(z;\tau)&=\Big|P\left( \kappa_{\tau}\le z\right)-P\left( \kappa_{\tau}^b\le z \Big|\textrm{Data} \right)\Big|  \nonumber \\
& \leq 
\Big| P\left( \kappa_{\tau}\le z\right)-P \left(  \kappa_{01, \tau}^*\le z\right) \Big| 
+ \Big|P\left( \kappa_{01, \tau}^*\le z\right)
-P\left( \widetilde \kappa_{\tau}^b\le z \Big|\textrm{Data} \right)\Big|   \nonumber \\
& + \Big|P\left( \widetilde \kappa_{\tau}^b\le z \Big|\textrm{Data}\right)-P\left( \kappa_{\tau}^b\le z  \Big|\textrm{Data} \right)\Big|   \nonumber \\
& =
I_1 + I_2 + I_3,
\end{align}
where $I_1$, $I_2$ and $I_3$ denote the above three components, respectively.


Define the event 
$$
\mathcal{E} = \{ Z_{i, t}: \max_i \max_t Z_{i, t} \leq C \log(nm) \}.
$$
Under the sub-Gaussian condition, it follows from Bonferroni's inequality that $\prob( \mathcal{E} ) \geq  1 - \exp( - C\log^2(nm)/2c + \log(nm) ) $ for some constant $c>0$. This probability approaches 1 as $nm$ diverges to infinity. We can derive the desired result by
showing that on the event $\mathcal{E}$, 
$\sup_\tau \sup_z \rho^*(z;\tau) \leq C n^{-1/8} $.

On the event $\mathcal{E}$, 
similar to the proofs of Lemma 3 and Theorem 2 in \citet[Section D]{Luo2021policy}, we can prove that there exits some constant $C > 0$ such that
\begin{align}
\label{eq:I_1 order}
\sup_z I_1 &\leq C ( n^{1/2} h^2 + n^{1/2} m ^{-1} + n^{-1/4} \log^3(nm) + n^{-1/2} \log^4(nm) ), \\
\label{eq:I_2 order}
\sup_z I_2 &\leq C n^{-1/8}.
\end{align}
This yields the upper error bounds for $I_1$ and $I_2$. 

To bound $I_3$, we introduce the following notations.
Let $\left\{ \check e_{i,\tau} (j), \tilde E_{i} (j) \right\}$ be the empirical Gaussian analogs of $\left\{e^b_{i, \tau}(j ),E^b_{i, \tau}(j) \right\}$
such that
$
\check e_{i,\tau} (j)= e^b_{i}(j; \tau)\xi_i,\check E_{i}(j )=E^b_{i}(j )\xi_i,
$
where $\xi$ is the same for $\tilde w_i(\tau$) in \eqref{eq:GARw}, and
\begin{align*}
&\check w_{i}(t; \tau) = \Big[(\check e_{i}^{\gamma} (t; \tau) )^\top,
(\check e_{i}^{\beta} (t; \tau) )^\top,
\{\vc( \check E_{i}^{\Phi} (t) )\}^\top,
(\check E_{i}^{\Gamma} (t) )^\top
\Big]^\top\in\R^{d(d+3)},\nonumber\\
&\check w_i(\tau)=\big( \check w_{i}(2; \tau )^\top, \check w_{i}(3; \tau )^\top,\ldots, \check w_{i}(m; \tau )^\top\in\R^{p_x},   \\
& \check W( \tau ) = (\check W_2^\top(\tau), \check W_3^\top(\tau),\ldots,\check W_m^\top(\tau) )=\frac{1}{\sqrt{n}}\sum_{i=1}^n \check w_i(\tau), \\
& \check \kappa_{\tau}^b \equiv m^{-1}\QTE_{\tau}( \check W;\widetilde\theta_\tau,\widetilde\Theta)-m^{-1}\QTE_{\tau}(0;\widetilde\theta_\tau,\widetilde\Theta) .
\end{align*}
It follows from triangle inequality that
\begin{align}
\label{eq:I_3 decomposition}
I_3 &= \Big|P\left\{ \widetilde \kappa_{\tau}^b\le z\right\}-P\left\{ \kappa_{\tau}^b\le z\right\}\Big| \nonumber \\
& \leq 
\Big|P\left\{ \widetilde \kappa_{\tau}^b\le z\right\}-P\left\{ \check \kappa_{\tau}^b\le z\right\}\Big|
+ \Big|P\left\{ \check \kappa_{\tau}^b\le z\right\}-P\left\{ \kappa_{\tau}^b\le z\right\}\Big|   \nonumber \\
&   \equiv I_{31} + I_{32},  
\end{align}
Similar to the derivation of Lemma 3 in \cite{Luo2021policy},
one can deduce that
\begin{align}
\label{eq:I_32 order}
\sup_z I_{32} \leq C n^{-1/8}.
\end{align}

As for $I_{31}$, notice that
\begin{eqnarray*}
&&\widetilde \kappa_{\tau}^b -   \check \kappa_{\tau}^b
=
\frac{1}{m} \QTE_{\tau}( \check W;\widetilde\theta_\tau,\widetilde\Theta ) -  \frac{1}{m} \QTE_{\tau}(  \widetilde W;\widetilde\theta_\tau,\widetilde\Theta ) \\
&=&
\frac{1}{m}\sum_{l=1}^m \left(\Gamma(l)+\frac{ \check e^{\gamma} (l;\tau)}{\sqrt{n}} \right) 
- 
\frac{1}{m}\sum_{l=1}^m \left(\Gamma(l)+\frac{ \tilde e^{\gamma} (l;\tau)}{\sqrt{n}} \right)  \\
&&+ \frac{1}{m}\sum_{l=2}^m
\left[\left(\beta_\tau(l)+\frac{ \check e^{\beta} (l;\tau)}{\sqrt{n}}\right)^\top
\sum_{j=1}^{l-1}\left\{\prod_{k=j+1}^{l-1}\left(\Phi(k)+\frac{ \check E^{\Phi} (k)}{\sqrt{n}}\right)\left(\Gamma(j)
+\frac{\check E^{\Gamma}(j )}{\sqrt{n}}\right)\right\}\right] \\
&&-
\frac{1}{m}\sum_{l=2}^m
\left[\left(\beta_\tau(l)+\frac{\widetilde e^\beta(l;\tau)}{\sqrt{n}}\right)^\top
\sum_{j=1}^{l-1}\left\{\prod_{k=j+1}^{l-1}\left(\Phi(k)+\frac{\widetilde E^\Phi(k)}{\sqrt{n}}\right)\left(\Gamma(j)
+\frac{\widetilde E^\Gamma(j )}{\sqrt{n}}\right)\right\}\right]  
\end{eqnarray*}
Denote 
\begin{eqnarray*}
\delta_\beta (l, \tau) 
&=&  \frac{1}{\sqrt{n}}( \check e^\beta(l;\tau) -\widetilde e^\beta(l;\tau) )   \\
&=&
{1 \over n m} \sum_{i=1}^n \xi_i ( \sum_{\tilde t=1}^m B_{i,\tilde t}(l) \psi_\tau( e^b_i( l; \tau ) ) ) - 
{1 \over n m} \sum_{i=1}^n \xi_i ( \sum_{\tilde t=1}^m B_{i,\tilde t}(l) \psi_\tau( \widehat e_i( l; \tau ) ) ) \\
&=&
{1 \over n m} \sum_{i=1}^n \xi_i  ( \sum_{\tilde t=1}^m B_{i,\tilde t}(l) \{ \psi_\tau( e^b_i( l; \tau ) ) ) - \psi_\tau( \widehat e_i( l; \tau ) ) )  \}
\end{eqnarray*}
Notice that $\sup_{l, \tau} | \{ \psi_\tau( e^b_i(l; \tau ) - \psi_\tau( \widehat e_i( l; \tau )   \} | \leq 1$. On the event $\mathcal{E}$, we have
$$
\max_{l, \tau} \delta_\beta (l, \tau)  \leq \max_l |  \sum_{\tilde t=1}^m B_{i,\tilde t}(l) | | {1 \over n m} \sum_{i=1}^n \xi_i |= O( n^{-1/2}  \log (nm)  )
$$
due to the sub-Gaussian property of $Z_{i,t}$ and that $| n^{-1/2} \sum_{i=1}^n \xi_i | = O_p(1)$.
Similarly, for
$
\delta_\gamma (l, \tau) 
=  n^{-1/2}{\sqrt{n}}( \check e^\gamma(l;\tau) -\widetilde e^\gamma(l;\tau) ),  
\delta_\Phi (l) = n^{-1/2}( \check E^\Phi(l) -\widetilde E^{\Phi, b} (l) )  ,
\delta_\Gamma (l) = n^{-1/2}( \check E^\Gamma(l) -\widetilde E^{\Gamma, b} (l) )  ,
$
we have that 
$$
\max_{l, \tau} \| \delta_\gamma (l; \tau) \| = \max_{l} \| \delta_\Phi (l) \| = \max_{l} \| \delta_\Gamma (l) \| = O_p( n^{-1/2}  \log (nm)  ).
$$

We can deduce that $\widetilde \kappa_{\tau}^b -   \check \kappa_{\tau}^b = J_1 +J_2 - J_3$,
where
\begin{eqnarray*}
J_1&=&
\frac{1}{m} \sum_{l=1}^m \delta_\gamma (l, \tau),  \\
J_2&=&  
\frac{1}{m}\sum_{l=2}^m
\left[\left(\beta_\tau(l)+\frac{\widetilde e^\beta(l;\tau)}{\sqrt{n}} +  \delta_\beta (l, \tau)  \right)^\top
\sum_{j=1}^{l-1}\left\{\prod_{k=j+1}^{l-1}\left(\Phi(k)+\frac{\widetilde E^\Phi(k)}{\sqrt{n}} + \delta_\Phi (l)    \right) \right.\right. \\
&& \left. \left. \left(\Gamma(j)
+\frac{\widetilde E^\Gamma(j )}{\sqrt{n}} + \delta_\Gamma (l) \right)\right\}\right],  \\
J_3 &=&
\frac{1}{m}\sum_{l=2}^m
\left[\left(\beta_\tau(l)+\frac{\widetilde e^\beta(l;\tau)}{\sqrt{n}}\right)^\top
\sum_{j=1}^{l-1}\left\{\prod_{k=j+1}^{l-1}\left(\Phi(k)+\frac{\widetilde E^\Phi(k)}{\sqrt{n}}\right)\left(\Gamma(j)
+\frac{\widetilde E^\Gamma(j )}{\sqrt{n}}\right)\right\}\right].  
\end{eqnarray*}
Similar to the steps of deriving the orders of $\kappa_{01}^*$ in \eqref{eq:kappa_01_*},
we can obtain that
$$
| \widetilde \kappa_{\tau}^b -   \check \kappa_{\tau}^b | = O_p( n^{-1/2}  \log (nm)  ),
Var( \kappa_{\tau}^b  ) = Var( \check \kappa_{\tau}^b ) = O_p( n^{-1}  \log^2 (nm) ).
$$
On the event $\mathcal{E}$, conditional on the data,
$\widetilde \kappa_{\tau}^b$ and $\check \kappa_{\tau}^b$ are Gaussian random variables,
according to Lemma 3.1 in \cite{chernozhukov_gaussian_2013},
$$
\label{eq:I_31 order}
\sup_z I_{31}  \leq C | Var\{ (\widetilde \kappa_{\tau}^b) \} - Var\{ (\check \kappa_{\tau}^b) \} |^{1/3} \leq C n^{-1/3}  \log^{2/3} (nm).
$$
Plugging \eqref{eq:I_32 order} and \eqref{eq:I_31 order} into \eqref{eq:I_3 decomposition},  we obtain that
$\sup_z I_{3} \leq C ( n^{-1/3}  \log^{2/3} (nm) + n^{-1/8})$.
Together with equations \eqref{eq:rho*}, \eqref{eq:I_1 order} and \eqref{eq:I_2 order}, we come to the desired assertion.

Finally, we show that the proposed test has good power properties. Under the alternative that $ \QTE_\tau /m  \gg n^{-1/2}  \log(n m)$ and according to 
\eqref{eq:I_1 order}, we have
\begin{eqnarray*}
\label{eq:power_decomposition}
P(T_\tau / m \geq z_\alpha ) = P(T_\tau  /m -\QTE_\tau  /m \geq  z_\alpha  -\QTE_\tau/m  )  
=
P(  \kappa^*_{01,\tau}    \geq  z_\alpha  -\QTE_\tau /m  ) +  o_p(1) \\
= 1-P(  -\kappa^*_{01,\tau}    >  \QTE_\tau /m -z_\alpha )+o_p(1)
\geq 1- E(  \kappa^*_{01,\tau}  )^2 / ( z_\alpha  - \QTE_\tau /m  )^2 + o_p(1),  \nonumber
\end{eqnarray*}
where $z_\alpha$ is the critical value. If we can show that 
$E(  \kappa^*_{01,\tau}  )^2 = O(n^{-1}  \log^2 (n m))$, then it follows that $P(T_\tau / m \geq z_\alpha)$ approaches one.

Direct calculations lead to 
$\kappa^*_{01,\tau} = G_1 + G_2 + G_3$ where
\begin{eqnarray*}
G_1 &=& {1 \over m} \sum_{l=1}^m \left( \frac{e^\gamma(l;\tau)}{\sqrt{n}}\right), \\
G_2 &=& {1 \over m}
\sum_{l=2}^m \left( \frac{e^\beta(l;\tau)}{\sqrt{n}}\right)^\top
\sum_{j=1}^{l-1}\left\{\prod_{k=j+1}^{l-1}\left(\Phi(k)+\frac{E^\Phi(k)}{\sqrt{n}}\right)\left(\Gamma(j)
+\frac{E^\Gamma(j )}{\sqrt{n}}\right)\right\} , \\
G_3 & =& {1 \over m}
\sum_{l=2}^m \beta_\tau(l)^\top
\sum_{j=1}^{l-1}\left\{ \prod_{k=j+1}^{l-1}\left(\Phi(k)+\frac{E^\Phi(k)}{\sqrt{n}}\right)\left(\Gamma(j)
+\frac{E^\Gamma(j )}{\sqrt{n}}\right) -  \prod_{k=j+1}^{l-1} \Phi(k) \Gamma(j)
\right\}.
\end{eqnarray*}
Recall that $e^\gamma(l;\tau), e^\beta(l;\tau), E^\Gamma(j ), E^\Phi(k)$ are Sub-Gassuain random vectors. It is straightforward to show that
$E(G_1^2) = O( \log^2(nm) /n )$.
Define the event 
$$
\mathcal{G} = 1 \left\{ 
\max_j |u_j / \sqrt{n} |  < (1- q)/2,
\right\}
$$
where 
$$u = (   (e^\beta (2;\tau))^\top,
( E^\Phi(2 ; \tau) )^\top, E^\Gamma(2 ; \tau), \dots, (e^\gamma(m;\tau))^\top,  (e^\beta (m;\tau))^\top,
( E^\Phi(m ) )^\top, E^\Gamma(m)).$$
On the event $\mathcal{G}$, define $\bar q = (1+q)/2, $we can obtain that
\begin{eqnarray*}
|G_2|
\leq 
\sum_{l=2}^m \left( \frac{e^\beta(l;\tau)}{\sqrt{n}}\right)
\sum_{j=1}^{l-1} \bar q^{t-j+1}( M_\Gamma + (1-q)/2) .
\end{eqnarray*}
Since the sequence $\bar q^{t-j+1}( M_\Gamma + (1-q)/2) $ is convergent, we have
$E(G_2^2) =O(  \log^2(nm) /n )$.
Meanwhile, denote $\delta = \max \{  \max_k \| E^\Phi(k) /\sqrt{n} \|_\infty,
\max_k   \| E^\Gamma(k; \tau) /\sqrt{n} \|_\infty  \}$, then
\begin{eqnarray*}
&& \left| \sum_{j=1}^{l-1}\left\{ \prod_{k=j+1}^{l-1}\left(\Phi(k)+\frac{E^\Phi(k)}{\sqrt{n}}\right)\left(\Gamma(j)
+\frac{E^\Gamma(j )}{\sqrt{n}}\right) -  \prod_{k=j+1}^{l-1} \Phi(k) \Gamma(j)
\right\} \right| \\
&\leq&
M_\Gamma \sum_{j=1}^{l-1} \left| \prod_{k=j+1}^{l-1}(\Phi(k) + \delta ) - \prod_{k=j+1}^{l-1} \Phi(k)  \right| 
+ 
\sum_{j=1}^{l-1} \prod_{k=j+1}^{l-1}|(\Phi(k) + (1-q)/2 ) \delta  | \\
& \leq &
M_\Gamma \sum_{j=1}^{l-1} \left|   \prod_{k=1}^{l-1-j} \delta^k 
\left( \begin{array}{c}
l-1-j  \\
k
\end{array} 
\right) q^{l-1-j-k} 
\right| + \delta  \sum_{j=1}^{l-1} \prod_{k=j+1}^{l-1} \bar q \\
& \lesssim &
\sum_{j=1}^{l-1} | (\delta+q)^{l-1-j} -q^{l-1-j} | + \delta  \lesssim  \delta.
\end{eqnarray*}
By the Sub-Gaussian property of $E^\Phi(k)$ and $E^\Gamma(k; \tau)$, one can obtain that
$E(G_3^2) = O( \log^2(nm) /n ) $.

Hence,  we have 
\begin{eqnarray}
\label{eq:kappa_01_*}
E(  \kappa^*_{01,\tau}  )^2 
&\lesssim&
E( (T^*_{01,\tau})^2  \mathcal G ) + P(1 - \mathcal{G})
\lesssim
E(G_1^2) +E(G_2^2) + E(G_3^2) + P(1 - \mathcal{G})  \nonumber \\
&\lesssim &
O(  \log^2(nm) /n ) +  \exp(- n + 2 n^{1/2} \log m - (\log m)^2),
\end{eqnarray}
where the last inequality follows from the fact that
$ P(1 - \mathcal{G})  \lesssim  \exp(- n + 2 n^{1/2} \log m - (\log m)^2) $, 
according to the Borell TIS inequality and Lemma 2.2.10 in \cite{van1996weak}.
This completes the proof.
\end{proof}

\begin{proof}[\bf Proof of Theorem \ref{thm:bootstrap_consistency_QTE_st}]
The proof of Theorem \ref{thm:bootstrap_consistency_QTE_st} is very similar to that 
of Theorem \ref{thm:bootstrap_consistency_QTE}. We omit it here to save space.
\end{proof}

\begin{proof}[\bf Proof of Theorem \ref{thm:theta_t asymp}]
The proof involves four parts. In part (i), we present the proof of the asymptotic normality of 
the first-stage estimator $\widehat{\bm{\theta}}_\tau$.
In part (ii), we give the proof of the asymptotic normality of 
the second-stage estimator $\widetilde{\bm{\theta}}_\tau$.
In parts (iii) and (iv), we give the proofs of the bootstrap consistency 
of the testing procedures with respect to $\QDE_\tau$ and $\QIE_\tau$, respectively.

For part (i), 	
we  use the Bahadur representation for ordinary quantile regression to derive the asymptotic normality of $\widehat{\theta}_\tau(t) $.
Specifically, according to \cite{koenker1987estimation}, for a given quantile level $\tau$, we have
\begin{eqnarray}
\label{eq:Bahadur_t}
\sqrt{n} D_t (\widehat{\theta}_\tau(t) - \theta_\tau (t) ) 
=
{1 \over \sqrt{n}} \sum_{i=1}^n Z_{i,t} \psi_\tau( e_{i,\tau}( t) ) + o_p(1), ~~
\end{eqnarray}
where $\psi_\tau(e) = \tau - I(e < 0)$ and $D_t 
=n^{-1} \sum_{i=1}^n f_{e_i}(0, t; \tau) Z_{i,t}  Z_{i,t}^\top$.

We need to verify the covariance structure of $\sqrt{n} (\widehat{\theta}_\tau(t) - \theta_\tau (t) ) $ 
and $\sqrt{n} (\widehat{\theta}_\tau(s)- \theta_\tau (s) ) $ for any $s, t=1, \dots, m$.
According to the Bahadur representation in \eqref{eq:Bahadur_t}, the asymptotic covariance of $\sqrt{n} D_t (\widehat{\theta}_\tau(t) - \theta_\tau (t) )$ and $\sqrt{n} D_s (\widehat{\theta}_\tau(s)- \theta_\tau (s) )$ equals
\begin{eqnarray}\label{eqn:asympcov}
\Mean\Big[   Z_{i,t}  \psi_\tau( e_{i,\tau}( t) )  \psi_\tau( E_{i}( s) )   Z^\top_{i,s}   \Big].
\end{eqnarray}	
Specifically, we can show 
\begin{eqnarray*}
\Mean[ \psi_\tau( e_{i,\tau}( t) ) \psi_\tau( E_{i}( s) ) ] 	
=
\Mean[ (\tau - I(e_i(t; \tau)< 0) ) (\tau - I(e_i(s; \tau)< 0) ) 
=
F_e(0,0, t, s; \tau)- \tau^2, 
\end{eqnarray*}
so 
\eqref{eqn:asympcov} is equal to
\begin{eqnarray*}
f^{-1}_{e}(0, t,;\tau) E^{-1}(Z_{i,t} Z_{i,t}^\top) 
E(Z_{i,t} [ F_e(0,0,t, s;\tau) - \tau^2] Z_{i,s}^\top)  f^{-1}_{e}(0, s,;\tau) E^{-1}(Z_{i,s} Z_{i,s}^\top) + o_p(1).
\end{eqnarray*}	
Therefore, we can prove 
the asymptotic normality of $\widehat{\bm{\theta}}_\tau$ by using some standard arguments in the quantile regression literature \citep{2005quantile}.

For part (ii), we 
establish the limiting distribution of $\widetilde{\theta}_\tau(t)$ by using 
\begin{eqnarray*}
\widetilde{\theta }_\tau (t )&=& \sum_{k=1}^m \omega_{k,h}  \widehat{\theta }_\tau (k )
=
\sum_{k=1}^m \omega_{k,h}(t) [  \widehat{\theta }_\tau (k ) - {\theta }_\tau (k ) + {\theta }_\tau (k ) - {\theta }_\tau (t) + {\theta }_\tau (t) ] \\
&=&
{\theta }_\tau (t) + 	\sum_{k=1}^m \omega_{k,h}(t) [\widehat{\theta }_\tau (k ) - {\theta }_\tau (k ) ]
+\sum_{k=1}^m \omega_{k,h} (t)  [ {\theta }_\tau (k ) - {\theta }_\tau (t) ].
\end{eqnarray*}

We assume the existence of a function $\theta_{0 \tau}$ such that $ \theta_{\tau} (k) = \theta_{0 \tau} (k / m)$. This function $\theta_{0 \tau} (k / m)$ is used to derive the bias. The calculation can be as follows: 
\begin{eqnarray}
\label{eq:bias}
E[ \widetilde{\theta }_\tau (t) -  {\theta }_\tau (t) ] &=& E[ 	\sum_{k=1}^m \omega_{k,h} (t) [{\theta }_\tau (k ) - {\theta }_\tau (t) ] ] 
=
E[ 	\sum_{k=1}^m \omega_{k,h} (t) [{\theta }_{0 \tau } (k /m ) - {\theta }_{0 \tau} (t /m) ] ] 
\nonumber \\
&=&
E\Big[ \sum_{k=1}^m { K(( k  -t )/(m h)) \over \sum_{j=1}^m K((j -t)/(mh)) } \{ {\theta }_{0 \tau } (k /m ) - {\theta }_{0 \tau } (t /m )\} \Big] \nonumber \\
&=&
E\Big[ \sum_{k=1}^m { (mh)^{-1} K(( k  -t )/(m h)) \over \sum_{j=1}^m (mh)^{-1} K((j -t)/(m h)) } {\theta }_{0 \tau } (k /m ) \Big] - {\theta }_{0 \tau } (k /m ).
\end{eqnarray}


Since the time grids are prefixed and equally spaced, we have 
$$
\sum_{k=1}^m (mh)^{-1} K\{( k  -t )/(m h) \} [ {\theta }_{0 \tau } (k /m ) -  {\theta }_{0 \tau } (t /m )]
=
\sum_{k=1}^m (mh)^{-1 } K\{( k  -t )/h \} [  {\theta }_{0 \tau } (k /m ) -  {\theta }_{0 \tau } (k /m )] \int_{{k-1}}^{k} du.
$$
Noticing that for any $u \in [k-1, k]$, $k - u \leq k - (k-1) \leq 1$, direct calculations lead to 
\begin{eqnarray*}
&& \sum_{k=1}^m (mh)^{-1} K \Big( { k  -t  \over m h } \Big) [  {\theta }_{0 \tau } ({k \over m} ) -  {\theta }_{0 \tau } ({t \over m} )  ] -
\sum_{k=1}^m \int_{k-1}^{k}  (mh)^{-1 } K \Big( { k  -t  \over m h } \Big) [ {\theta }_{0 \tau } ({k \over m} ) -  {\theta }_{0 \tau } ({t \over m} ) ]   du \\
&=&
\sum_{k=1}^m \int_{{k-1}}^{k}  \Big\{
(mh)^{-1 } K \Big( { k  -t  \over m h } \Big) [  {\theta }_{0 \tau } ({k \over m} ) -  {\theta }_{0 \tau } ({t \over m} )] 
- 
(mh)^{-1 } K \Big( { u  -t  \over m h } \Big) [ {\theta }_{0 \tau } ({u \over m} ) -  {\theta }_{0 \tau } ({t \over m} ) ]
\Big\}  du  \\
&=&
\sum_{k=1}^m \int_{{k-1}}^{k}  \Big\{
(m h)^{-1} { \partial(   K \Big( { u  -t  \over m h } \Big) [  {\theta }_{0 \tau } ({u \over m} ) -  {\theta }_{0 \tau } ({t \over m} ) ] )  \over \partial u} ( k -u) (1+o(1))
\Big\}  du  \\
&\leq &
\sum_{k=1}^m \int_{ {k-1}}^{k}  \Big\{
(m h)^{-2} K^\prime \Big( { u  -t  \over m h } \Big) [ {\theta }_{0 \tau } ({u \over m} ) -  {\theta }_{0 \tau } ({t \over m} )]  
+ m^{-2}h^{-1} K \Big( { u  -t  \over m h } \Big)  {\theta }^\prime_{0 \tau } ({u \over m} ) 
\Big\} (1+o(1)) du  \\
&=&
\sum_{k=1}^m \int_{(k-1)/m}^{k/m}  \Big\{
(m h)^{-1} K^\prime (\tilde u) [ {\theta }_{0 \tau } ( \tilde u h +{ t \over m} ) -  {\theta }_{0 \tau } ({t \over m} )]  
+ m^{-1} K ( \tilde u )  {\theta }_{0 \tau }^\prime ( \tilde u h +{ t \over m} )
\Big\} (1+o(1))  d \tilde u  \\
&=&
O(m^{-1}) ,
\end{eqnarray*}
where the last equality follows from Taylor expansion and 
Assumptions \ref{assump:kernel} and   \ref{assump:coe}.

Furthermore,  by applying a change of variables and considering Assumptions \ref{assump:kernel} and \ref{assump:coe}, we obtain the following: 
\begin{eqnarray*}
\int  (m h)^{-1 } K \Big( { k  -t  \over m h } \Big) [ {\theta }_{0 \tau } ({k \over m} ) -  {\theta }_{0 \tau } ({t \over m} )]   du  
=
\int   K(\tilde u) [  {\theta }_{0 \tau } (\tilde u h +{t \over m} ) -  {\theta }_{0 \tau } ({t \over m} ) ]  du  
=
O( h^2).
\end{eqnarray*}
Hence, for the fixed time grids,  we have
\begin{eqnarray}
\label{eq:bias_fixed}
\sum_{k=1}^m (mh)^{-1} K \Big( { k  -t  \over m h } \Big) [  {\theta }_{0 \tau } ({k \over m} ) -  {\theta }_{0 \tau } ({t \over m} )  ]
= O( h^2 + m^{-1} ).
\end{eqnarray}

Similarly, one can show that
\begin{eqnarray}
\label{eq:fixed design approximation error}
&&\sum_{k=1}^m (mh)^{-1} K\{( k  -t )/(mh) \}   - \int (m h)^{-1} K\{ (u - t) / (mh) \}  du  \\
& =&
{1 \over m h } \sum_{k=1}^m\int_{(k-1)}^{ k  }  
[ K\{( k  -t )/(mh) \} - K\{ (u - t) / (mh) \} ] d u  \nonumber \\
&=&
{1 \over m h } \sum_{k=1}^m\int_{(k-1)}^{ k } 
(mh)^{-1} K^\prime \{( u  -t )/(mh) \} du (1 + o(1))
=
O( (m h)^{-1} ). \nonumber
\end{eqnarray}

It is easy to derive that $ \int (mh)^{-1} K((u - t) / mh)  du 
=\int  K(\tilde u - t/(mh))  du =  1 + O(h^2) $ by using standard kernel smoothing techniques \citep{tsybakov2008introduction}. It leads to 
\begin{eqnarray}
\label{eq:bias_fixed_kernel_approximation}
\sum_{k=1}^m (mh)^{-1} K(( t_k  -t )/h) = 1 + O(h^2 + (m h)^{-1} ).
\end{eqnarray}

Plugging \eqref{eq:bias_fixed} and \eqref{eq:bias_fixed_kernel_approximation} into \eqref{eq:bias}, we derive the bias
for the fixed design
\begin{eqnarray*}
E[ \widetilde{\theta }_\tau (t) -  {\theta }_\tau (t) ]  = O( h^2 + m^{-1} ).
\end{eqnarray*}

For the asymptotic co-variance of $\widetilde{\theta }_\tau (t )$ and $\widetilde{\theta }_\tau (s )$, direct calculations lead to
\begin{eqnarray*}
&&	E[ \sum_{k=1}^m  \sum_{l=1}^m \omega_{k,h}(t) \omega_{l,h} (s) \{ \widehat{\theta }_\tau (k ) - {\theta }_\tau (k ) \} 
\{ \widehat{\theta }_\tau (l) - {\theta }_\tau (l ) \}   ] \\
&=&
\sum_{k=1}^m  \sum_{l=1}^m \omega_{k,h}(t) \omega_{l,h} (s) \bm{V}_{\widehat{\theta}, \tau} (k, l) \\
&=&
\sum_{k=1}^m  \sum_{l=1}^m \omega_{k,h}(t) \omega_{l,h} (s) f^{-1}_{e}(0, k,;\tau) E^{-1}(Z_{i,k} Z_{i,k}^\top) F_e(0,0, k, l;\tau)
E(Z_{i,k} Z_{i,s}^\top)  f^{-1}_{e}(0, l,;\tau) E^{-1}(Z_{i,l} Z_{i,l}^\top) \\
&=&
\sum_{k=1}^m  \sum_{l=1}^m \omega_{k,h}(t) \omega_{l,h} (s) g(k,l),	
\end{eqnarray*}
where $ g( k, l) = f^{-1}_{e}(0, k,;\tau) E^{-1}(Z_{i,k} Z_{i,k}^\top) F_e(0,0, k, l;\tau)
E(Z_{i,k} Z_{i,l}^\top)  f^{-1}_{e}(0,  l,;\tau) E^{-1}(Z_{i,l} Z_{i,l}^\top)$.
There also exists some function $g_0$ such that $g_0( k/m, l/m) =  g( k, l) $.
Then we can have
\begin{eqnarray*}
\sum_{k=1}^m  \sum_{l=1}^m \omega_{k,h}(t) \omega_{l,h} (s) g(k,l)
&=&	
\sum_{k=1}^m  \omega_{k,h}(t) \omega_{k,h} (s) g_0(k/m,k/m) 
+
\sum_{k \neq l} \omega_{k,h}(t) \omega_{l,h} (s) g_0(k/m,l/m) \\
&=&
H_1 + H_2,
\end{eqnarray*}	
where $H_1$ and $H_2$ denote the above two terms, respectively.

Denote $\widehat f(t) =  {1 \over m h} \sum_{k=1}^m K( { k-t \over m h } )$ , 
one gets
\begin{eqnarray*}
H_1 &=& {1 \over m^2 h^2} \sum_{k=1}^m K \Big( { k-t \over  m h } \Big) K\Big( { k-s \over m h } \Big) g_0( k/m, k/m) /( \widehat f(t) \widehat f(s) )=\tilde H_1 /( \widehat f(t) \widehat f(s) ), \\
H_2 &=& {1 \over m^2 h^2} \sum_{k \neq l} \omega_{k,h}(t) \omega_{l,h} (s)  K\Big( { k-t \over m h } \Big) K\Big( { l-s \over m h } \Big) g_0(k/m,l/m) /( \widehat f(t) \widehat f(s) )
=\tilde H_2 /( \widehat f(t) \widehat f(s) ).
\end{eqnarray*}	
If $g_0(t, s)$ has bounded and continuous second derivatives, 
following similar arguments of the proofs for Lemma 4 in \cite{zhu2012multivariate},
it is easy to derive that
\begin{eqnarray*}
\tilde H_1 &=& 	{1 \over m h^2}  \int K\Big( { u-t \over m h } \Big) K\Big( { u-s \over m h } \Big) g_0(u/m, u/m)  du ( 1 + O_p( (mh)^{-1/2} )  ) \\
&&
\left\{
\begin{array}{l}
= m^{-1} h^{-1} \int K(\nu )^2 [ g_0(t/m, t/m) + g_0^\prime (t/m) h \nu(1+o_p(1)) ]  d \nu (1 + O_p( (mh)^{-1/2}  ) ) \\
=
m^{-1} h^{-1} \int K(\nu )^2 d \nu  g_0(t/m,t/m) (1 + O_p( h+ (mh)^{-1/2}  ) ),
~~ \text{if } t=s, \\
\leq 
m^{-1} h^{-1} \int K(u)^2 du\sqrt{ g_0(t/m,t/m)  g_0(s/m,s/m) } (1 + O_p(h + (mh)^{-1/2}  ) ), ~~ \text{if } t \neq s. 
\end{array}
\right.
\end{eqnarray*}	
Meanwhile, we can obtain 
\begin{eqnarray*}
\tilde H_2 &=&
{m-1 \over m h^2} \int \int K( { u_1-t \over m h } ) K( { u_2-s \over m h } ) g_0(u_1/m,u_2/m)  du_1 du_2 ( 1 + O_p( (mh)^{-1/2} )  ) \\
&=&
{m-1 \over m } \int \int K( \nu_1 ) K( \nu_2 ) g_0(\nu_1 h +t/m, \nu_2 h + s/m)  d\nu_1 d\nu_2 ( 1 + O_p( (mh)^{-1/2} )  ) \\
&=&
{m-1 \over m } g_0(t/m, s/m)  (1 + o_p(1))= {m-1 \over m } g(t, s)  (1 + o_p(1)).
\end{eqnarray*}	
It is easy to see that $\tilde H_2$ is the leading term of $H_1$ by noticing that $\tilde H_1 $ is of lower order of $\tilde H_2$
according to $m h \rightarrow \infty$.
Because $\widehat f(t) = 1 + O(h^2 + (m h)^{-1})$, one can obtain
$$
E[ \sum_{k=1}^m  \sum_{l=1}^m \omega_{k,h}(t) \omega_{l,h} (s) \{ \widehat{\theta }_\tau (k ) - {\theta }_\tau (k ) \} 
\{ \widehat{\theta }_\tau (l) - {\theta }_\tau (l ) \}   ] 
= {m-1 \over m } 
\Cov[ \widehat{\theta }_\tau (t ),  \widehat{\theta }(s, \tau ) ] ( 1+o_p(1)).
$$
The smoothing step reduces $1/m$ of the variance.


With the preparations above, it is easy to calculate that for any vector $a_{n, 2}$ with unit norm, 
$$
E[ \sqrt{n} {a}_{n,2}^\top (\widetilde{ {\bm{\theta}}}_\tau  - \bm{\theta}_\tau   ) ] = O( \sqrt{n} h^2 + \sqrt{n} m^{-1}), ~~
Var [ \sqrt{n} {a}_{n,2}^\top (\widetilde{ {\bm{\theta}}}_\tau  - {\bm{\theta}}_\tau   ) ] = 
{a}_{n,2}^\top\bm{V}_{\widetilde{\theta}, \tau} {a}_{n,2}.
$$
By checking the Lindeberg condition, we can establish the asymptotic normality of $ \sqrt{n} {a}_{n,2}^\top(\widetilde{ {\bm{\theta}}}_\tau  - {\bm{\theta}}_\tau)$.

It is not hard to find that $\widetilde{ {\theta}}_\tau$ is a linear transformation of $\widehat{ {\bm{\theta}}}_\tau$.
Recall the Bahadur representation of $\widehat{ {\theta}}_\tau$, one can deduce that 
$$
\sqrt{n} {a}_{n,2}^\top (\widetilde{ {\theta}}_\tau  - {\theta}_\tau   ) 
=
n^{-1/2} \sum_{i=1}^n \sum_{j=1}^m \sum_{k=1}^m \omega_{k, h} (t_j) a_{n,2, j}^\top D_k^{-1} Z_{i, k} \psi( E_{i} (t_k) ) + o_p(1),
$$
where $  {a}_{n,2} = ( a_{n,2, 1}, \dots, a_{n,2, m} )$ with each $ a_{n,2, j}$ corresponds to a vector with dimension $(d+2)$. 
Observing that 
$$
P\Big( \Big| 
\sum_{j=1}^m \sum_{k=1}^m \omega_{k, h} (t_j) a_{n,2, j}^\top D_k^{-1} Z_{i, k} \psi( E_{i} (t_k) )  
\Big| 
>
\epsilon \sqrt{  n {a}_{n,2}^\top\bm{V}_{\widetilde{\theta}, \tau} {a}_{n,2} }
\Big)
\leq  { 1 \over \epsilon^2 n} \rightarrow 0
$$
holds for any $\epsilon > 0$ according to the Chebyshev's inequality. Then the Lindeberg condition follows such that
\begin{eqnarray*}
( {a}_{n,2}^\top\bm{V}_{\widetilde{\theta}, \tau} {a}_{n,2} )^{-1} 
E | \sum_{j=1}^m \sum_{k=1}^m \omega_{k, h} (t_j) a_{n,2, j}^\top D_k^{-1} Z_{i, k} \psi( E_{i} (t_k) )  |^2 \\
\times 
1 (
|  \sum_{j=1}^m \sum_{k=1}^m \omega_{k, h} (t_j) a_{n,2, j}^\top D_k^{-1} Z_{i, k} \psi( E_{i} (t_k) )   | 
>
\epsilon \sqrt{  n {a}_{n,2}^\top\bm{V}_{\widetilde{\theta}, \tau} {a}_{n,2} }
)  \rightarrow 0
\end{eqnarray*}
for any $\epsilon > 0$ .


(iii)
As for the bootstrap consistency of the bootstrap procedures for testing QDE, it suffices to establish the Bahadur representation of the bootstrap estimators
$\widehat{\theta}_\tau^b$ parallel to \eqref{eq:Bahadur_t} such that 
\begin{eqnarray}
\label{eq:Bahadur_t_bootstrap}
\sqrt{n} D_t (\widehat{\theta}_\tau^b(t) - \widehat{\theta}_\tau (t) ) 
=
{1 \over \sqrt{n}} \sum_{i=1}^n Z_{i,t} \psi_\tau( \widehat{e}^b_{i, \tau}( t ) ) + o_{p^*}(1), ~~
D_t 
=
{1 \over n} \sum_{i=1}^n f_{e_i}(0, t; \tau) Z_{i,t}  Z_{i,t}^\top,
\end{eqnarray}
where $\widehat{e}^b_{i, \tau}( t )$ is the bootstrap error.  
We use $p^*$ to denote ``convergence in probability'' conditional on the data.
Note that $E^*\psi_\tau(   \widehat{e}^b_{i, \tau}( t )  ) =0 $.

Recall that $Y_{it}^b = Z_{i,t}^\top \widetilde{\theta} + \widehat{e}^b_{i, \tau}( t )$.
The bootstrap estimates $\widehat \theta^b $ is obtained by minimizing the 
following loss function with respect to the bootstrap sample 
$$
\ell^*_{nt}   ( \theta ) = \sum_{i=1}^n \rho_\tau ( Y_{it}^b - Z_{i,t}^\top \theta ) 
= \sum_{i=1}^n \rho_\tau ( Z_{i,t}^\top \widetilde{\theta}  - Z_{i,t}^\top \theta + \widehat{e}^b_{i, \tau}( t ) ).
$$
It is easy to see that  $ \sum_{i=1}^n  E^*[  \rho_\tau ( Z_{i,t}^\top (  \theta_0  -  \theta ) +  {e}^b_i( t; \tau ) )    - \rho_\tau (   {e}^b_i( t; \tau ) ) ]$ is globally minimized at 
$ \theta = \theta_0$. Similar to the steps in \cite{feng2011wild}, according to the Sub-Gaussian assumption of the covariates and the error process,
we have that  
\begin{eqnarray*}
|  \sum_{i=1}^n [ \rho_\tau ( Z_{i,t}^\top (  \widetilde \theta  -  \theta ) +  \widehat{e}^b_{i, \tau}( t )  )  - \rho_\tau ( Z_{i,t}^\top (   \widehat{e}^b_{i, \tau}( t )  ) 
- E^*[  \rho_\tau ( Z_{i,t}^\top (  \theta_0  -  \theta ) +  {e}^b_i( t; \tau ) )    - \rho_\tau (   {e}^b_i( t; \tau ) ) ] | = o_{p^*}(1).
\end{eqnarray*}
uniformly on any compact set of $\theta_0$.
Thus, $\widehat \theta^b   -  \widetilde \theta  \rightarrow 0$ in probability. 

Now consider 
$$
S_{it}   ( \theta )  = Z_{i,t} \psi_\tau(  Z_{i,t}^\top \widetilde{\theta}  - Z_{i,t}^\top \theta + \widehat{e}^b_{i, \tau}( t )  )
-
Z_{i,t}  \psi_\tau(   \widehat{e}^b_{i, \tau}( t )  ), \quad S_{nt}   ( \theta ) = n^{-1} \sum_{i=1}^n S_{it}   ( \theta ) . 
$$
Following equation (2) in the supplementary material of \cite{feng2011wild}, we can obtain that
$$
S_{nt}   ( \theta ) = E^* S_{it}   ( \theta ) + o_{p^*} (n^{-1/2}).
$$

Now we calculate the explicit form of $E^* S_{it}   ( \theta ) $. 
Notice that $ \widehat{e}^b_{i, \tau}( t )  = Y^b_{it} - Z_{i,t}^{b \top} \widetilde \theta = e^b_{i}(t; \tau) -  Z_{i,t}^{b \top} (\widetilde \theta -  \theta_0 )$, this implies
\begin{eqnarray*}
E^* S_{it}   ( \theta ) &=&Z_{i,t}   E^* \Big[ 
\psi_\tau(   e^b_{i}(t; \tau) - Z_{i,t}^\top ( \theta- \widetilde{\theta} ) - Z_{i,t}^{b \top} (\widetilde \theta -  \theta_0 )  )  -
\psi_\tau(   e^b_{i}(t; \tau) - Z_{i,t}^{b \top} (\widetilde \theta -  \theta_0 )  )  	\Big]  \\
& =&
Z_{i,t}  E^* \Big[   I( e^b_{i}(t; \tau)  < Z_{i,t}^\top ( \theta- \widetilde{\theta} ) + Z_{i,t}^{b \top} (\widetilde \theta -  \theta_0 )  )  
- I (   e^b_{i}(t; \tau)  < Z_{i,t}^{b \top} (\widetilde \theta -  \theta_0 )  )   \Big]   \\
&=&
Z_{i,t} Z_{i,t}^\top ( \theta- \widetilde{\theta} )  f_{e_i}(0, t; \tau)  +  
f^\prime_{e_i}(0, t; \tau)  [ O_p (  | Z_{i,t} |^{3} (\widetilde \theta -  \theta_0 ) ^2   )
+ O_{p^*}(  | Z_{i,t} |^{3} ( \theta- \widetilde{\theta} )  ^2   ) ],
\end{eqnarray*}
where the last equality follows from the Taylor's expansion.
Direct calculations lead to 
$$
n^{-1} \sum_{i=1}^n Z_{i,t}  \psi_\tau(   \widehat{e}^b_{i, \tau}( t )  ) =
D_t ( \theta- \widetilde{\theta} )   +o_{p^*}(1) .
$$

(iv) The proof of the bootstrap consistency for QIE follows directly from the proof of Theorem \ref{thm:bootstrap_consistency_QTE} and we omit here.
This completes the proof. 
\end{proof}

\begin{lemma}
\label{lem:uniform_Bahadur}
Suppose that Assumptions \ref{assump:F_e} and \ref{assump:Z} hold,
and
$\log^3(nm)/n^{1/4} \rightarrow 0$ as $n \rightarrow \infty$. Then for any $\varepsilon > 0$
we have
\begin{eqnarray}
\label{eq:uniform_Bahadur}
\sqrt{n} D_t (\widehat{\theta}_\tau(t) - \theta_\tau (t) ) 
=
{1 \over \sqrt{n}} \sum_{i=1}^n Z_{i,t} \psi_\tau( e_{i,\tau}( t) ) + O_p( n^{-1/4} \log^3(nm) + n^{-1/2} \log^4(nm)),
\end{eqnarray}
where the big-$O$ term is uniform in $\tau \in [\varepsilon, 1- \varepsilon]$ and $t \in \mathcal{T}$.
\end{lemma}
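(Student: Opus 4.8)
The plan is to prove \eqref{eq:uniform_Bahadur} by combining the convexity of the check-loss objective with a localized empirical-process analysis, carried out on the high-probability truncation event $\mathcal{E}=\{\max_{i,t}\|Z_{i,t}\|\le C\log(nm)\}$ already introduced in the proof of Theorem \ref{thm:bootstrap_consistency_QTE}; recall that $\prob(\mathcal{E})\to 1$ under the sub-Gaussianity of Assumption \ref{assump:Z}, so it suffices to argue on $\mathcal{E}$, where the design vectors are bounded by $\log(nm)$ and finite-sample Bernstein/Hoeffding bounds apply to bounded summands. First I would establish a uniform rate of consistency, namely $\max_{t}\sup_{\tau\in[\varepsilon,1-\varepsilon]}\|\widehat{\theta}_\tau(t)-\theta_\tau(t)\|=O_p(r_n)$ with $r_n\asymp\sqrt{\log(nm)/n}$. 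Since each $\widehat{\theta}_\tau(t)$ minimizes a convex criterion, the standard convexity argument reduces this to controlling the normalized score $n^{-1}\sum_i Z_{i,t}\psi_\tau(e_{i,\tau}(t))$ together with the local curvature; the smallest-eigenvalue condition in Assumption \ref{assump:Z} supplies the curvature, and a union bound over the $m$ cross-sections and over a fine grid in $\tau$ delivers the $\log(nm)$ factors.

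Next I would analyze the centered, localized empirical process
\begin{equation*}
W_n(u,t,\tau)=\frac{1}{\sqrt n}\sum_{i=1}^n Z_{i,t}\big[\psi_\tau(e_{i,\tau}(t)-Z_{i,t}^\top u)-\psi_\tau(e_{i,\tau}(t))\big],
\end{equation*}
decomposing it into its conditional mean and a mean-zero fluctuation. For the mean, Assumption \ref{assump:F_e} (strict positivity and bounded second derivatives of $f_e$) permits a Taylor expansion of the conditional distribution function, yielding $\Mean[W_n(u,t,\tau)\mid\{Z_{i,t}\}]=-\sqrt n\, D_t u + R_n$; the linear term reproduces $-\sqrt n\, D_t(\widehat\theta_\tau(t)-\theta_\tau(t))$ upon substituting $u=\widehat\theta_\tau(t)-\theta_\tau(t)$, while the quadratic remainder obeys $|R_n|\lesssim \sqrt n\,\max_i\|Z_{i,t}\|^3\,\|u\|^2\lesssim \sqrt n\,\log^3(nm)\,r_n^2=n^{-1/2}\log^4(nm)$ on $\mathcal{E}$, which is exactly the second term of the stated bound. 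For the fluctuation, the functions $u\mapsto Z_{i,t}[\psi_\tau(e_{i,\tau}(t)-Z_{i,t}^\top u)-\psi_\tau(e_{i,\tau}(t))]$ form a VC-subgraph class with $L_2$-increments of size $\lesssim\sqrt{\log^3(nm)\,r_n}$, so a chaining/maximal inequality bounds its modulus of continuity over $\{\|u\|\le r_n\}$ by $O_p(\sqrt{r_n}\,\log^{c}(nm))$; balancing $\sqrt{r_n}=(\log(nm)/n)^{1/4}$ against the accumulated logarithmic factors from truncation, chaining, and uniformity over $(t,\tau)$ gives the first remainder term $n^{-1/4}\log^3(nm)$.

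Finally I would assemble the pieces. Evaluating the subgradient (first-order) condition of the check-loss at the minimizer gives $W_n(\widehat u,t,\tau)+n^{-1/2}\sum_i Z_{i,t}\psi_\tau(e_{i,\tau}(t))=O(\max_i\|Z_{i,t}\|/\sqrt n)$, with the right-hand side accounting for the at-most-$(d+2)$ observations whose residual vanishes at the optimum. Substituting the mean--fluctuation decomposition of $W_n$ and solving for $\sqrt n\,D_t(\widehat\theta_\tau(t)-\theta_\tau(t))$ produces \eqref{eq:uniform_Bahadur} with the claimed remainder, after a final union bound over $t\in\mathcal{T}$ and a grid-refinement argument in $\tau$ (using that $\psi_\tau$ and $f_e$ are Lipschitz in $\tau$) to upgrade pointwise control to uniformity. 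The main obstacle is the fluctuation step: obtaining a maximal inequality that simultaneously handles the \emph{diverging} number $m$ of separate quantile regressions, the continuum of quantile levels, and the shrinking localization radius, while tracking the precise logarithmic powers so that the oscillation matches $n^{-1/4}\log^3(nm)$ rather than a cruder bound. The hypothesis $\log^3(nm)/n^{1/4}\to0$ is precisely what renders this remainder negligible.
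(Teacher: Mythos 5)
Your proposal is correct and follows essentially the same route as the paper's proof: a uniform consistency step localizing $\widehat{\theta}_\tau(t)-\theta_\tau(t)$ to a ball of radius $O(\sqrt{\log(nm)/n})$, a decomposition of the localized score increment into its conditional mean (Taylor-expanded via Assumption \ref{assump:F_e} to yield $-\sqrt{n}\,D_t u$ plus an $n^{-1/2}\log^4(nm)$ remainder on the truncation event) and a centered fluctuation controlled uniformly at rate $n^{-1/4}\log^3(nm)$, followed by assembly through the subgradient condition with a union bound over $t$ and a grid refinement in $\tau$. The only difference is technical rather than structural: you bound the fluctuation via a VC-subgraph/chaining maximal inequality, whereas the paper obtains the same tail bound by an explicit moment-generating-function computation combined with the covering argument of Koenker and Portnoy (1987, Lemmas A.1--A.2).
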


\begin{proof}[\bf Proof of Lemma \ref{lem:uniform_Bahadur}.]
Define $g(\delta_t, \tau) = \sum_{i=1}^n [ I( E_{i}(t ) \leq F_e^{-1}(t,\tau) + Z_{i,t}^\top \delta_t  ) - \tau ] Z_{i,t}$, 
$T(\delta_t, \tau) = g(\delta_t, \tau) - g(0, \tau)$, $\tilde T(\delta_t, \tau) = T(\delta_t, \tau) - E T(\delta_t, \tau)$.
The proof 
consists of three parts, as detailed below.  
\begin{itemize}
\item Part 1 shows that there exists some $K > 0$ such that with probability approaching $1$, 
$$
\sup_{t \in [0, 1], \tau \in [\varepsilon, 1- \varepsilon]} \| \widehat{\theta}_\tau(t) - \theta_\tau (t) \| \leq K( \log(nm)/n )^{1/2}.
$$
This part is very similar to that Lemma A.1 in \cite{koenker1987estimation}, and we omit the proof for brevity. 

\item
Part 2 shows that for any $j$, the $j$-th component of $\tilde T$, denoted by $\tilde T_j$, satisfies that for any $\lambda > 0 $ and  $ \delta_t \in \bigtriangleup = \{ \delta_t \in \mathbb{R}^{d+2} ~| ~\| \delta_t \| \leq K( \log(nm) /n )^{1/2}  \}$,
\begin{eqnarray}
\label{eq:tilde_T_j}
P( |\tilde T_j  |  \geq \lambda n^{1/4} \log^3(nm)  ) \leq 2 \exp( - \lambda \log^2(nm) ( 1+o(1))).
\end{eqnarray}

\item
Part 3 shows that
$$
\sup_{t \in [0, 1], \tau \in [\varepsilon, 1- \varepsilon]} \| \tilde T(\widehat{\theta}_\tau(t) - \theta_\tau (t), \tau)  \| = O_p( n^{1/4} \log^3(nm) ),
$$	
which follows from \eqref{eq:tilde_T_j} and the proof of Lemma A.2 in \cite{koenker1987estimation}. We omit the proof to save space. 
\end{itemize}

In the rest of the proof, we show that \eqref{eq:tilde_T_j} holds.
By the Markov inequality, we have for any $\epsilon > 0$ and any $\lambda_n >0$ that
\begin{eqnarray}
\label{eq:tilde_T_j_markov}
P( | \tilde T_{j} | \geq \lambda_n ) \leq 2 e^{- \epsilon \lambda_n} [ M_{jt}(\epsilon) + M_{jt}(-\epsilon) ],
\end{eqnarray}
where $M_{jt}(\epsilon) = \Pi_i M_{ijt} (\epsilon) $ where
\begin{eqnarray*}
M_{ijt} (\epsilon) &=& E \exp\{ t Z_{ijt} [ J_i( \delta_t, \tau ) - E J_i(\delta_t, \tau) ]  \}, \\
J_i( \delta_t, \tau ) &=& I( E_{i} (t) \leq F_e^{-1}(t,\tau) + Z_{ijt} \delta_t) - I(  E_{i} (t) \leq F_e^{-1}(t,\tau) ).
\end{eqnarray*}
Notice that $E J_i(\delta_t, \tau) = \text{sgn}(   Z_{ijt} \delta_t  ) p_{it} $ for
$p_{it} =  P( F_e^{-1}(t,\tau) \leq E_{i} (t) \leq F_e^{-1}(t,\tau) + Z_{ijt} \delta_t )$, we have
$$
M_{ijt} (\epsilon) = p_{it} \exp\{ \epsilon Z_{ijt} ( 1 - p_{it} ) \text{sgn}(  Z_{ijt} \delta_t )  \} + 
( 1 -  p_{it} ) \exp\{ -\epsilon Z_{ijt} p_{it}   \text{sgn}(  Z_{ijt} \delta_t )   \}.
$$
Under the sub-Gaussian assumption, we have that
$E(\max_i \max_t \| Z_{i,t} \| ) \lesssim \log(nm)$,
there exist constants $c, B >0$ such that
\begin{eqnarray*}
\max_t \log  M_{jt}
& \leq &
\sup_t  \sum_{i=1}^n c \| \delta_t \| \| Z_{i,t} \|^3 \epsilon^2 \exp ( B \epsilon \log n m) \\
& \leq &
\sup_t  \sum_{i=1}^n \| Z_{i,t} \|^3   c  K( \log n m /n )^{1/2} \epsilon^2  \exp ( B \epsilon \log n m) \\
& \lesssim &
n^{1/2} ( \log nm )^{7/2} \epsilon^2  \exp ( B \epsilon \log n m).
\end{eqnarray*}
plugging the above result into \eqref{eq:tilde_T_j_markov} and taking $\epsilon=  n^{-1/4} \log(nm)^{-1} $ 
and $\lambda_n = n^{1/4} \log(n m)^3  $
it is easy to see that
\begin{eqnarray*}
P( | \tilde T_{jt} | \geq \lambda n^{1/4} \log^3(n m) ) & \lesssim & 2 \exp\{ - \lambda \log^2(nm) + \log^{3/2}(nm)  \exp ( B n^{-1/4} )  \} \\
&=&
2 \exp\{ - \lambda \log^2(nm) (1 + o(1))  \}.
\end{eqnarray*}

Notice that for $ \delta_t \in \bigtriangleup $, 
\begin{eqnarray*}
E g(\delta_t, \tau) &=& \sum_{i=1}^n Z_{i,t} ( Z_{i,t}^\top \delta_t ) f_e( F^{-1}_e( t, \tau ), t; \tau )  + \sum_{i=1}^n Z_{i,t} ( Z_{i,t}^\top \delta_t ) f^\prime_e( F^{-1}_e( t, \tau ), t; \tau ) \\
&=&
n D_t \delta_t f_e( F^{-1}_e( t, \tau ), t; \tau ) + \sum_i \|Z_{i,t} \|^3  \delta_t^2 f^\prime_e( F^{-1}_e( t, \tau ), t ; \tau ) \\
&=&
n D_t \delta_t f_e( F^{-1}_e( t, \tau ), t; \tau ) + O( \log^4(nm) )
\end{eqnarray*}	
holds uniformly for $t$ by using the fact that $E(\max_i \max_t \| Z_{i,t} \| ) = \log(nm)$ and $f^\prime_e( F^{-1}_e( t, \tau ), t; \tau )$ is uniformly bounded.
Let $\delta_t = \widehat \theta_\tau(t) - \theta_\tau(t)$, then
\begin{eqnarray*}
\tilde T(\delta_t, \tau) = T(\delta_t, \tau) - E T(\delta_t, \tau) = O_p( n^{1/4} \log^3(nm) ).
\end{eqnarray*}	
Using $\|g(\delta_t, \tau)\| = \log(nm)$ by the sub-Gaussian property of $Z_{i,t}$, direct calculations lead to 
\begin{eqnarray*}
n D_t \delta_t f_e( F^{-1}_e( t, \tau ), t; \tau ) = 
\sum_{i=1}^n [ \tau  - I( E_{i}(t ) \leq F_e^{-1}(t,\tau) + Z_{i,t}^\top \delta_t  ) ] Z_{i,t} 
+O_p(\log^4(nm) +  n^{1/4} \log^3(nm)   ).
\end{eqnarray*}	
This completes the proof.
\end{proof}

\end{document}